\newcommand{\Nat}{\mathbb{N}}
\newcommand{\lY}{$\lambda Y$}
\newcommand{\BT}{\mathit{BT}}
\newcommand{\br}{\mathsf{br}}
\newcommand{\arr}{{\to}}
\newcommand{\bred}{\to_\beta}
\newcommand{\restr}{{\restriction}}
\newcommand{\ord}{\mathit{ord}}
\newcommand{\Split}{\mathit{Split}}
\newcommand{\Comp}{\mathit{Comp}}
\newcommand{\palka}{\mid}
\newcommand{\palkaC}{;\allowbreak}
\newcommand{\BrR}{\TirName{(Br)}\xspace}
\newcommand{\AppR}{\TirName{(\!@\!)}\xspace}
\newcommand{\LamR}{\TirName{($\lambda$)}\xspace}
\newcommand{\ConR}{\TirName{(Con)}\xspace}
\newcommand{\VarR}{\TirName{(Var)}\xspace}
\newcommand{\Gammae}{\varepsilon}
\newcommand{\Dd}{\mathcal{D}}
\newcommand{\Gg}{\mathcal{G}}
\newcommand{\Ll}{\mathcal{L}}
\newcommand{\Pp}{\mathcal{P}}
\newcommand{\Tt}{\mathcal{T}}
\newcommand{\Ttrip}{\mathcal{F}}
\newtheorem{theorem}{Theorem}
\newtheorem{lemma}[theorem]{Lemma}
\newtheorem{corollary}[theorem]{Corollary}
\theoremstyle{definition}
\newtheorem{example}{Example}
\title{Intersection Types and Counting\thanks{Work supported by the National Science Center (decision DEC-2012/07/D/ST6/02443).}}
\author{Paweł Parys
	\institute{University of Warsaw, Poland}
	\email{parys@mimuw.edu.pl}
}
\begin{document}
\maketitle

\begin{abstract}
	We present a new approach to the following meta-problem: given a quantitative property of trees,
	design a type system such that the desired property for the tree generated by an infinitary ground $\lambda$-term corresponds to some property of a derivation of a type for this $\lambda$-term, in this type system.

	Our approach is presented in the particular case of the language finiteness problem for nondeterministic higher-order recursion schemes (HORSes):
	given a nondeterministic HORS, decide whether the set of all finite trees generated by this HORS is finite.
	We give a type system such that the HORS can generate a tree of an arbitrarily large finite size if and only if in the type system we can obtain derivations that are arbitrarily large, in an appropriate sense;
	the latter condition can be easily decided.
\end{abstract}

\section{Introduction}
	In this paper we consider \lY-calculus, which is an extension of the simply typed $\lambda$-calculus by a fixed-point operator $Y$. 
	A term $P$ of \lY-calculus that is of sort\footnote{%
		We use the word ``sort'' instead of the usual ``type'' to avoid confusion with intersection types introduced in this paper.}
	$o$ can be used to generate an infinite tree $\BT(P)$, called the B\"ohm tree of $P$.
	Trees generated by terms of \lY-calculus can be used to faithfully represent the control flow of programs in languages with higher-order functions.
	Traditionally, Higher Order Recursive Schemes (HORSes) are used for this purpose~\cite{Damm82,KNU-hopda,Ong-hoschemes,kobayashiOng2009type}; this formalism is equivalent to \lY-calculus,
	and the translation between them is rather straightforward~\cite{MSC:9613738}.
	Collapsible Pushdown Systems \cite{collapsible} and Ordered Tree-Pushdown Systems \cite{DBLP:conf/fsttcs/ClementePSW15} are other equivalent formalisms.

	Intersection type systems were intensively used in the context of HORSes, for several purposes like
	model-checking \cite{kobayashi2009types-popl,kobayashiOng2009type,DBLP:conf/csl/BroadbentK13,DBLP:conf/popl/RamsayNO14},
	pumping \cite{koba-pumping},
	transformations of HORSes \cite{context-sensitive-2,downward-closure}, etc.
	Interestingly, constructions very similar to intersection types were used also on the side of collapsible pushdown systems; they were alternating stack automata \cite{saturation}, and types of stacks \cite{ho-new,Kar-Par-pumping}.

	In this paper we show how intersection types can be used for deciding quantitative properties of trees generated by \lY-terms.
	We concentrate on the language finiteness problem for nondeterministic HORSes:
	given a nondeterministic HORS, decide whether the set of all finite trees generated by this HORS is finite.
	
	This problem can be restated in the world of \lY-terms (or standard, deterministic HORSes), generating a single infinite tree.
	Here, instead of resolving nondeterministic choices during the generation process, we leave them in the resulting tree.
	Those nondeterministic choices are denoted by a distinguished $\br$ (``branch'') symbol, below which we put options that could be chosen.
	Then to obtain a finite tree generated by the original HORS we just need to recursively choose in every $\br$-labeled node which of the two subtrees we want to consider.
	Thus, in this setting, the language finiteness problem asks whether the set of all finite trees obtained this way is finite.
	
	The difficulty of this problem lies in the fact that sometimes the same finite tree may be found in infinitely many different places of $\BT(P)$ (i.e., generated by a nondeterministic HORS in many ways);
	thus the actual property to decide is whether there is a common bound on the size of each of these trees.
	This makes the problem inaccessible for standard methods used for analyzing HORSes, as they usually concern only regular properties of the B\"ohm tree, while boundedness is a problem of different kind.
	The same difficulty was observed in \cite{koba-pumping}, where they prove a pumping lemma for deterministic HORSes, while admitting (Remark 2.2) that their method is too weak to reason about nondeterministic HORSes.

	In order to solve the language finiteness problem, we present an appropriate intersection type system, where derivations are annotated by flags and markers of multiple kinds.
	The key property of this type system is that the number of flags in a type derivation for a \lY-term $P$ 
	approximates the size of some finite tree obtained by resolving nondeterministic choices in the infinite tree $\BT(P)$.
	In consequence, there are type derivations using arbitrarily many flags if, and only if, the answer to the language finiteness problem is ``no''.

	The language finiteness problem was first attacked in \cite{achim-pumping} (for safe HORSes only), but their algorithm turned out to be incorrect \cite{achim-erratum}.
	To our knowledge, the only known solution of this problem follows from a recent decidability result for the diagonal problem \cite{diagonal-safe, downward-closure}.
	This problem asks, given a nondeterministic HORS and a set of letters $\Sigma$, whether for every $n\in\Nat$ the HORS generates a finite tree in which every letter from $\Sigma$ appears at least $n$ times.
	Clearly, a nondeterministic HORS generates arbitrarily large trees exactly when for some letter $a$ it generates trees having arbitrarily many $a$ letters, i.e., when the answer to the diagonal problem for $\Sigma=\{a\}$ is ``yes''.

	Our type system is, to some extent, motivated by the algorithm of \cite{downward-closure} solving the diagonal problem.
	This algorithm works by repeating two kinds of transformations of HORSes.
	The first of them turns the HORS into a HORS generating trees having only a fixed number of branches, one per each letter from $\Sigma$ (i.e., one branch in our case of $|\Sigma|=1$).
	The branches are chosen nondeterministically out of some tree generated by the original HORS; for every $a\in\Sigma$ there is a choice witnessing that $a$ appeared many times in the original tree.	
	Then such a HORS of the special form is turned into a HORS that is of order lower by one,
	and generates trees having the same nodes as trees generated by the original HORS, but arranged differently (in particular, the new trees may have again arbitrarily many branches).
	After finitely many repetitions of this procedure, a HORS of order $0$ is obtained, and the diagonal problem becomes easily decidable.
	In some sense we want to do the same, but instead of applying all these transformations one by one, we simulate all of them simultaneously in a single type derivation.
	In this derivation, for each order $n$, we allow to place arbitrarily one marker ``of order $n$''; this corresponds to the nondeterministic choice of one branch in the $n$-th step of the previous algorithm.
	We also place some flags ``of order $n$'', in places that correspond to nodes remaining after the $n$-th step of the previous algorithm.
	
	The idea of using intersection types for counting is not completely new.
	Paper \cite{jfp-numerals} presents a type system that, essentially, allows to estimate the size of the $\beta$-normal form of a $\lambda$-term just by looking at (the number of some flags in) a derivation of a type for this term.
	A similar idea, but for higher-order pushdown automata, is present in \cite{ho-new}, where we can estimate the number of $\sharp$ symbols appearing on a particular, deterministically chosen branch of the generated tree.
	This previous approach also uses intersection types, where the derivations are marked with just one kind of flags, denoting ``productive'' places of a $\lambda$-term
	(oppositely to our approach, where we have different flags for different orders, and we also have markers).
	The trouble with the ``one-flag'' approach is that it works well only in a completely deterministic setting, where looking independently at each node of the B\"ohm tree we know how it contributes to the result;
	the method stops working (or at least we do not know how to prove that it works) in our situation, where we first nondeterministically perform some guesses in the B\"ohm tree, and only after that we want to count something that depends on the chosen values.

	\paragraph{Acknowledgements.} 
	I would like to thank Szymon Toruńczyk for stimulating discussions, and anonymous reviewers for useful comments.

\section{Preliminaries}

	\paragraph{Trees.}
	Let $\Sigma$ be a \emph{ranked alphabet}, i.e., a set of symbols together with a rank function assigning a nonnegative integer to each of the symbols.
	We assume that $\Sigma$ contains a distinguished symbol $\br$ of rank $2$, used to denote nondeterministic choices.
	A \emph{$\Sigma$-labeled} tree is a tree that is rooted (there is a distinguished root node),
	node-labeled (every node has a label from $\Sigma$), 
	ranked (a node with label of rank $n$ has exactly $n$ children), 
	and ordered (children of a node of rank $n$ are numbered from $1$ to $n$).

	When $t$ is a $\Sigma$-labeled tree $t$, by $\Ll(t)$ we denote the set of all finite trees that can be obtaining by choosing in every $\br$-labeled node of $t$ which of the two subtrees we want to consider.
        More formally, we consider the following relation $\to_\br$: we have $t\to_\br u$ if $u$ can be obtained from $t$ by choosing in $t$ a $\br$-labeled node $x$ and its child $y$, 
        and replacing the subtree starting in $x$ by the subtree starting in $y$ (which removes $x$ and the other subtree of $x$).
        Let $\to_\br^*$ be the reflexive transitive closure of $\to_\br$.
        Then $\Ll(t)$ contains all trees $u$ that do not use the $\br$ label, are finite, and such that $t\to_\br^*u$.

	\paragraph{Infinitary $\lambda$-calculus.}
	The set of \emph{sorts} (a.k.a.~simple types), constructed from a unique basic sort $o$ using a binary operation $\arr$, is defined as usual.
	The order of a sort is defined by: $\ord(o)=0$, and $\ord(\alpha\arr\beta)=\max(1+\ord(\alpha),\ord(\beta))$.
	
	We consider infinitary, sorted $\lambda$-calculus. 
	\emph{Infinitary $\lambda$-terms} (or just \emph{$\lambda$-terms}) are defined by coinduction, according to the following rules:
	\begin{itemize}
		\item	if $a\in\Sigma$ is a symbol of rank $r$, and $P_1^o,\dots,P_r^o$ are $\lambda$-terms, then $(a\,P_1^o\,\dots\,P_r^o)^o$ is a $\lambda$-term,
		\item	for every sort $\alpha$ there are infinitely many variables $x^\alpha,y^\alpha,z^\alpha,\dots$; each of them is a $\lambda$-term,
		\item	if $P^{\alpha\arr\beta}$ and $Q^\alpha$ are $\lambda$-terms, then $(P^{\alpha\arr\beta}\,Q^\alpha)^\beta$ is a $\lambda$-term, and 
		\item	if $P^\beta$ is a $\lambda$-term and $x^\alpha$ is a variable, then $(\lambda x^\alpha.P^\beta)^{\alpha\arr\beta}$ is a $\lambda$-term.
	\end{itemize}
	We naturally identify $\lambda$-terms differing only in names of bound variables.
	We often omit the sort annotations of $\lambda$-terms, but we keep in mind that every $\lambda$-term (and every variable) has a particular sort.
	A $\lambda$-term $P$ is \emph{closed} if it has no free variables.
	Notice that, for technical convenience, a symbol of positive rank is not a $\lambda$-term itself, but always comes with arguments.
	This is not a restriction, since e.g.~instead of a unary symbol $a$ one may use the term $\lambda x.a\,x$.

	The order of a $\lambda$-term is just the order of its sort.
	The \emph{complexity} of a $\lambda$-term $P$ is the smallest number $m$ such that the order of every subterm of $P$ is at most $m$.
	We restrict ourselves to $\lambda$-terms that have finite complexity.
	
	A $\beta$-reduction is defined as usual.
	We say that a $\beta$-reduction $P\bred Q$ \emph{is of order $n$} if it concerns a redex $(\lambda x.R)\,S$ such that $\ord(\lambda x.R)=n$.
	In this situation the order of $x$ is at most $n-1$, but may be smaller (when other arguments of $R$ are of order $n-1$).

	\paragraph{B\"ohm Trees.}

	We consider B\"ohm trees only for closed $\lambda$-terms of sort $o$.
	For such a term $P$, its \emph{B\"ohm tree} $\BT(P)$ is constructed by coinduction, as follows:
	if there is a sequence of $\beta$-reductions from $P$ to a $\lambda$-term of the form $a\,P_1\,\ldots\,P_r$ (where $a$ is a symbol),
	then the root of the tree $t$ has label $a$ and $r$ children, and the subtree starting in the $i$-th child is $\BT(P_i)$.
	If there is no sequence of $\beta$-reductions from $P$ to a $\lambda$-term of the above form, then $\BT(P)$ is the full binary tree with all nodes labeled by $\br$.\footnote{%
		Usually one uses a special label $\bot$ of rank $0$ for this purpose, but from the perspective of our problem both definitions are equivalent.}
	By $\Ll(P)$ we denote $\Ll(\BT(P))$.

	\paragraph{\lY-calculus.}

	The syntax of \lY-calculus is the same as that of finite $\lambda$-calculus, extended by symbols $Y^{(\alpha\arr\alpha)\arr\alpha}$, for each sort $\alpha$.
	A term of \lY-calculus is seen as a term of infinitary $\lambda$-calculus 
	if we replace each symbol $Y^{(\alpha\arr\alpha)\arr\alpha}$ by the unique infinite $\lambda$-term $Z$ such that $Z$ is syntactically the same as $\lambda x^{\alpha\arr\alpha}.x\,(Z\,x)$.
	In this way, we view \lY-calculus as a fragment of infinitary $\lambda$-calculus.
 
	It is standard to convert a nondeterministic HORS $\Gg$ into a closed \lY-term $P^o$ such that $\Ll(P)$ is exactly the set of all finite trees generated by $\Gg$.
	The following theorem, which is our main result, states that the \emph{language finiteness problem} is decidable.
	
	\begin{theorem}\label{thm:main}
		Given a closed \lY-term $P$ of sort $o$, one can decide whether $\Ll(P)$ is finite.
	\end{theorem}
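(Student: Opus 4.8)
The plan is to reduce the problem to a boundedness question for derivations in a purpose-built intersection type system, and then to solve that boundedness question by a pumping argument over a finite structure.

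First I would fix $m$, the complexity of $P$, and set up an intersection type system whose types refine the sorts occurring in $P$; since $P$ is a finite \lY-term of finite complexity, the set of relevant types is finite. Derivations are \emph{finite} trees of rule applications: the rule handling (the infinitary unfolding of) $Y$ behaves like a fixpoint rule, so a finite derivation only inspects a finite portion of the term, corresponding to a finite portion of $\BT(P)$. The annotations are flags and markers, one kind of each for every order $n\in\{0,\dots,m\}$. Following the order-by-order transformation underlying the algorithm of \cite{downward-closure}: a marker of order $n$ selects --- nondeterministically, but at most once per order --- a single branch of the ``order-$n$ tree'', and flags of order $n$ decorate the nodes that survive the $n$-th transformation; the placement rules for the order-$n$ annotations depend on the order-$(n{+}1)$ ones, so that the whole sequence of order-lowering transformations is simulated simultaneously in one derivation. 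The target is a distinguished type for $P^o$, and a complete derivation of it also encodes a choice of subtree at every $\br$-node, i.e.\ a tree $u\in\Ll(P)$, together with all the bookkeeping.

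Second, I would establish a two-sided correspondence between the size of trees in $\Ll(P)$ and the number of flags in derivations. For soundness, from a derivation $D$ of the target type carrying $N$ flags I would extract a tree $u\in\Ll(P)$ with $|u|\ge f(N)$ for a fixed increasing $f$: the $\br$-choices are read off $D$, and an induction on $n$ from $m$ down to $0$ shows that the order-$n$ flags of $D$ are witnessed by that many nodes of the order-$n$ tree, ending at order $0$ with nodes of $u$ itself. For completeness, given $u\in\Ll(P)$ I would replay the generation of $u$ from $P$ and build along it a derivation, inserting one marker of each order along a suitably chosen branch and the flags that are then forced, and check that at least $g(|u|)$ flags appear, for a fixed increasing $g$. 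Hence $\Ll(P)$ is infinite if and only if the type system admits derivations of the target type with arbitrarily many flags.

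Third, I would decide this last combinatorial condition. Since $P$ and the set of types are finite, a ``derivation context'' (a derivation with a distinguished hole that must be filled by a sub-derivation of a given judgment) is determined, up to the purposes of composition, by its root judgment, its hole judgment, and the number of flags it contains --- the first two ranging over finite sets, the last being additive under plugging. A routine pumping argument then shows that derivations with arbitrarily many flags exist iff there is a context that is reachable from the axiom seeding $P$, from which the target type can still be completed, whose root and hole judgments coincide, and which contains at least one flag; this is a reachability property of a finite graph, and is decidable. Its negation decides finiteness of $\Ll(P)$, proving Theorem~\ref{thm:main}. I expect the crux to be the two-sided correspondence of the second step: designing the placement rules for flags and markers so that a \emph{single} marker per order genuinely suffices and the flag count tracks tree size up to a computable factor \emph{across all orders at once}. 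Controlling the fixpoint rule --- so that unfolding it neither spuriously creates nor destroys flags --- and justifying that the order-lowering transformations may be simulated simultaneously rather than one after another are where the delicate combinatorics reside.
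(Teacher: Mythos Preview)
Your plan is essentially the paper's plan: an intersection type system with per-order flags and markers, a two-sided correspondence between (some) flag count and the size of trees in $\Ll(P)$, and a pumping argument over a finite quotient for decidability. Two points deserve flagging.

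First, your phrase ``carrying $N$ flags'' is the one place where the proposal can silently go wrong. The paper counts \emph{only} the order-$m$ flags (packaged as the flag counter $c$); Example~5 ($P_4$) shows that the total number of flags of all orders can be unbounded even when $\Ll(P)$ is finite, so soundness in your second step fails if $N$ is the total flag count. Your later sentence about ``order-$n$ flags \dots witnessed by nodes of the order-$n$ tree'' suggests you have the right picture, but the invariant must be stated and propagated precisely: the quantity that is both additive under context composition (needed for your pumping step) and controlled by tree size is the top-order flag count.

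Second, the mechanics of the two-sided correspondence differ from what you sketch. You propose a direct semantic induction on the order $n$; the paper instead proves both directions by moving along $\beta$-reductions. Completeness goes: find $t\in\Ll(P)$ in a $\beta$-reduct $Q_0$ of complexity~$0$, derive $\hat\rho_0$ with $c=|t|$, then repeatedly (i) lift the order by one placing the new marker along a heaviest branch, losing at most a $\log_2$ (Lemma~\ref{lem:increase-m}), and (ii) pull the derivation back through order-$k$ $\beta$-expansions with the counter preserved (Lemma~\ref{lem:c-step}). Soundness goes the other way: eliminate all order-$m$ applications by $\beta$-reduction without changing $c$ (Lemma~\ref{lem:s-step}, crucially using Lemma~\ref{lem:zero-when-no-marker} to control duplication), then drop the order by one with $c'\geq c$ (Lemma~\ref{lem:s-zero}), and iterate down to order~$0$. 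This $\beta$-based route is what makes the ``single marker per order'' discipline verifiable and what replaces the fixpoint rule you envisage: the paper works with the infinitary unfolding and relies on finiteness of the set of subterms of a $\lambda Y$-term for the effectiveness step.
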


\section{Intersection Type System}

	In this section we introduce a type system that allows to determine the desired property: whether in $\Ll(P)$ there is an arbitrarily large tree.

	\paragraph{Intuitions.}\label{para:intuitions}

	The main novelty of our type system is in using flags and markers, which may label nodes of derivation trees.
	To every flag and marker we assign a number, called an order.
	While deriving a type for a $\lambda$-term of complexity $m$, we may place in every derivation tree at most one marker of each order $n\in\{0,\dots,m-1\}$, and arbitrarily many flags of each order $n\in\{0,\dots,m\}$.
	
	Consider first a $\lambda$-term $M_0$ of complexity $0$.
	Such a term actually equals its B\"ohm tree.
	Our aim is to describe some finite tree $t$ in $\Ll(M_0)$, i.e., obtained from $M_0$ by resolving nondeterministic choices in some way.
	We thus just put flags of order $0$ in all those (appearances of) symbols in $M_0$ that contribute to this tree $t$;
	the type system ensures that indeed all symbols of some finite tree in $\Ll(M_0)$ are labeled by a flag.
	Then clearly we have the desired property that there is a derivation with arbitrarily many flags if, and only if, there are arbitrarily large trees in $\Ll(M_0)$.
	
	Next, consider a $\lambda$-term $M_1$ that is of complexity $1$, and reduces to $M_0$.
	Of course every finite tree from $\Ll(M_0)$ is composed of symbols appearing already in $M_1$;
	we can thus already in $M_1$ label (by order-$0$ flags) all symbols that contribute to some tree $t\in\Ll(M_0)$ (and an intersection type system can easily check correctness of such labeling).
	There is, however, one problem: a single appearance of a symbol in $M_1$ may result in many appearances in $M_0$ (since a function may use its argument many times).
	Due to this, the number of order-$0$ flags in $M_1$ does not correspond to the size of $t$.
	We rescue ourselves in the following way.
	In $t$ we choose one leaf, we label it by an order-$0$ marker, and on the path leading from the root to this marker we place order-$1$ flags.
	On the one hand, $\Ll(M_0)$ contains arbitrarily large trees if, and only if, it contains trees with arbitrarily long paths, i.e., trees with arbitrarily many order-$1$ flags.
	On the other hand, we can perform the whole labeling (and the type system can check its correctness) already in $M_1$, and the number of order-$1$ flags in $M_1$ will be precisely the same as it would be in $M_0$.
	Indeed, in $M_1$ we have only order-$1$ functions, i.e., functions that take trees and use them as subtrees of larger trees;
	although a tree coming as an argument may be duplicated, the order-$0$ marker can be placed in at most one copy.
	This means that, while reducing $M_1$ to $M_0$, every symbol of $M_1$ can result in at most one symbol of $M_0$ lying on the selected path to the order-$0$ marker
	(beside of arbitrarily many symbols outside of this path).
	
	This procedure can be repeated for $M_2$ of complexity $2$ that reduces to $M_1$ via $\beta$-reductions of order $2$ (and so on for higher orders).
	We now place a marker of order $1$ in some leaf of $M_1$;
	afterwards, we place an order-$2$ flag in every node that is on the path to the marked leaf, and that has a child outside of this path whose some descendant is labeled by an order-$1$ flag.
	In effect, for some choice of a leaf to be marked, the number of order-$2$ flags approximates the number of order-$1$ flags, up to logarithm.
	Moreover, the whole labeling can be done in $M_2$ instead of in $M_1$, without changing the number of order-$2$ flags.
	
	In this intuitive description we have talked about labeling ``nodes of a $\lambda$-term'', but formally we label nodes of a derivation tree that derives a type for the term, in our type system.
	Every such node contains a type judgment for some subterm of the term.

	\paragraph{Type Judgments.}
	
	For every sort $\alpha$ we define the set $\Tt^\alpha$ of \emph{types} of sort $\alpha$,
	and the set $\Ttrip^\alpha$ of \emph{full types} of sort $\alpha$.
	This is done as follows, where $\Pp$ denotes the powerset:
	\begin{align*}
		&\Tt^{\alpha\arr\beta}=\Pp(\Ttrip_{\ord(\alpha\arr\beta)}^\alpha)\times\Tt^\beta\,,\qquad
		\Tt^o=o\,,\\
		&\Ttrip_k^\alpha=\{(k,F,M,\tau)\mid F,M\subseteq\{0,\dots,k-1\},\,F\cap M=\emptyset,\,\tau\in\Tt^\alpha\}\,,\qquad\Ttrip^\alpha=\bigcup_{k\in\Nat}\Ttrip_k^\alpha\,.
	\end{align*}
	Notice that the sets $\Tt^\alpha$ and $\Ttrip_k^\alpha$ are finite (unlike $\Ttrip^\alpha$).
	A type $(T,\tau)\in\Tt^{\alpha\arr\beta}$ is denoted as $T\arr\tau$.
	A full type $\hat\tau=(k,F,M,\tau)\in\Ttrip_k^\alpha$ consists of its order $k$, a set $F$ of flag orders, a set $M$ of marker orders, and a type $\tau$;
	we write $\ord(\hat\tau)=k$.
	In order to distinguish types from full types, the latter are denoted by letters with a hat, like $\hat\tau$.
	
	A \emph{type judgment} is of the form $\Gamma\vdash P:\hat\tau\triangleright c$, where $\Gamma$, called a \emph{type environment}, 
	is a function that maps every variable $x^\alpha$ to a subset of $\Ttrip^\alpha$,
	$P$ is a $\lambda$-term, $\hat\tau$ is a full type of the same sort as $P$ (i.e., $\hat\tau\in\Ttrip^\beta$ when $P$ is of sort $\beta$), and $c\in\Nat$.
	
	As usual for intersection types, the intuitive meaning of a type $T\arr\tau$ is that a $\lambda$-term having this type can return a $\lambda$-term having type $\tau$, while taking an argument for which we can derive all full types from $T$.
	Moreover, in $\Tt^o$ there is just one type $o$, which can be assigned to every $\lambda$-term of sort $o$.
	Suppose that we have derived a type judgment $\Gamma\vdash P:\hat\tau\triangleright c$ with $\hat\tau=(m,F,M,\tau)$.
	Then
	\begin{itemize}
	\item	$\tau$ is the type derived for $P$;
	\item	$\Gamma$ contains full types that could be used for free variables of $P$ in the derivation;
	\item	$m$ bounds the order of flags and markers that could be used in the derivation: flags could be of order at most $m$, and markers of order at most $m-1$;
	\item	$M\subseteq\{0,\dots,m-1\}$ contains the orders of markers used in the derivation, together with those provided by free variables 
		(i.e., we imagine that some derivations, specified by the type environment, are already substituted in our derivation for free variables);
		we, however, do not include markers provided by arguments of the term (i.e., coming from the sets $T_i$ when $\tau=T_1\arr\dots\arr T_k\arr o$);
	\item	$F$ contains those numbers $n\in\{0,\dots,m-1\}$ (excluding $n=m$) for which a flag of order $n$ is placed in the derivation itself, or provided by a free variable, or provided by an argument;
		for technical convenience we, however, remove $n$ from $F$ whenever $n\in M$ 
		(when $n\in M$, the information about order-$n$ flags results in placing an order-$(n+1)$ flag, and need not to be further propagated);
	\item	$c$, called a \emph{flag counter}, counts the number of order-$m$ flags present in the derivation.
	\end{itemize}

	\paragraph{Type System.}
	
	Before giving rules of the type system, we need a few definitions.
	We use the symbol $\uplus$ to denote disjoint union.
	When $A\subseteq\Nat$ and $n\in\Nat$, we write $A\restr_{<n}$ for $\{k\in A\mid k<n\}$, and similarly $A\restr_{\geq n}$ for $\{k\in A\mid k\geq n\}$.
	By $\Gammae$ we denote the type environment mapping every variable to $\emptyset$,
	and by $\Gamma[x\mapsto T]$ the type environment mapping $x$ to $T$ and every other variable $y$ to $\Gamma(y)$.

	Let us now say how a type environment $\Gamma$ from the conclusion of a rule may be split into type environments $(\Gamma_i)_{i\in I}$ used in premisses of the rule:
	we say that $\Split(\Gamma\palka(\Gamma_i)_{i\in I})$ holds if and only if for every variable $x$ it holds $\Gamma_i(x)\subseteq\Gamma(x)$ for every $i\in I$,
	and every full type from $\Gamma(x)$ providing some markers (i.e., $(k,F,M,\tau)$ with $M\neq\emptyset$) appears in some $\Gamma_i(x)$.
	Full types with empty $M$ may be discarded and duplicated freely.
	This definition forbids to discard full types with nonempty $M$, and from elsewhere it will follow that they cannot be duplicated.
	As a special case $\Split(\Gamma\palka\Gamma')$ describes how a type environment can be weakened.

	All type derivations are assumed to be finite
	(although we derive types mostly for infinite $\lambda$-terms, each type derivation analyzes only a finite part of a term).
	Rules of the type system will guarantee that the order $m$ of derived full types will be the same in the whole derivation (although in type environments there may be full types of different orders).
	
	We are ready to give the first three rules of our type system:
	\begin{mathpar}
		\inferrule*[right=(Br)]{
			\Gamma\vdash P_i:\hat\tau\triangleright c
			\\
			i\in\{1,2\}
			}
			{\Gamma\vdash \br\,P_1\,P_2:\hat\tau\triangleright c}
		\and
		\inferrule*[right=(Var)]{
			\Split(\Gamma\palka\Gammae[x\mapsto\{(k,F,M',\tau)\}])
			\\
			M\restr_{<k}=M'
			}
			{\Gamma\vdash x:(m,F,M,\tau)\triangleright 0} 
	\end{mathpar}
	\begin{mathpar}
		\inferrule*[right=($\lambda$)]{\Gamma'[x\mapsto T]\vdash P:(m,F,M,\tau)\triangleright c
			\\
			\Split(\Gamma\palka\Gamma')
			\\
			\Gamma'(x)=\emptyset}
	        	{\Gamma\vdash\lambda x.P:(m,F,M\setminus\bigcup{}_{(k,F',M',\sigma)\in T}M',T\arr\tau)\triangleright c}
	\end{mathpar}
	
	We see that to derive a type for the nondeterministic choice $\br\,P_1\,P_2$, we need to derive it either for $P_1$ or for $P_2$.

	The \VarR rule allows to have in the resulting set $M$ some numbers that do not come from the set $M'$ assigned to $x$ by the type environment; these are the orders of markers placed in the leaf using this rule.
	Notice, however, that we allow here only orders not smaller than $k$ (which is the order of the superterm $\lambda x.P$ binding this variable $x$).
	This is consistent with the intuitive description of the type system (page \pageref{para:intuitions}), 
	which says that a marker of order $n$ can be put in a place that will be a leaf after performing all $\beta$-reductions of orders greater than $n$.
	Indeed, the variable $x$ remains a leaf after performing $\beta$-reductions of orders greater than $k$, but while performing $\beta$-reductions of order $k$ this leaf will be replaced by a subterm substituted for $x$.
	Recall also that, by definition of a type judgment, we require that $(k,F,M',\tau)\in\Ttrip^\alpha_k$ and $(m,F,M,\tau)\in\Ttrip^\alpha_m$, for appropriate sort $\alpha$;
	this introduces a bound on maximal numbers that may appear in the sets $F$ and $M$.
	
	\begin{example}\label{ex:var}
		Denoting $\hat\rho_1=(1,\emptyset,\{0\},o)$ we can derive:
		\begin{mathpar}
			\inferrule*[Right=(Var)]{ }{
				\Gammae[x\mapsto\{\hat\rho_1\}]\vdash x:(2,\emptyset,\{0\},o)\triangleright 0
			}
			\and
			\inferrule*[Right=(Var)]{ }{
				\Gammae[x\mapsto\{\hat\rho_1\}]\vdash x:(2,\emptyset,\{0,1\},o)\triangleright 0
			}
		\end{mathpar}
		In the derivation on the right, the marker of order $1$ is placed in the conclusion of the rule.
	\end{example}

	The \LamR rule allows to use (in a subderivation concerning the $\lambda$-term $P$) the variable $x$ with all full types given in the set $T$.
	When the sort of $\lambda x.P$ is $\alpha\arr\beta$, by definition of $\Tt^{\alpha\arr\beta}$ we have that all full types in $T$ have the same order $k=\ord(\alpha\arr\beta)$ (since $(T\arr\tau)\in\Tt^{\alpha\arr\beta}$).
	Recall that we intend to store in the set $M$ the markers contained in the derivation itself and those provided by free variables, but not those provided by arguments.
	Because of this, in the conclusion of the rule we remove from $M$ the markers provided by $x$.
	This operation makes sense only because there is at most one marker of each order, so markers provided by $x$ cannot be provided by any other free variable nor placed in the derivation itself.
	The set $F$, unlike $M$, stores also flags provided by arguments, so we do not need to remove anything from $F$.
	
	\begin{example}\label{ex:lambda}
		The \LamR rule can be used, e.g., in the following way (where $a$ is a symbol of rank $1$):
		\begin{mathpar}
			\inferrule*[Right=($\lambda$)]{
				\Gammae[x\mapsto\{\hat\rho_1\}]\vdash a\,x:(2,\{1\},\{0\},o)\triangleright 0
			}{
				\Gammae\vdash\lambda x.a\,x:(2,\{1\},\emptyset,\{\hat\rho_1\}\arr o)\triangleright 0
			}
			\and
			\inferrule*[Right=($\lambda$)]{
				\Gammae[x\mapsto\{\hat\rho_1\}]\vdash a\,x:(2,\emptyset,\{0,1\},o)\triangleright 1
			}{
				\Gammae\vdash\lambda x.a\,x:(2,\emptyset,\{1\},\{\hat\rho_1\}\arr o)\triangleright 1
			}
		\end{mathpar}
		Notice that in the conclusion of the rule, in both examples, we remove $0$ from the set of marker orders, because the order-$0$ marker is provided by $x$.
	\end{example}

	The next two rules use a predicate $\Comp_m$, saying how flags and markers from premisses contribute to the conclusion.
	It takes ``as input'' pairs $(F_i,c_i)$ for $i\in I$; each of them consists of the set of flag orders $F_i$ and of the flag counter $c_i$ from some premiss.
	Moreover, the predicate takes a set of marker orders $M$ from the current type judgment (it contains orders of markers used in the derivation, including those provided by free variables).
	The goal is to compute the set of flag orders $F$ and the flag counter $c$ that should be placed in the current type judgment.
	First, for each $n\in\{1,\dots,m\}$ consecutively, we decide whether a flag of order $n$ should be placed on the current type judgment.
	We follow here the rules mentioned in the intuitive description.
	Namely, we place a flag of order $n$ if we are on the path leading to the marker of order $n-1$ (i.e., if $n-1\in M$), and simultaneously we receive an information about a flag of order $n-1$.
	By receiving this information we mean that either a flag of order $n-1$ was placed on the current type judgment, or $n-1$ belongs to some set $F_i$.
	Actually, we place multiple flags of order $n$: one per each flag of order $n-1$ placed on the current type judgment, and one per each set $F_i$ containing $n-1$.
	Then, we compute $F$ and $c$.
	In $c$ we store the number of flags of the maximal order $m$: we sum all the numbers $c_i$, and we add the number of order-$m$ flags placed on the current type judgment.
	In $F$ we keep elements of all $F_i$, and we add the orders $n$ of flags that were placed on the current type judgment.
	We, however, remove from $F$ all elements of $M$.
	This is because every flag of some order $n-1$ should result in creating at most one flag of order $n$, in the closest ancestor that lies on the path leading to the marker of order $n-1$.
	If we have created an order-$n$ flag on the current type judgment, i.e., if $n-1\in M$, we do not want to do this again in the parent.

	Below we give a formal definition, in which $f_n'$ contains the number of order-$n$ flags placed on the current type judgment,
	while $f_n$ additionally counts the number of premisses for which $n\in F_i$.
	We say that $\Comp_m(M\palkaC((F_i,c_i))_{i\in I})=(F,c)$ when
	\begin{align*}
		&F=\{n\in\{0,\dots,m-1\}\mid f_n>0\land n\not\in M\}\,,&&c=f'_m+\sum_{i\in I}c_i\,,\qquad\mbox{where, for $n\in\{0,\dots,m\}$,}\\
		&f_n=f_n'+\sum_{i\in I}|F_i\cap\{n\}|,&&f_n'=\left\{\begin{array}{ll}f_{n-1}&\mbox{if }n-1\in M,\\0&\mbox{otherwise.}\end{array}\right.
	\end{align*}

	We now present a rule for constants other than $\br$:
	\begin{mathpar}
		\inferrule*[right=(Con)]{
			\Gamma_i\vdash P_i:(m,F_i,M_i,o)\triangleright c_i\mbox{ for each }i\in\{1,\dots,r\}
			\\
			M=M'\uplus M_1\uplus\dots\uplus M_r
			\\
			(m=0)\Rightarrow(F'=\emptyset\land c'=1)
			\\ 
			(m>0)\Rightarrow(F'=\{0\}\land c'=0)
			\\
			(r>0)\Rightarrow(M'=\emptyset)
			\\
			a\neq\br
			\\
			\Split(\Gamma\palka\Gamma_1,\dots,\Gamma_r)
			\\ 
			\Comp_m(M\palkaC(F',c'),(F_1,c_1),\dots,(F_r,c_r))=(F,c)
			}
			{\Gamma\vdash a\,P_1\,\dots\,P_r:(m,F,M,o)\triangleright c}
	\end{mathpar}

	Here, the conditions in the second line say that in a node using the \ConR rule we always place a flag of order $0$ (via $F'$ or via $c'$, depending on $m$), 
	and that if the node is a leaf (i.e., $r=0$), then we are allowed to place markers of arbitrary order (via $M'$).
	Then to the $\Comp_m$ predicate, beside of pairs $(F_i,c_i)$ coming from premisses, we also pass the information $(F',c')$ about the order-$0$ flag placed in the current node;
	this predicate decides whether we should place also some flags of positive orders.
	Let us emphasize that in this rule (and similarly in the next rule) we have a disjoint union $M'\uplus M_1\uplus\dots\uplus M_r$,
	which ensures that a marker of any order may be placed only in one node of a derivation.
	
	\begin{example}\label{ex:con}
		The \ConR rule may be instantiated in the following way:
		\begin{mathpar}
			\inferrule*[Right=(Con)]{
				\Gammae[x\mapsto\{\hat\rho_1\}]\vdash x:(2,\emptyset,\{0\},o)\triangleright 0
			}{
				\Gammae[x\mapsto\{\hat\rho_1\}]\vdash a\,x:(2,\{1\},\{0\},o)\triangleright 0
			}
			\and
			\inferrule*[Right=(Con)]{
				\Gammae[x\mapsto\{\hat\rho_1\}]\vdash x:(2,\emptyset,\{0,1\},o)\triangleright 0
			}{
				\Gammae[x\mapsto\{\hat\rho_1\}]\vdash a\,x:(2,\emptyset,\{0,1\},o)\triangleright 1
			}
		\end{mathpar}
		In the left example, flags of order $0$ and $1$ are placed in the conclusion of the rule 
		(a flag of order $0$ is created because we are in a constant; since the marker of order $0$ is visible, we do not put $0$ into the set of flag orders, but instead we create a flag of order $1$).
		In the right example, a marker of order $1$ is visible, which causes that this time flags of order $0$, $1$, and $2$ are placed in the conclusion of the \ConR rule
		(again, we do not put $0$ nor $1$ into the set of flag orders, because of $0$ and $1$ in the set of marker orders).
	\end{example}

	The next rule describes application:
	\begin{mathpar}
		\inferrule*[right=(\!@\!)]{
			\Gamma'\vdash P:(m,F',M',\{(\ord(P),F_i\restr_{<\ord(P)},M_i\restr_{<\ord(P)},\tau_i)\mid i\in I\}\arr\tau)\triangleright c'
			\\
			\Gamma_i\vdash Q:(m,F_i,M_i,\tau_i)\triangleright c_i\mbox{ for each }i\in I
			\\
 			M=M'\uplus\biguplus{}_{i\in I}M_i
 			\\
 			\ord(P)\leq m
 			\\
 			\Split(\Gamma\palka\Gamma',(\Gamma_i)_{i\in I})
 			\\
 			\Comp_m(M\palkaC(F',c'),((F_i\restr_{\geq\ord(P)},c_i))_{i\in I})=(F,c)
	       		}
		        {\Gamma\vdash P\,Q:(m,F,M,\tau)\triangleright c}
	\end{mathpar}

	In this rule, it is allowed (but in fact useless) that for two different $i\in I$ the full types $(m,F_i,M_i,\tau_i)$ are equal.
	It is also allowed that $I=\emptyset$, in which case no type needs to be derived for $Q$.
	Observe how flags and markers coming from premisses concerning $Q$ are propagated: only flags and markers of order $n<\ord(P)$ are visible to $P$, while only flags of order $n\geq\ord(P)$ are passed to the $\Comp_m$ predicate.
	This can be justified if we recall the intuitions staying behind the type system (see page \pageref{para:intuitions}).
	Indeed, while considering flags and markers of order $n$, we should imagine the $\lambda$-term obtained from the current $\lambda$-term by performing all $\beta$-reductions of all orders greater than $n$;
	the distribution of flags and markers of order $n$ in the current $\lambda$-term actually simulates their distribution in this imaginary $\lambda$-term.
	Thus, if $n<\ord(P)$, then our application will disappear in this imaginary $\lambda$-term, and $Q$ will be already substituted somewhere in $P$; 
	for this reason we need to pass the information about flags and markers of order $n$ from $Q$ to $P$.
	Conversely, if $n\geq\ord(P)$, then in the imaginary $\lambda$-term the considered application will be still present, 
	and in consequence the subterm corresponding to $P$ will not see flags and markers of order $n$ placed in the subterm corresponding to $Q$.

	\begin{example}\label{ex:app}
		Denote by $\hat\tau_\mathsf{f}$ and $\hat\tau_\mathsf{m}$ the types derived in Example \ref{ex:lambda}:
		\begin{align*}
			&\hat\tau_\mathsf{f}=(2,\{1\},\emptyset,\{\hat\rho_1\}\arr o)\,,&
			&\mbox{and}&
			&\hat\tau_\mathsf{m}=(2,\emptyset,\{1\},\{\hat\rho_1\}\arr o)\,.
		\end{align*}
		Then, using the \AppR rule, we can derive (where $e$ is a symbol of rank $0$, and $f$ a variable):
		\begin{mathpar}
			\inferrule*[Right=(\!@\!)]{
				\inferrule*[right=(Var)]{ }{
					\Gammae[f\mapsto\{\hat\tau_\mathsf{m}\}]\vdash f:\hat\tau_\mathsf{m}\triangleright 0
				}
				\and
				\inferrule*[Right=(Con)]{ }{
					\Gammae\vdash e:(2,\{1\},\{0\},o)\triangleright 0
				}
			}{
				\Gammae[f\mapsto\{\hat\tau_\mathsf{f},\hat\tau_\mathsf{m}\}]\vdash f\,e:(2,\emptyset,\{0,1\},o)\triangleright 1
			}
		\end{mathpar}
		Recall that $\hat\rho_1=(1,\emptyset,\{0\},o)$.
		In the conclusion of the \AppR rule the information about a flag of order $1$ (from the second premiss) meets the information about the marker of order $1$ (from the first premiss), 
		and thus a flag of order $2$ is placed, which increases the flag counter.
		Notice that we have discarded the full type $\hat\tau_\mathsf{f}$ assigned to $f$ in the type environment;
		this is allowed because $\hat\tau_\mathsf{f}$ provides no markers (equally well $\hat\tau_\mathsf{f}$ could be assigned to $f$ also in one or two of the premisses, and discarded there).
		On the other hand, the full type $\hat\tau_\mathsf{m}$ provides markers, so it cannot be discarded nor duplicated (in particular, we could not pass it to the conclusion of the \ConR rule).
	\end{example}
	
	The key property of the type system is described by the following theorem.
	
	\begin{theorem}\label{thm:types-ok}
		Let $P$ be a closed $\lambda$-term of sort $o$ and complexity $m$.
		Then $\Ll(P)$ is infinite if and only if for arbitrarily large $c$ we can derive $\Gammae\vdash P:\hat\rho_m\triangleright c$, where $\hat\rho_m=(m,\emptyset,\{0,\dots,m-1\},o)$.
	\end{theorem}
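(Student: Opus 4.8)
The plan is to prove the two implications of the "if and only if" separately, and in each direction the argument proceeds by a chain of reductions $M_m \bred^* M_{m-1} \bred^* \dots \bred^* M_0$, where each $M_n$ has complexity $n$ and is obtained from $M_{n+1}$ by performing all $\beta$-reductions of order $n+1$ (this is the standard complexity-reduction strategy for $\lambda$-terms of finite complexity; since $\BT$ is invariant under $\beta$-reduction, all $M_n$ have the same B\"ohm tree, hence the same $\Ll$). The heart of the proof is then a \emph{subject reduction / subject expansion} lemma: a type judgment $\Gamma \vdash Q : \hat\tau \triangleright c$ is derivable for $Q$ if and only if it is derivable for any $Q'$ with $Q \bred Q'$ (or at least for $\beta$-reductions of the relevant order), \emph{preserving the flag counter $c$ and the set $F$}. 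The crucial quantitative content is that the flag counter is exactly preserved across the reductions that lower complexity from $n+1$ to $n$: this is the formal incarnation of the informal claim that "an order-$(n{+}1)$ flag in $M_{n+1}$ corresponds to exactly one order-$(n{+}1)$ flag in $M_n$ lying on the path to the order-$n$ marker", which in turn relies on the fact that at most one marker of each order is placed in a derivation, so a duplicated argument carries the marker in at most one copy.

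For the "if" direction I would first establish a \emph{base case at complexity $0$}: for a closed $M_0$ of sort $o$ and complexity $0$, which coincides with its own B\"ohm tree, I claim that derivations of $\Gammae \vdash M_0 : \hat\rho_0 \triangleright c$ (here $\hat\rho_0 = (0,\emptyset,\emptyset,o)$, all flags of order $0$, no markers) are in correspondence with finite trees $t \in \Ll(M_0)$, with $c$ equal to the number of nodes of $t$ (each \ConR-application contributes $c'=1$, each \BrR-application picks one of the two subtrees). So arbitrarily large $c$ at complexity $0$ $\iff$ arbitrarily large trees in $\Ll(M_0)$. Then I would prove, by downward induction on $n$ from $m$, that if $\Gammae \vdash M_n : \hat\rho_n \triangleright c$ is derivable then so is $\Gammae \vdash M_{n-1} : \hat\rho_{n-1} \triangleright c'$ for some $c'$ with $c' \geq \log$-ish-bound related to $c$ — more precisely I need: arbitrarily large $c$ at level $n$ implies arbitrarily large $c'$ at level $n-1$. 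This follows from subject reduction (which moves a derivation for $M_n$ to one for $M_{n-1}$, keeping the $\hat\rho$-shape and the \emph{count of order-$n$ flags}, which at level $n-1$ becomes the new $c'$ after re-reading what $c$ and $F$ mean when $m$ drops by one — note $\hat\rho_n$ and $\hat\rho_{n-1}$ differ only in the top order component and in that $M$ loses its top element). Chaining down to $n=0$ and invoking the base case gives an infinite $\Ll(M_0) = \Ll(P)$.

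For the "only if" direction the argument is the mirror image via \emph{subject expansion}: given arbitrarily large trees in $\Ll(P) = \Ll(M_0)$, build the required derivations at complexity $0$ (base case, same correspondence), then repeatedly expand from $M_{n-1}$ to $M_n$. The subtlety on this side is that subject expansion for $\beta$ is not automatic — one must show that a derivation for the contractum $M_{n-1}$ can be "un-substituted" into a derivation for the redex $M_n$; the type environment machinery (the $\Split$ predicate and the way \LamR and \AppR pass sets of full types $T$ around) is exactly designed so that the subderivations that were plugged in for a bound variable can be collected back into the argument position, and the $\Comp_m$ bookkeeping of flags matches up. I would prove subject reduction and subject expansion together as one lemma stating that derivability of $\Gamma \vdash Q : \hat\tau \triangleright c$ is preserved (with identical $\hat\tau$ and $c$) by single $\beta$-steps of order $\leq \ord(\hat\tau)$, with the marker/flag accounting invariant throughout.

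The main obstacle I anticipate is precisely this combined subject reduction/expansion lemma, specifically verifying that the $\Comp_m$ computation and the $F$-set propagation are invariant under $\beta$: when a redex $(\lambda x.R)\,S$ of order $n$ is contracted, the order-$n$ flags and the order-$(n{-}1)$ marker that $S$ may carry get relocated, and one must check that the "at most one flag of order $n$ is created per flag of order $n-1$, in the nearest ancestor on the path to the order-$(n{-}1)$ marker" invariant survives the relocation — i.e., that contracting the redex neither creates nor destroys order-$m$ flags (for the counter to be exactly preserved) and correctly updates $F$ for the lower orders. A secondary delicate point is handling the fixed-point unfolding: since $Y$ is unfolded into an infinite $\lambda$-term, one must argue that every finite derivation only inspects a finite unfolding, so the reasoning about finite-complexity $\lambda$-terms applies, and that the complexity-reduction sequence $M_m \bred^* \cdots \bred^* M_0$ can be taken to respect this.
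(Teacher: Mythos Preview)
Your overall architecture matches the paper's: a base case at complexity~$0$ relating derivations of $\hat\rho_0$ to trees in $\Ll$, subject expansion (the paper's Lemma~\ref{lem:c-step}) for completeness, and subject reduction (Lemma~\ref{lem:s-step}) for soundness, both preserving $c$ exactly for $\beta$-steps whose order equals $\ord(\hat\tau)$.

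There is, however, a genuine missing step on the completeness side. Subject expansion for an order-$n$ redex only applies when the derived full type already has order $n$ (Lemma~\ref{lem:c-step} requires $\ord(\hat\tau)$ to equal the order of the reduction). So after you build a derivation of $\hat\rho_{n-1}$ at $M_{n-1}$ and want to ``expand from $M_{n-1}$ to $M_n$'', you must first obtain a derivation of $\hat\rho_n$ at the same term $M_{n-1}$. This is not a re-reading: $\hat\rho_n$ contains a marker of order $n-1$ that $\hat\rho_{n-1}$ does not, and that marker has to be \emph{placed} somewhere in the derivation. The paper isolates this as Lemma~\ref{lem:increase-m}: one selects a leaf by repeatedly descending into the premiss with the largest flag counter, puts the new marker there, and recomputes; this is where the $\log_2$ loss arises and is the quantitative core of completeness. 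Your ``mirror image via subject expansion'' does not contain this idea, and without it the order of $\hat\tau$ never increases.

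Two smaller points. You locate the ``$\log$-ish bound'' on the soundness side, but in fact decreasing the order (Lemma~\ref{lem:s-zero}) gives $c'\geq c$ with no loss; the logarithm appears only when increasing the order in completeness. And ``performing all $\beta$-reductions of order $n+1$'' is not literally available for infinitary terms: the paper instead performs, in completeness, only the finitely many reductions needed to expose a fixed $t$ (Lemma~\ref{lem:base}), and in soundness, only the order-$m$ applications actually occurring in the finite derivation (Lemma~\ref{lem:s-step}), tracked via an explicit application counter.
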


	The left-to-right implication of Theorem \ref{thm:types-ok} (completeness of the type system) is shown in Section \ref{sec:compl}, while the opposite implication (soundness of the type system) in Section \ref{sec:sound}.
	In Section \ref{sec:effective} we discuss how Theorem \ref{thm:main} follows from Theorem \ref{thm:types-ok}.
	Before all that, we give a few more examples of derivations, illustrating the type system and Theorem \ref{thm:types-ok}.
		
	\begin{example}\label{ex:large-1}
		In this example we analyze the $\lambda$-term $P_1=R\,(\lambda x.a\,x)$, where $R$ is defined by coinduction as $R=(\lambda f.\br\,(f\,e)\,(R\,(\lambda x.f\,(f\,x))))$.
		As previously, $a$ and $e$ are symbols of rank $1$ and $0$, respectively.
		In $\Ll(P_1)$ there are trees that consist of a branch of $a$ symbols ended with an $e$ symbol, but only those where the number of $a$ symbols is $2^k$ for some $k\in\Nat$.
		Notice that the complexity of $P_1$ is $2$.
		
		Continuing Example \ref{ex:app}, we derive the full type $\hat\sigma_R=(2,\emptyset,\{0\},\{\hat\tau_\mathsf{f},\hat\tau_\mathsf{m}\}\arr o)$ for $R$:
		\begin{mathpar}
			\inferrule*[Right=($\lambda$)]{
				\inferrule*[Right=(Br)]{
					\Gammae[f\mapsto\{\hat\tau_\mathsf{f},\hat\tau_\mathsf{m}\}]\vdash f\,e:(2,\emptyset,\{0,1\},o)\triangleright 1
				}{
					\Gammae[f\mapsto\{\hat\tau_\mathsf{f},\hat\tau_\mathsf{m}\}]\vdash \br\,(f\,e)\,(R\,(\lambda x.f\,(f\,x))):(2,\emptyset,\{0,1\},o)\triangleright 1
				}
			}{
				\Gammae\vdash R:\hat\sigma_R\triangleright 1
			}
		\end{mathpar}
	
		Next, we derive the same full type for $R$, but using the second argument of the $\br$ symbol; this results in greater values of the flag counter.
		We start by deriving the full type $\hat\tau_\mathsf{f}$ for the subterm $\lambda x.f\,(f\,x)$:
		\begin{mathpar}
			\inferrule*[right=($\lambda$)]{
				\inferrule*[Right=(\!@\!)]{
					\inferrule*{ }{
						\Gammae[f\mapsto\{\hat\tau_\mathsf{f}\}]\vdash f:\hat\tau_\mathsf{f}\triangleright 0
					}
					\and
					\inferrule*[Right=(\!@\!)]{
						\inferrule*{ }{
							\Gammae[f\mapsto\{\hat\tau_\mathsf{f}\}]\vdash f:\hat\tau_\mathsf{f}\triangleright 0
						}
						\and
						\inferrule*{ }{
							\Gammae[x\mapsto\{\hat\rho_1\}]\vdash x:(2,\emptyset,\{0\},o)\triangleright 0
						}
					}{
						\Gammae[f\mapsto\{\hat\tau_\mathsf{f}\},x\mapsto\{\hat\rho_1\}]\vdash f\,x:(2,\{1\},\{0\},o)\triangleright 0
					}
				}{
					\Gammae[f\mapsto\{\hat\tau_\mathsf{f}\},x\mapsto\{\hat\rho_1\}]\vdash f\,(f\,x):(2,\{1\},\{0\},o)\triangleright 0
				}
			}{
				\Gammae[f\mapsto\{\hat\tau_\mathsf{f}\}]\vdash\lambda x.f\,(f\,x):\hat\tau_\mathsf{f}\triangleright 0
			}
		\end{mathpar}
		In the above derivation there are no flags nor markers.
		Next, we derive $\hat\tau_\mathsf{m}$ for the same subterm:
		\begin{mathpar}
			\inferrule*[right=($\lambda$)]{
				\inferrule*[Right=(\!@\!)]{
					\inferrule*{ }{
						\Gammae[f\mapsto\{\hat\tau_\mathsf{f}\}]\vdash f:\hat\tau_\mathsf{f}\triangleright 0
					}
					\and
					\inferrule*[Right=(\!@\!)]{
						\inferrule*{ }{
							\Gammae[f\mapsto\{\hat\tau_\mathsf{m}\}]\vdash f:\hat\tau_\mathsf{m}\triangleright 0
						}
						\and
						\inferrule*{ }{
							\Gammae[x\mapsto\{\hat\rho_1\}]\vdash x:(2,\emptyset,\{0\},o)\triangleright 0
						}
					}{
						\Gammae[f\mapsto\{\hat\tau_\mathsf{m}\},x\mapsto\{\hat\rho_1\}]\vdash f\,x:(2,\emptyset,\{0,1\},o)\triangleright 0
					}
				}{
					\Gammae[f\mapsto\{\hat\tau_\mathsf{f},\hat\tau_\mathsf{m}\},x\mapsto\{\hat\rho_1\}]\vdash f\,(f\,x):(2,\emptyset,\{0,1\},o)\triangleright 1
				}
			}{
				\Gammae[f\mapsto\{\hat\tau_\mathsf{f},\hat\tau_\mathsf{m}\}]\vdash\lambda x.f\,(f\,x):\hat\tau_\mathsf{m}\triangleright 1
			}
		\end{mathpar}
		Below the lower \AppR rule the information about a flag of order $1$ meets the information about the marker of order $1$, and thus a flag of order $2$ is placed, which increases the flag counter.
		We continue with the $\lambda$-term $R$:
		\begin{mathpar}
			\inferrule*[right=($\lambda$)]{
				\inferrule*[Right=(Br)]{
					\inferrule*[Right=(\!@\!)]{
						\Gammae\vdash R:\hat\sigma_R\triangleright c
						\and
						\Gammae[f\mapsto\{\hat\tau_\mathsf{f}\}]\vdash\lambda x.f\,(f\,x):\hat\tau_\mathsf{f}\triangleright 0
						\and
						\Gammae[f\mapsto\{\hat\tau_\mathsf{f},\hat\tau_\mathsf{m}\}]\vdash\lambda x.f\,(f\,x):\hat\tau_\mathsf{m}\triangleright 1
					}{
						\Gammae[f\mapsto\{\hat\tau_\mathsf{f},\hat\tau_\mathsf{m}\}]\vdash R\,(\lambda x.f\,(f\,x)):(2,\emptyset,\{0,1\},o)\triangleright c+1
					}
				}{
					\Gammae[f\mapsto\{\hat\tau_\mathsf{f},\hat\tau_\mathsf{m}\}]\vdash \br\,(f\,e)\,(R\,(\lambda x.f\,(f\,x))):(2,\emptyset,\{0,1\},o)\triangleright c+1
				}
			}{
				\Gammae\vdash R:\hat\sigma_R\triangleright c+1
			}
		\end{mathpar}
		In this fragment of a derivation no flag nor marker is placed.
		In particular, there is no order-$2$ flag in conclusion of the \AppR rule, although its second premiss provides a flag of order $1$ while the third premiss provides the marker of order $1$.
		We recall from the definition of the \AppR rule that the information about flags and markers coming from the arguments is divided into two parts.
		Numbers smaller than the order of the operator ($\ord(R)=2$ in our case) are passed to the operator, while only greater numbers ($\geq 2$ in our case) contribute in creating new flags via the $\Comp$ predicate.
		
		By composing the above fragments of a derivation, we can derive $\Gammae\vdash R:\hat\sigma_R\triangleright c$ for every $c\geq 1$.
		Recall that in Examples \ref{ex:var}-\ref{ex:con} we have derived $\Gammae\vdash\lambda x.a\,x:\hat\tau_\mathsf{f}\triangleright 0$ and $\Gammae\vdash\lambda x.a\,x:\hat\tau_\mathsf{m}\triangleright 1$.
		Together with the above, this allows to derive for $P_1$ the full type $\hat\rho_2=(2,\emptyset,\{0,1\},o)$ (appearing in Theorem \ref{thm:types-ok}):
		\begin{mathpar}
			\inferrule*[Right=(\!@\!)]{
				\Gammae\vdash R:\hat\sigma_R\triangleright c
				\and
	                        \Gammae\vdash\lambda x.a\,x:\hat\tau_\mathsf{f}\triangleright 0
				\and
	                        \Gammae\vdash\lambda x.a\,x:\hat\tau_\mathsf{m}\triangleright 1
			}{
				\Gammae\vdash P_1:\hat\rho_2\triangleright c+1
			}
		\end{mathpar}
		We can notice a correspondence between a derivation with flag counter $c+1$ and a tree in $\Ll(P)$ of size $2^{c-1}+1$.
		We remark that in every of these derivations only three flags of order $0$ and only three flags of order $1$ are present, in the three nodes using the \ConR rule.
	\end{example}
	
	\begin{example}
		Consider a similar $\lambda$-term $P_2=R\,(\lambda x.b\,x\,x)$, where $R$ is as previously, and $b$ is a symbol of rank $2$.
		In $\Ll(P_2)$ we have, for every $k\in\Nat$, a full binary tree in which every branch consist of $2^k$ symbols $b$ and ends with an $e$ symbol.
	
		This time for the subterm $\lambda x.b\,x\,x$ we need to derive three full types:
		\begin{align*}
			&\hat\tau_0'=(2,\{0\},\emptyset,\{(1,\{0\},\emptyset,o)\}\arr o)\,,\\
			&\hat\tau_\mathsf{f}'=(2,\{1\},\emptyset,\{(1,\{0\},\emptyset,o),\hat\rho_1\}\arr o)\,,\qquad\mbox{and}\\
			&\hat\tau_\mathsf{m}'=(2,\emptyset,\{1\},\{(1,\{0\},\emptyset,o),\hat\rho_1\}\arr o)\,.
		\end{align*}
		The last one is derived with flag counter $1$.
		Notice that $\hat\tau_\mathsf{f}'$ and $\hat\tau_\mathsf{m}'$ need now two full types for the argument $x$; the new one $(1,\{0\},\emptyset,o)$ describes the subtree that is not on the path to the order-$0$ marker.
		We also have a new full type $\hat\tau_0'$ that describes the use of $\lambda x.b\,x\,x$ outside of the path to the order-$0$ marker.
			
		Then, similarly as in the previous example, for every $c\geq 1$ we can derive $\Gammae\vdash R:\hat\sigma_R'\triangleright c$, 
		where $\hat\sigma_R'=(2,\emptyset,\{0\},\{\hat\tau_0',\hat\tau_\mathsf{f}',\hat\tau_\mathsf{m}'\}\arr o)$.
		Again, this allows to derive $\Gammae\vdash P_2:\hat\rho_2\triangleright c+1$.
		This time a derivation with flag counter $c+1$ corresponds to a tree in $\Ll(P)$ of size $2^h-1$ with $h=2^{c-1}+1$.
	\end{example}
	
	\begin{example}
		Next, consider the $\lambda$-term $P_3=R\,(\lambda x.\,x)$.
		The only tree in $\Ll(P_3)$ consists of a single $e$ node.
		Let us see how the derivation from Example \ref{ex:large-1} has to be modified.
		The full type $\hat\tau_\mathsf{m}$ can still be derived for $\lambda x.\,x$ (although with flag counter $0$ now), 
		but instead of $\hat\tau_\mathsf{f}$ we have to use $\hat\tau_\mathsf{f}''=(2,\emptyset,\emptyset,\{\hat\rho_1\}\arr o)$ that provides no flag of order $1$:
		\begin{mathpar}
			\inferrule*[Right=($\lambda$)]{
				\inferrule*[Right=(Var)]{ }{
					\Gammae[x\mapsto\{\hat\rho_1\}]\vdash x:(2,\emptyset,\{0\},o)\triangleright 0
				}
			}{
				\Gammae\vdash\lambda x.x:\hat\tau_\mathsf{f}''\triangleright 0
			}
			\and
			\inferrule*[Right=($\lambda$)]{
				\inferrule*[Right=(Var)]{ }{
					\Gammae[x\mapsto\{\hat\rho_1\}]\vdash x:(2,\emptyset,\{0,1\},o)\triangleright 0
				}
			}{
				\Gammae\vdash\lambda x.x:\hat\tau_\mathsf{m}\triangleright 0
			}
		\end{mathpar}
		
		Next, for $R$ we want to derive the full type $\hat\sigma_R''=(2,\emptyset,\{0\},\{\hat\tau_\mathsf{f}'',\hat\tau_\mathsf{m}\}\arr o)$.
		We can easily adopt every of the previous derivations for $\Gammae\vdash R:\hat\sigma_R\triangleright c$: we basically replace every $\hat\tau_\mathsf{f}$ by $\hat\tau_\mathsf{f}''$.
		The key point is that while deriving the full type $\hat\tau_\mathsf{m}$ for the subterm $\lambda x.f\,(f\,x)$, previously in the lower \AppR rule we have received information about an order-$1$ flag,
		and thus we have created an order-$2$ flag and increased the flag counter;
		this time there is no information about an order-$1$ flag, and thus we do not create an order-$2$ flag and do not increase the flag counter.
		In consequence, even if this part of the derivation is repeated arbitrarily many times, the value of the flag counter of the whole derivation remains $1$.
	\end{example}
	
	\begin{example}
		Finally, consider the $\lambda$-term $P_4=(\lambda g.P_3)\,(\lambda x.a\,(a\,(\dots\,(a\,x)\dots))$, which $\beta$-reduces to $P_3$.
		Notice that we can create the following derivation:
		\begin{mathpar}
			\inferrule*[Right=($\lambda$)]{
				\inferrule*[Right=(Con)]{
					\inferrule*[Right=(Con)]{ 
						\inferrule*[Right=(Con)]{ 
							\inferrule*[Right=(Var)]{ }{
								\Gammae[x\mapsto\{\hat\rho_1\}]\vdash x:(2,\emptyset,\{0\},o)\triangleright 0
							}
						}{
							\Gammae[x\mapsto\{\hat\rho_1\}]\vdash a\,x:(2,\{1\},\{0\},o)\triangleright 0
						}
					}{
						\vdots
					}
				}{
					\Gammae[x\mapsto\{\hat\rho_1\}]\vdash a\,(a\,(\dots\,(a\,x)\dots)):(2,\{1\},\{0\},o)\triangleright 0
				}
			}{
				\Gammae\vdash\lambda x.a\,(a\,(\dots\,(a\,x)\dots)):\hat\tau_\mathsf{f}\triangleright 0
			}
		\end{mathpar}
		Every \ConR rule used in this derivation places in its conclusion an order-$0$ flag and an order-$1$ flag.
		This derivation can be used as a part of a derivation for $P_4$:
		\begin{mathpar}
			\inferrule*[Right=(\!@\!)]{
				\inferrule*[right=($\lambda$)]{
					\Gammae[g\mapsto\{\hat\tau_\mathsf{f}\}]\vdash P_3:\hat\rho_2\triangleright 1
				}{
					\Gammae\vdash\lambda g.P_3:(2,\emptyset,\{0,1\},\{\hat\tau_\mathsf{f}\}\arr o)\triangleright 1
				}
				\and
				\Gammae\vdash\lambda x.a\,(a\,(\dots\,(a\,x)\dots)):\hat\tau_\mathsf{f}\triangleright 0
			}{
				\Gammae\vdash P_4:\hat\rho_2\triangleright 1
			}
		\end{mathpar}
		Because $\hat\tau_\mathsf{f}$ provides no markers, it can be removed from the type environment and thus for $P_3$ we can use the derivation from the previous example.
		We thus obtain a derivation for $P_4$ in which there are many order-$0$ and order-$1$ flags (but only one flag of order $2$).
		This shows that in the flag counter we indeed need to count only the number of flags of the maximal order (not, say, the total number of flags of all orders).
	\end{example}

\section{Completeness}\label{sec:compl}

	The proof of the left-to-right implication of Theorem \ref{thm:types-ok} is divided into the following three lemmata.
	Recall that a $\beta$-reduction $P\bred Q$ is of order $n$ if it concerns a redex $(\lambda x.R)\,S$ such that $\ord(\lambda x.R)=n$.
	The number of nodes of a tree $t$ is denoted $|t|$.
	As in Theorem \ref{thm:types-ok}, we denote $\hat\rho_m=(m,\emptyset,\{0,\dots,m-1\},o)$.
	
	\begin{lemma}\label{lem:base}
		Let $P$ be a closed $\lambda$-term of sort $o$ and complexity $m$, and let $t\in\Ll(P)$.
		Then there exist $\lambda$-terms $Q_m,Q_{m-1},\dots,Q_0$ such that $P=Q_m$, and for every $k\in\{1,\dots,m\}$ the term $Q_{k-1}$ can be reached from $Q_k$ using only $\beta$-reductions of order $k$, 
		and we can derive $\Gammae\vdash Q_0:\hat\rho_0\triangleright|t|$.
	\end{lemma}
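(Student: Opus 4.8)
The plan is to reduce this lemma to the purely complexity-$0$ case by successive $\beta$-normalization at each order, and to handle the base case directly. First I would observe that the term $Q_0$ we aim for has complexity $0$: it is obtained from $P$ by eliminating, in decreasing order of $k$, every redex $(\lambda x.R)\,S$ with $\ord(\lambda x.R)=k$, for $k$ from $m$ down to $1$. A standard fact about the (order-stratified) $\beta$-reduction on infinitary, finite-complexity $\lambda$-terms is that this process terminates and produces a term of complexity $0$; moreover, since $P$ is closed of sort $o$, each $Q_k$ is again closed of sort $o$, and the B\"ohm tree is preserved, so $\BT(Q_0)=\BT(P)$ and hence $t\in\Ll(Q_0)$. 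The nontrivial content is then entirely in the final claim: \emph{whenever $M_0$ is a closed $\lambda$-term of complexity $0$ and sort $o$ and $t\in\Ll(M_0)$, one can derive $\Gammae\vdash M_0:\hat\rho_0\triangleright|t|$}, where $\hat\rho_0=(0,\emptyset,\emptyset,o)$.

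For this base case I would argue by induction on the structure of $t$ (equivalently, along the $\to_\br^*$-reduction witnessing $t\in\Ll(\BT(M_0))$). A complexity-$0$ term of sort $o$ is, up to $\beta$-reduction, of the form $a\,P_1\,\dots\,P_r$ with each $P_i$ again closed of complexity $0$ and sort $o$ (there are no higher-order subterms to apply, so after head reduction the head is a symbol; note applications and $\lambda$'s internal to such a term are all at order $0$ and can only rearrange sort-$o$ data — in fact for complexity $0$ the term simply \emph{is} its B\"ohm tree, as the intuitions paragraph already notes). If $a=\br$, then $t\in\Ll(\BT(P_1))\cup\Ll(\BT(P_2))$, so $t\in\Ll(P_i)$ for some $i\in\{1,2\}$; by the induction hypothesis we derive $\Gammae\vdash P_i:\hat\rho_0\triangleright|t|$, and the \BrR rule gives $\Gammae\vdash\br\,P_1\,P_2:\hat\rho_0\triangleright|t|$. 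If $a\neq\br$, then $t$ has root $a$ with subtrees $t_1,\dots,t_r$ where $t_i\in\Ll(P_i)$ and $|t|=1+\sum_{i=1}^r|t_i|$; by the induction hypothesis we get $\Gammae\vdash P_i:(0,\emptyset,\emptyset,o)\triangleright|t_i|$ for each $i$, and then apply the \ConR rule with $m=0$: here $M=M'=M_1=\dots=M_r=\emptyset$, $F'=\emptyset$, $c'=1$, so $\Comp_0$ yields $F=\emptyset$ and $c=c'+\sum_i c_i=1+\sum_i|t_i|=|t|$, giving $\Gammae\vdash a\,P_1\,\dots\,P_r:\hat\rho_0\triangleright|t|$. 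Finally, the head $\beta$-reductions needed to expose the form $a\,P_1\,\dots\,P_r$ are all of order $0$, so the full type transports back along them; the cleanest way is to fold this into the induction by first establishing a subject-expansion statement for order-$0$ reductions (or simply noting that the type system, restricted to complexity $0$, depends only on $\BT(M_0)$).

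The main obstacle I anticipate is the bookkeeping in the reduction of the general statement to the base case: one must be careful that reducing all redexes of a given order $k$ genuinely lowers the complexity from $\ge k$ down to $\le k-1$ without creating fresh higher-order subterms, and that the procedure terminates on infinitary (but finite-complexity) terms — this is where the finite-complexity hypothesis is essential. I would invoke the standard properties of order-stratified $\beta$-reduction for \lY-calculus / infinitary $\lambda$-calculus (the same machinery underlying the definition of $\BT$), which give both termination at each order and preservation of the B\"ohm tree; with those in hand, the lemma follows by stitching together the reductions $Q_m\to\dots\to Q_0$ with the complexity-$0$ derivation constructed above.
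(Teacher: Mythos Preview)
There is a genuine gap in your reduction to the base case. You claim that eliminating all order-$k$ redexes terminates on an infinitary term of finite complexity, but this is false. Take the paper's own Example~\ref{ex:large-1}: the term $P_1=R\,(\lambda x.a\,x)$ with $R=\lambda f.\br\,(f\,e)\,(R\,(\lambda x.f\,(f\,x)))$ has complexity~$2$, yet every order-$2$ $\beta$-reduction exposes a fresh copy of $R$ applied to an argument, hence a fresh order-$2$ redex; no finite sequence of order-$2$ reductions reaches a term free of order-$2$ redexes. Finite complexity bounds the \emph{orders} that occur, not the number of redexes of a given order. And a finite sequence is precisely what is needed downstream: in the completeness argument, Lemma~\ref{lem:c-step} is applied once per $\beta$-step between $Q_k$ and $Q_{k-1}$, so an infinitary (transfinite) reduction would not let that induction go through. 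The ``standard properties'' you invoke at the end simply do not hold in this form.

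The paper's proof avoids this by \emph{not} normalizing at each order. Since $t$ is finite, it is visible after only finitely many $\beta$-reductions from $P$. The paper then passes to a \emph{finite} approximation $P'$ of $P$ (replacing deep subterms by fresh variables), reorders those finitely many reductions by order on $P'$---where termination is trivial and Lemma~\ref{lem:no-new-redexes} guarantees no higher-order redexes reappear---reaches the $\beta$-normal form $Q_0'$ of $P'$, and substitutes the removed subterms back to obtain $Q_0$. The resulting $Q_0$ need not have complexity $0$: the reinserted pieces may contain arbitrary redexes, but they sit behind $\br$-choices that the derivation discards via the \BrR rule. Your complexity-$0$ base case is correct and essentially matches the paper's Lemma~\ref{lem:base-type}, but the lemma does not actually reduce to that case in the way you propose.
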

	
	\begin{lemma}\label{lem:increase-m}
		Suppose that we can derive $\Gammae\vdash P:\hat\rho_m\triangleright c$.
		Then we can also derive $\Gammae\vdash P:\hat\rho_{m+1}\triangleright c'$ for some $c'\geq\log_2 c$.
	\end{lemma}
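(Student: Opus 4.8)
The plan is to take a derivation $\Dd$ of $\Gammae\vdash P:\hat\rho_m\triangleright c$ and surgically modify it into a derivation of $\Gammae\vdash P:\hat\rho_{m+1}\triangleright c'$ with $c'\geq\log_2 c$. Morally, in $\Dd$ the flag counter $c$ counts order-$m$ flags scattered throughout the derivation; in the new derivation these order-$m$ flags become "ordinary" order-$m$ flags recorded in the set $F$ (since $m$ is no longer the maximal order), and the new top order $m{+}1$ will instead count order-$(m{+}1)$ flags. Following the intuitions on page~\pageref{para:intuitions}, an order-$(m{+}1)$ flag is placed on the path to a freshly chosen order-$m$ marker, at every node having a side-branch that contains an order-$m$ flag. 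So the first step is: view the order-$m$ flags of $\Dd$ as the "leaves" of a virtual tree whose internal structure is given by the branching of $\Dd$ (more precisely, by the branching at \ConR and \AppR nodes, where the $\Comp$ predicate aggregates contributions from several premisses), pick one order-$m$ flag to carry the order-$m$ marker, route the marker up to the root, and count how many branching points on that path have another order-$m$ flag hanging off to the side.

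Concretely, I would proceed as follows. First, observe that the multiset of order-$m$ flags contributing to $c$ is organized by $\Dd$ into a binary-branching-like structure: at each node the quantity "number of order-$m$ flags below" is the sum of the $c_i$ over the premisses plus the locally placed order-$m$ flags; a standard counting argument (the same one behind "a tree with $c$ leaves has a root-to-leaf path passing at least $\log_2 c$ branching nodes that have $\geq 1$ leaf in a sibling subtree") shows there is a choice of one distinguished order-$m$ flag and a path from it to the root that meets at least $\log_2 c$ nodes each of which receives an order-$m$-flag contribution from a premiss off the path. Second, I would rebuild the derivation with $m$ replaced by $m{+}1$: everywhere a full type $(m,F,M,\tau)$ occurs, replace it by $(m{+}1,F',M',\tau)$, where along the chosen path we add $m$ to the marker set (this is legal: at the \VarR rule where the marked leaf sits, the side condition only forbids marker orders $<k$, and $m\geq k$ always since the bound variable has order $<k\leq m$; at \LamR, \ConR, \AppR the marker $m$ is just propagated/split like any other), and where appropriate we add $m$ to $F'$ so the old order-$m$ flags are now recorded. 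Third, at each of the $\geq\log_2 c$ identified nodes on the path, the information "$m\in F_i$ for an off-path premiss" now meets "$m\in M$", so $\Comp_{m+1}$ forces an order-$(m{+}1)$ flag to be placed, incrementing the new flag counter; hence $c'\geq\log_2 c$.

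The main obstacle I expect is bookkeeping the set $F$ correctly under the $\Comp$ rule's removal clause "$F=\{n\mid f_n>0\land n\notin M\}$". Along the marked path we have $m\in M$, so $m$ gets stripped from $F$ at every such node — which is exactly what we want (an order-$m$ flag should turn into at most one order-$(m{+}1)$ flag at the nearest path ancestor, not propagate further), but I must check that order-$m$ flags arising strictly off the path are still propagated up to the path (there $m\notin M$ until we hit the path, so they survive), and that the termination of this propagation at the path node coincides with the placement of the new order-$(m{+}1)$ flag. A secondary subtlety is that the marker $m$ must not collide with anything: since markers of each order occur at most once (enforced by the disjoint unions $\uplus$ in \ConR and \AppR, and by $\Split$ forbidding duplication of marker-carrying full types), adding a brand-new order-$m$ marker along a single path is consistent, and the \LamR rule's subtraction of argument-provided markers from $M$ does not interfere because the marked leaf is an honest occurrence of a free-at-that-point variable, not an argument position. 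Finally I should double-check the degenerate cases $c=0$ and $c=1$, where $\log_2 c$ is $-\infty$ or $0$ and the statement is vacuous or trivially satisfied by the straightforward relabeling with no new marker needed.
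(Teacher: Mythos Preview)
Your approach is essentially the paper's: pick a root-to-leaf path in the derivation to carry a new order-$m$ marker, convert old flag counters into membership $m\in F$ off the path, and count the order-$(m{+}1)$ flags created along the path where off-path contributions arrive.

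One point needs tightening. Your ``standard counting argument'' claims the path meets $\geq\log_2 c$ \emph{nodes} with an off-path order-$m$ contribution, but derivation trees here are not binary: the \AppR rule has $|I|{+}1$ premisses and \ConR has $r{+}1$, so the number of such nodes can be as small as $\log_d c$ for large branching degree $d$ (think of a single \AppR node with $c$ argument premisses, each of counter~$1$). The fix is already within reach: by the definition of $\Comp_{m+1}$ (specifically $f'_{m+1}=f_m$ when $m\in M$, and $f_m$ sums $|F_i\cap\{m\}|$ over all premisses), one order-$(m{+}1)$ flag is placed \emph{per off-path premiss} with $m\in F_i$, not one per node. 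So you must count with multiplicity. The paper makes this precise via the greedy choice ``always descend into the premiss with maximal flag counter'' and the inductive inequality
\[
d'=(k{-}1)+d'_{\max}\;\geq\;(k{-}1)+\log_2 d_{\max}\;\geq\;\log_2(k\cdot d_{\max})\;\geq\;\log_2 d,
\]
where $k$ is the number of children with positive old counter plus the locally placed order-$m$ flags. With this adjustment (and your correct handling of the \VarR leaf case, to which one should add the analogous \ConR leaf case $r=0$), your argument goes through and coincides with the paper's.
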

	
	\begin{lemma}\label{lem:c-step}
		Suppose that $P\bred Q$ is a $\beta$-reduction of order $m$, and we can derive $\Gamma\vdash Q:\hat\tau\triangleright c$ with $\ord(\hat\tau)=m$.
		Then we can also derive $\Gamma\vdash P:\hat\tau\triangleright c$.
	\end{lemma}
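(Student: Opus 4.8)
The plan is to prove a stronger \emph{subject-expansion} statement by induction on the structure of the derivation of $\Gamma\vdash Q:\hat\tau\triangleright c$, tracking exactly how the single order-$m$ redex $(\lambda x.R)\,S$ inside $P$ got contracted. Write $P=C[(\lambda x.R)\,S]$ and $Q=C[R[S/x]]$ for a one-hole context $C$. The heart of the argument is a local expansion lemma at the redex position: if $\Gamma\vdash R[S/x]:\hat\sigma\triangleright d$ is derivable with $\ord(\hat\sigma)=m$, then $\Gamma\vdash (\lambda x.R)\,S:\hat\sigma\triangleright d$ is derivable. So the first step is to reduce the context case to this redex case, by a straightforward induction showing that if a premise of a rule is replaced by a derivation of the same full type $\hat\tau'$ with the same flag counter $c'$ and a type environment related by $\Split$, then the whole derivation can be rebuilt with unchanged conclusion — here I must check that each rule (\BrR, \VarR, \LamR, \ConR, \AppR) is ``transparent'' in the sense that its conclusion depends on a premise only through that premise's full type, flag counter, and (via $\Split$) its type environment. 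Since $\ord(\lambda x.R)=m$ and derived full types keep order $m$ throughout, the redex sits at a judgment of order $m$, so this reduction is legitimate.

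For the redex case itself, I would take the derivation $\Delta$ of $\Gamma\vdash R[S/x]:\hat\sigma\triangleright d$ and perform a \emph{substitution-splitting}: collect, for each occurrence of $S$ that replaced an occurrence of $x$ in $R$, the sub-derivation rooted there, obtaining a finite family $(\Gamma_i\vdash S:(m,F_i,M_i,\tau_i)\triangleright c_i)_{i\in I}$ together with a derivation of $\Gamma'[x\mapsto T]\vdash R:\hat\sigma\triangleright c'$ where $T=\{(\ord(\lambda x.R),F_i\restr_{<\ord(\lambda x.R)},M_i\restr_{<\ord(\lambda x.R)},\tau_i)\mid i\in I\}$. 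The key bookkeeping is that because $\ord(\lambda x.R)=m$ equals the global order, the restrictions $\restr_{<\ord(\lambda x.R)}$ are vacuous (all flag/marker orders are already $<m$), so $T$ literally records the full types $(m,F_i,M_i,\tau_i)$ of the $S$-occurrences; moreover all order-$<m=\ord(\lambda x.R)$ flags and markers from the $S$-occurrences are exactly the ones \LamR\ and \AppR\ route into $R$, so the flag counters and the $\Comp_m$ computations match up when I reassemble with one \LamR\ step (abstracting $x$) followed by one \AppR\ step (applying to $S$). I then need to verify the side-conditions: the disjoint-union conditions on marker sets $M=M'\uplus\biguplus_i M_i$ (which hold because each marker order appears in at most one node of $\Delta$, hence in at most one of the split pieces), the $\Split$ conditions (obtained by partitioning $\Gamma$ according to where each full type was consumed), and that the resulting flag counter is again $d$ (no order-$m$ flag is created by \LamR, and \AppR\ only passes order-$\geq m$ flags to $\Comp_m$, of which there are none since the maximal order is $m$ itself, so $c=\sum c_i + c'$ which is precisely the count of order-$m$ flags scattered across the pieces, i.e.\ $d$).

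The main obstacle I expect is the substitution-splitting step, specifically making precise the correspondence between type derivations for $R[S/x]$ and pairs (derivation for $R$ with $x$ typed by a set $T$, family of derivations for $S$). This is the standard ``intersection-type substitution lemma'' but here it is complicated by the flag/marker annotations: I must check that the $\Comp_m$ predicate is compatible with cut, i.e.\ that placing flags node-by-node in the combined derivation for $R[S/x]$ gives the same multiset of order-$m$ flags as doing it separately in $R$ and in the copies of $S$ and then summing. Because a flag of order $n$ is triggered only when order-$(n{-}1)$ information meets marker order $n{-}1\in M$, and markers are unique, the ``meeting point'' is the same whether we look at $R[S/x]$ or at the composed derivation; I would isolate this as an auxiliary observation that $\Comp_m$ is associative with respect to grafting a premise, and then the flag-counter equality follows. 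A secondary, more routine obstacle is handling the case where $I=\emptyset$ (the bound variable $x$ does not occur in $R$, so $S$ disappears in $Q$): then $R[S/x]=R$, and I simply prepend a \LamR\ step with $T=\emptyset$ and an \AppR\ step with no premise for $S$, which is explicitly permitted by the rules, leaving $\hat\tau$, $\Gamma$, and $c$ untouched.
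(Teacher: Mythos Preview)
Your plan is correct and follows essentially the same route as the paper: reduce to the redex case by a context induction, then prove a substitution-splitting lemma that decomposes a derivation of $\Gamma\vdash R[S/x]:\hat\tau\triangleright c$ into a derivation for $R$ (with $x$ assigned the set $T=\{\hat\sigma_i\}$) plus derivations $\Sigma_i\vdash S:\hat\sigma_i\triangleright d_i$, and reassemble via \LamR\ and \AppR. The only minor difference is that where you invoke an ``associativity of $\Comp_m$ under grafting'', the paper uses a sharper and simpler observation: because $\ord(\lambda x.R)=m$, the sets $F_i\restr_{\geq m}$ passed to $\Comp_m$ in the final \AppR\ step are all empty, and $\Comp_m(M\palkaC(F,e),((\emptyset,d_i))_i)=(F,e+\sum_i d_i)$ whenever $F\cap M=\emptyset$---so the flag counter is recovered by pure addition, with no need to analyse flag creation at the new \AppR\ node.
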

	
	Now the left-to-right implication of Theorem \ref{thm:types-ok} easily follows.
	Indeed, take a closed $\lambda$-term $P$ of sort $o$ and complexity $m$ such that $\Ll(P)$ is infinite, and take any $c\in\Nat$.
	By $\log^k_2$ we denote the $k$-fold application of the logarithm: $\log^0_2 x=x$ and $\log^{k+1}_2 x=\log_2(\log_2^k x)$.
	Since $\Ll(P)$ is infinite, it contains a tree $t$ so big that $\log_2^m|t|\geq c$.
	We apply Lemma \ref{lem:base} to this tree, obtaining $\lambda$-terms $Q_m,Q_{m-1},\dots,Q_0$ and a derivation of $\Gammae\vdash Q_0:\hat\rho_0\triangleright|t|$.
	Then repeatedly for every $k\in\{1,\dots,m\}$ we apply Lemma \ref{lem:increase-m}, obtaining a derivation of $\Gammae\vdash Q_{k-1}:\hat\rho_k\triangleright c_k$ for some $c_k\geq\log^k_2|t|$,
	and Lemma \ref{lem:c-step} for every $\beta$-reduction (of order $k$) between $Q_k$ and $Q_{k-1}$, obtaining a derivation of $\Gammae\vdash Q_k:\hat\rho_k\triangleright c_k$.
	We end with a derivation of $\Gammae\vdash P:\hat\rho_m\triangleright c_m$, where $c_m\geq\log^m_2|t|\geq c$, as needed.
	In the remaining part of this section we prove the three lemmata.
	
	\begin{proof}[Proof of Lemma \ref{lem:base} (sketch)]
		Recall that $t\in\Ll(P)$ is a finite tree, thus it can be found in some finite prefix of the B\"ohm tree of $P$.
		By definition, this prefix will be already expanded after performing some finite number of $\beta$-reductions from $P$.
		We need to observe that these $\beta$-reductions can be rearranged, so that those of higher order are performed first.
		\label{page:rearrange-beta}
		The key point is to observe that when we perform a $\beta$-reduction of some order $k$, then no new $\beta$-redexes of higher order appear in the term.
		Indeed, suppose that $(\lambda x.R)\,S$ is changed into $R[S/x]$ somewhere in a term, where $\ord(\lambda x.R)=k$.
		One new redex that may appear is when $R$ starts with a $\lambda$, and to the whole $R[S/x]$ some argument is applied; this redex is of order $\ord(R)\leq k$.
		Some other redexes may appear when $S$ starts with a $\lambda$, and is substituted for such appearance of $x$ to which some argument is applied; but this redex is of order $\ord(S)<k$.
		
		We can thus find a sequence of $\beta$-reductions in which $\beta$-reductions are arranged according to their order, that leads from $P$ to some $Q_0$ such that $t$ can be found in the prefix of $Q_0$ that is already expanded to a tree.
		It is now a routine to use the rules of our type system and derive $\Gammae\vdash Q_0:\hat\rho_0\triangleright|t|$:
		in every $\br$-labeled node we choose the subtree in which $t$ continues, and this effects in counting the number of nodes of $t$ in the flag counter.
	\end{proof}
	
	\begin{proof}[Proof of Lemma \ref{lem:increase-m}]
		Consider some derivation of $\Gammae\vdash P:\hat\rho_m\triangleright c$.
		In this derivation we choose a leaf in which we will put the order-$m$ marker, as follows.
		Starting from the root of the derivation, we repeatedly go to this premiss in which the flag counter is the greatest (arbitrarily in the case of a tie).
		In every node that is not on the path to the selected leaf, we replace the current type judgment $\Gamma\vdash Q:(m,F,M,\tau)\triangleright d$ by $\Gamma\vdash Q:(m+1,F',M,\tau)\triangleright 0$,
		where $F'=F\cup\{m\}$ if $d>0$, and $F'=F$ otherwise.
		In the selected leaf and all its ancestors, we change the order from $m$ to $m+1$, we add $m$ to the set of marker orders, and we recalculate the flag counter.
		
		Let us see how such transformation changes the flag counter on the path to the selected leaf.
		We will prove (by induction) that the previous value $d$ and the new value $d'$ of the flag counter in every node on this path satisfy $d'\geq\log_2 d$.
		In the selected leaf itself, the flag counter (being either $0$ or $1$) remains unchanged; we have $d'=d\geq\log_2 d$.
		Next, consider any proper ancestor of the selected node.
		Let $k$ be the number of those of its children in which the flag counter was positive, plus the number of order-$m$ flags placed in the considered node itself.
		Let also $d_{\max}$ and $d_{\max}'$ be the previous value and the new value of the flag counter in this child that is in the direction of the selected leaf.
		By construction, the flag counter in this child was maximal, which implies $k\cdot d_{\max}\geq d$, while by the induction assumption $d'_{\max}\geq\log_2 d_{\max}$.
		To $d'$ we take the flag counter only from the special child, while for other children with positive flag counter we add $1$, i.e., $d'=k-1+d'_{\max}$.
		Altogether we obtain $d'=k-1+d'_{\max}\geq k-1+\log_2d_{\max}\geq\log_2(k\cdot d_{\max})\geq\log_2 d$, as required.
	\end{proof}
	
	\begin{proof}[Proof of Lemma \ref{lem:c-step}]
		We consider the base case when $P=(\lambda x.R)\,S$ and $Q=R[S/x]$; the general situation (redex being deeper in $P$) is easily reduced to this one.
		In the derivation of $\Gamma\vdash Q:\hat\tau\triangleright c$ we identify the set $I$ of places (nodes) where we derive a type for $S$ substituted for $x$.
		For $i\in I$, let $\Sigma_i\vdash S:\hat\sigma_i\triangleright d_i$ be the type judgment in $i$.
		We change the nodes in $I$ into leaves, where we instead derive $\Gammae[x\mapsto\{\hat\sigma_i\}]\vdash x:\hat\sigma_i\triangleright 0$.
		It should be clear that we can repair the rest of the derivation, by changing type environments, replacing $S$ by $x$ in $\lambda$-terms, and decreasing flag counters.
		In this way we obtain derivations of $\Sigma_i\vdash S:\hat\sigma_i\triangleright d_i$ for every $i\in I$, and a derivation of $\Sigma'\vdash R:\hat\tau\triangleright d$,
		where $\Sigma'=\Sigma[x\mapsto\{\hat\sigma_i\mid i\in I\}]$ with $\Sigma(x)=\emptyset$, 
		and $\Split(\Gamma\palka\Sigma,(\Sigma_i)_{i\in I})$, and $c=d+\Sigma_{i\in I}d_i$.
		To the latter type judgment we apply the $\LamR$ rule, and then we merge it with the type judgments for $S$ using the \AppR rule, which results in a derivation for $\Gamma\vdash P:\hat\tau\triangleright c$.
		We remark that different $i\in I$ may give identical type judgments for $S$ (as long as the set of markers in $\hat\sigma_i$ is empty); this is not a problem.
		The \AppR rule requires that $\ord(\hat\sigma_i)=\ord(\lambda x.R)$; we have that $\ord(\hat\sigma_i)=\ord(\hat\tau)$, and $\ord(\hat\tau)=m=\ord(\lambda x.R)$ by assumption.
	\end{proof}

\section{Soundness}\label{sec:sound}

	In this section we sketch the proof of the right-to-left implication of Theorem \ref{thm:types-ok}.
	We, basically, need to reverse the proof from the previous section.
	The following new fact is now needed.

	\begin{lemma}\label{lem:zero-when-no-marker}
		If we can derive $\Gamma\vdash P:(m,F,M,\tau)\triangleright c$ with $m-1\not\in M$ and $\ord(P)\leq m-1$, then $c=0$.
	\end{lemma}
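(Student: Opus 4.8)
\medskip
\noindent\emph{Proof idea.}
The plan starts from one observation about $\Comp_m$: the quantity $f'_m$ (the number of order-$m$ flags placed at a node) is nonzero only when $m-1$ lies in the set $M$ of that node. Reading the definition of the flag counter from the leaves upwards, $c$ is therefore the total number of order-$m$ flags placed in the whole derivation, and each of them sits in a node whose own set of marker orders contains $m-1$. Since $m-1$ is the largest order a marker may have and there is at most one marker of each order, one would like to conclude that $m-1\notin M$ at the root forbids $m-1$ from occurring in any $M$ below, so $c=0$; but this is too naive, because the order-$(m-1)$ marker may be placed deep inside the derivation and then deleted by a \LamR step binding a variable of order $m$ (the \LamR rule removes from $M$ exactly the marker orders supplied by the bound variable), so it is invisible at the root. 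Note also that $\ord(P)\le m-1$ forces $m\ge1$, and that the statement genuinely fails for $m=0$ (a \ConR leaf already has $c=1$); so $m\ge1$ will be used throughout.

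To make the induction close I would prove the following strengthening, by induction on the derivation: \emph{if $m\ge1$ and $\Gamma\vdash P:(m,F,M,\tau)\triangleright c$ is derivable with $\ord(P)\le m$, with $m-1\notin M$, and with $m-1\notin M^*$ for every full type $(k,F^*,M^*,\sigma)$ that is an element of one of $T_1,\dots,T_l$, where $\tau=T_1\arr\dots\arr T_l\arr o$, then $c=0$.} The lemma then follows immediately: when $\ord(P)\le m-1$ the sort of $P$ has order at most $m-1$, so by the definition of the sets $\Tt^\alpha$ every full type occurring in $\tau$ has order at most $m-1$, hence its set of marker orders is contained in $\{0,\dots,m-2\}$ and in particular avoids $m-1$, so the third hypothesis of the strengthening holds automatically.

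For the induction itself, note first that in every rule the conclusion has $m-1\notin M$ by assumption, so the rule's own contribution $f'_m$ to the flag counter is $0$, and $c$ equals the sum of the flag counters of the premisses (for \ConR one additionally uses $m\ge1$, so that the order-$0$ flag that rule always places does not contribute). It therefore suffices to check, rule by rule, that every premiss whose flag counter enters that sum again satisfies the three hypotheses; then all those counters vanish by the induction hypothesis and so does $c$. This is immediate for \BrR (same $(m,F,M,\tau)$, same $\ord$) and for \VarR (the conclusion already has flag counter $0$). For \ConR the operands have sort $o$, hence order $0\le m$, their marker sets are subsets of $M$ since the rule uses a disjoint union $M=M'\uplus M_1\uplus\dots\uplus M_r$, and their type is $o$, so the third hypothesis is vacuous. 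For \LamR, writing the derived type as $T\arr\tau'$ and the premiss's marker set as $M_\mathrm{pre}$, the rule gives $M=M_\mathrm{pre}\setminus\bigcup_{(k,F',M',\sigma)\in T}M'$; since by hypothesis no full type in $T$ carries $m-1$ in its marker set, $m-1\notin M$ forces $m-1\notin M_\mathrm{pre}$, the premiss's type $\tau'$ has a subset of the top-level argument full types of $T\arr\tau'$, and the body has order at most $\ord(P)\le m$.

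The only case that genuinely needs the strengthening is \AppR applied to $P'\,Q$. Here the operator $P'$ may have order exactly $m$, so the lemma as originally stated cannot be applied to it; but the strengthened claim can, because the type derived for $P'$ is $\{(\ord(P'),F_i\restr_{<\ord(P')},M_i\restr_{<\ord(P')},\tau_i)\mid i\in I\}\arr\tau$, whose top-level argument full types either already occur in $\tau$ (fine by the hypothesis on $P'\,Q$) or have marker sets $M_i\restr_{<\ord(P')}\subseteq M_i\subseteq M$, which avoid $m-1$, while the marker set $M'$ of $P'$ satisfies $M'\subseteq M$ and so avoids $m-1$ too. For the operand premisses, from $\ord(P')\le m$ and $\ord(P')=\max(1+\ord(Q),\ord(P'\,Q))$ one gets $\ord(Q)\le m-1$, so, exactly as in the deduction of the lemma from the claim, the third hypothesis holds automatically for $Q$, and $m-1\notin M_i\subseteq M$; hence the claim applies to every premiss and $c=0$. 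I expect the main obstacle to be precisely this step: recognizing that operator subterms can carry order up to $m$, and isolating the invariant on $\tau$ (no order-$(m-1)$ marker among its argument full types) that is preserved by \LamR and \AppR and that blocks an order-$m$ $\lambda$-abstraction from hiding the order-$(m-1)$ marker. Once that invariant is in place, every case follows directly from the definitions of $\Split$, $\Comp_m$, and the typing rules.
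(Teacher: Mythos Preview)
Your proposal is correct and essentially identical to the paper's own proof. The strengthened invariant you isolate---$m-1\notin M$ and $m-1$ absent from the marker sets of all top-level argument full types of $\tau$---is exactly what the paper packages as the notion of a \emph{$(m-1)$-clear} full type; your observation that $\ord(P)\le m-1$ makes the second clause automatic is the paper's auxiliary Lemma~\ref{lem:low-is-m-1-clear}, and your case analysis matches the paper's Lemma~\ref{lem:zero-when-no-marker-ext} step for step.
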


	A simple inductive proof is based on the following idea: 
	flags of order $m$ are created only when a marker of order $m-1$ is visible;
	the derivation itself (together with free variables) does not provide it ($m-1\not\in M$), and the arguments, i.e.~sets $T_1,\dots,T_k$ in $\tau=T_1\arr\dots\arr T_k\arr o$, 
	may provide only markers of order at most $\ord(P)-1\leq m-2$ (see the definition of a type), thus no flags of order $m$ can be created.
	
	We say that a $\lambda$-term of the form $P\,Q$ is an application \emph{of order $n$} when $\ord(P)=n$, and that an \AppR rule is \emph{of order $n$} if it derives a type for an application of order $n$.
	We can successively remove applications of the maximal order from a type derivation.
	
	\begin{lemma}\label{lem:s-step}
		Suppose that $\Gammae\vdash P:\hat\rho_m\triangleright c$ for $m>0$ is derived by a derivation $D$ in which the \AppR rule of order $m$ is used $n$ times.
		Then there exists $Q$ such that $P\bred Q$ and $\Gammae\vdash Q:\hat\rho_m\triangleright c$ can be derived by a derivation $D'$ in which the \AppR rule of order $m$ is used less than $n$ times.
	\end{lemma}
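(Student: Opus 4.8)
The plan is to invert the argument of Lemma~\ref{lem:c-step}, performing a single $\beta$-reduction of order $m$ and showing it decreases the count of order-$m$ \AppR rules. First I would locate in the derivation $D$ an innermost application of order $m$, i.e.\ a subterm $R\,S$ with $\ord(R)=m$ such that no \AppR rule of order $m$ appears inside the subderivations for $R$ and $S$. Since $\ord(R)=m=\ord(\hat\rho_m)$ and all full types in the whole derivation have order $m$, the type derived for $R$ must be of the form $T\arr\tau$ with $\ord(T\arr\tau)=m$; in particular $R$ cannot be a variable (a variable of sort $\alpha\arr\beta$ gets a full type of order $\ord(\alpha\arr\beta)=m$ only at the \VarR leaf, but then $R\,S$ being of order-$m$ application is still fine — however $R$ inside a closed term of complexity $m$ with $\ord(R)=m$ and no order-$m$ redex above would have to reduce to a $\lambda$), so after possibly pushing the redex upward I may assume $R=\lambda x.R'$ and the node for $R$ uses the \LamR rule. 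Then $R\,S$ is a $\beta$-redex; let $Q$ be obtained from $P$ by contracting it, which is a $\beta$-reduction of order $m$ since $\ord(\lambda x.R')=m$.

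Next I would rebuild the derivation. The \AppR rule at the node for $R\,S$ has premisses $\Gamma'\vdash \lambda x.R' : (m,F',M',\{(\ord(R),F_i\restr_{<\ord(R)},M_i\restr_{<\ord(R)},\tau_i)\mid i\in I\}\arr\tau)\triangleright c'$ and $\Gamma_i\vdash S:(m,F_i,M_i,\tau_i)\triangleright c_i$ for $i\in I$, and the \LamR premiss gives $\Gamma''[x\mapsto U]\vdash R':(m,F'',M'',\tau)\triangleright c'$ with $\Split(\Gamma'\palka\Gamma'')$ and $U = \{(\ord(R),F_i\restr_{<\ord(R)},M_i\restr_{<\ord(R)},\tau_i)\mid i\in I\}$, the set $M'$ of markers of $\lambda x.R'$ being $M''$ minus the markers provided by $U$. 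I would substitute, into the subderivation for $R'$, at each of the leaves deriving $x$ with full type $(\ord(R),F_i\restr_{<\ord(R)},M_i\restr_{<\ord(R)},\tau_i)$, the corresponding subderivation for $S$ (with full type $(m,F_i,M_i,\tau_i)$); the flag/marker components of order $\geq\ord(R)=m$ carried by $S$ were exactly the ones passed through $\Comp_m$ at the \AppR node, and those of order $<m$ were the ones recorded in the $x$-leaf, so the bookkeeping matches up: splicing re-creates a valid derivation $D'$ of $\Gammae\vdash Q:\hat\rho_m\triangleright c$, exactly as in the proof of Lemma~\ref{lem:c-step} run backwards. Crucially, $D'$ has one fewer \AppR rule of order $m$ than $D$: we removed the \AppR node at $R\,S$, and we did not introduce any new one — the subderivations for $R'$ and for the $S_i$'s contained no order-$m$ \AppR by the innermost choice, and substitution of one derivation into the leaves of another creates no fresh application nodes. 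Hence the count drops from $n$ to at most $n-1$.

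The main obstacle I anticipate is the edge case analysis ensuring that an order-$m$ application with $R$ \emph{not} a $\lambda$-abstraction cannot be the ``innermost'' one we pick, or more precisely that picking innermost always lands on a genuine redex. If $R$ is an application $R_0\,R_1$ then $\ord(R_0)>\ord(R)=m$ is impossible (complexity is $m$), and $\ord(R_0)\le m$ forces $\ord(R_0)=m$ too, contradicting innermostness; if $R$ is a constant it has sort $o$, not an arrow sort, so $\ord(R)=m\ge 1$ is impossible; a $\br\,P_1\,P_2$ likewise has sort $o$. Thus the only remaining possibility at an innermost order-$m$ application node is $R=\lambda x.R'$ — or $R=x$ a free variable, but $P$ is closed, so after unwinding the $Y$'s any such $x$ is $\lambda$-bound above and the redex can be pushed to that binder, as in Lemma~\ref{lem:c-step}'s reduction of the general case to the base case. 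Handling this reduction-to-the-base-case cleanly (choosing the redex so that the contracted $\lambda$ has order exactly $m$) is the one spot needing care; everything else is the routine splicing already used in Lemmas~\ref{lem:c-step} and~\ref{lem:base}.
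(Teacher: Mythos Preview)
Your overall strategy---locate an innermost order-$m$ application, argue that its operator must be a $\lambda$-abstraction, contract the redex, and splice the $S$-subderivations into the $x$-leaves of the $R'$-subderivation---is the paper's strategy. But two essential ingredients are missing.

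First, the splicing is not a bijection. A full type in $T$ with empty marker set may be used at several $x$-leaves in the derivation for $R'$, or at none; conversely, several indices $i\in I$ may yield the same element of $T$. So ``the corresponding subderivation for $S$'' is ill-defined, and naively summing the $d_i$ does not recover the original flag counter. The paper resolves this via Lemma~\ref{lem:zero-when-no-marker}: since $\ord(S)\le m-1$, any premiss $\Gamma_i\vdash S:(m,F_i,M_i,\tau_i)\triangleright d_i$ with $M_i=\emptyset$ has $d_i=0$, so duplicating or discarding such subderivations is harmless for the flag counter. Without invoking this you cannot conclude that $c$ is preserved.

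Second, you treat ``the node for $R\,S$'' as unique, but in an intersection-type derivation the same subterm $R\,S$ of $P$ is typically typed at many nodes of $D$ (once for every copy created by \AppR rules above it). Contracting $(\lambda x.R')\,S$ in $P$ forces you to rebuild \emph{all} such nodes, and your innermost choice guarantees application counter zero only at one of them. If another node for $R\,S$ has an $S$-premiss whose subderivation still contains order-$m$ \AppR rules, splicing may duplicate those and the total count can increase rather than decrease. The paper handles this by generalising to a finite \emph{collection} of derivations for the same term and inducting on their total size: one recurses into a subterm until every derivation in the collection has all last-rule premisses with application counter zero, and only then contracts.

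A minor point: your handling of the case $R=x$ (``$P$ is closed, push to the binder'') is muddled. The clean reason $R$ cannot be a variable is that type environments occurring in $D$ assign types only to variables of order at most $m-1$ (every binder $\lambda y$ reached satisfies $\ord(\lambda y.\cdot)\le m$ by the side condition of \AppR, hence $\ord(y)<m$), so a variable of order $m$ cannot satisfy the \VarR rule.
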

	
	Recall from the definition of the type system that the \AppR rule of orders higher than $m$ cannot be used while deriving a full type of order $m$.
	Thus in $D$ we have type judgments only for subterms of $P$ of order at most $m$ (although $P$ may also have subterms of higher orders), 
	and in type environments we only have variables of order at most $m-1$.
	In order to prove Lemma \ref{lem:s-step} we choose in $P$ a subterm $R\,S$ with $\ord(R)=m$ such that there is a type judgment for $R\,S$ in some nodes of $D$ (at least one), 
	but no descendants of those nodes use the \AppR rule of order $m$.
	Since $R$ is of order $m$, it cannot be an application (then we would choose it instead of $R\,S$) nor a variable; thus $R=\lambda x.R'$.
	We obtain $Q$ by reducing the redex $(\lambda x.R')\,S$; the derivation $D'$ is obtained by performing a surgery on $D$ similar to that in the proof of Lemma \ref{lem:c-step} (but in the opposite direction).
	Notice that every full type $(m,F,M,\tau)$ (derived for $S$) with nonempty $M$ is used for exactly one appearance of $x$ in the derivation for $R'$;
	full types with empty $M$ may be used many times, or not used at all, but thanks to Lemma \ref{lem:zero-when-no-marker} duplicating or removing the corresponding derivations for $S$ does not change the flag counter.
	In the derivations for $R'[S/x]$ no \AppR rule of order $m$ may appear, and the application $R\,S$ disappears, so the total number of \AppR rules of order $m$ decreases.
	
	When all \AppR rules of order $m$ are eliminated, we can decrease $m$.

	\begin{lemma}\label{lem:s-zero}
		Suppose that $\Gammae\vdash P:\hat\rho_m\triangleright c$ for $m>0$ is derived by a derivation $D$ in which the \AppR rule of order $m$ is not used.
		Then we can also derive $\Gammae\vdash P:\hat\rho_{m-1}\triangleright c'$ for some $c'\geq c$.
	\end{lemma}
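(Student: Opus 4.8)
The plan is to take the given derivation $D$ and simply \emph{lower its order by one}: replace every full type $(m,F,M,\tau)$ occurring in $D$ by $(m-1,F\setminus\{m-1\},M\setminus\{m-1\},\tau)$, and recompute every flag counter using $\Comp_{m-1}$ in place of $\Comp_m$. First I would observe that this operation is even meaningful. Since $D$ uses no \AppR rule of order $m$, a straightforward top-down induction shows that every type judgment in $D$ concerns a subterm of order at most $m-1$, and every variable occurring in a type environment of $D$ has order at most $m-1$ (a subterm of order $m$ could only be an application, a variable, or a $\lambda$-abstraction of order $m$, and all three are excluded). In particular no $\lambda$-abstraction of order $m$ appears, so no type whose sort forces arguments of order $m$ is ever built; and in every \VarR node the set $F$ of the conclusion already lies in $\{0,\dots,k-1\}$ with $k\le m-1$, hence does not contain $m-1$.

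Next I would check that the transformed object $D'$ is a legal derivation of $\Gammae\vdash P:\hat\rho_{m-1}\triangleright c'$ for the resulting $c'$. The conclusion is right because $\hat\rho_m$ turns into $(m-1,\emptyset,\{0,\dots,m-2\},o)=\hat\rho_{m-1}$. The rule-by-rule checks rest on three facts: (i) the full types inside type environments, and the argument full types appearing inside a type $T\arr\tau$ (which in \AppR are moreover restricted to indices $<\ord(P)\le m-1$), all have order at most $m-1$ and are therefore left literally unchanged by deleting $m-1$, so every $\Split$ side condition still holds; (ii) deleting $m-1$ from a disjoint union $M'\uplus M_1\uplus\dots$ again gives a disjoint union, and the side conditions tying $M$, $M'$ and $M\restr_{<k}$ together involve only indices below $m-1$, so they are preserved; (iii) on inputs with $m-1$ removed, $\Comp_{m-1}$ reproduces, for every index $n\le m-2$, exactly the numbers $f_n$ that $\Comp_m$ computed, so the set of flag orders it returns is the old one minus $m-1$, which matches the new conclusion. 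Hence all rule applications of $D'$ are valid, and the new flag counters are whatever $\Comp_{m-1}$ outputs.

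The heart of the argument is the inequality $c'\ge c$. Let $x_0$ be the unique node of $D$ placing the order-$(m-1)$ marker and let $\pi$ be the path from $x_0$ down to the root; since free variables can only supply, and \LamR can only remove, markers of order strictly below the order of the relevant variable or $\lambda$-abstraction, which is at most $m-1$, the condition $m-1\in M$ holds exactly along $\pi$. Consequently order-$m$ flags are created only at nodes of $\pi$, and the number created at such a node $y$ is $f_{m-1}$, which equals the number of order-$(m-1)$ flags \emph{created at} $y$ plus the number of premisses of $y$ reporting $m-1$ in their $F$-component; so $c$ equals the sum over $\pi$ of these quantities. On the other hand, because $m-1$ is the top order of the new system, $\Comp_{m-1}$ does no merging at that order and $c'$ is simply the total number of order-$(m-1)$ flags created anywhere in $D$, a quantity untouched by the transformation. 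A premiss reporting $m-1$ in its $F$-component cannot sit on $\pi$, so it heads a subtree disjoint from $\pi$ and from every other such subtree, and that subtree necessarily contains at least one place where an order-$(m-1)$ flag is created. Charging each order-$m$ flag either to the order-$(m-1)$ flag created at the same $\pi$-node, or to one inside such an off-$\pi$ subtree, is injective into the set of order-$(m-1)$ flags, whence $c\le c'$.

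The main obstacle is exactly this charging: one has to pin down how $\Comp$ collapses order-$(m-1)$ information off the path $\pi$ (arbitrarily many flags reported upward as a single bit) while accumulating it additively once it becomes the top order, and see that the discrepancy between the two accountings is precisely the flags merged off $\pi$. A secondary point needing care is the boundary case $m=1$: there the new top order is $0$, where order-$0$ flags are produced directly by \ConR rather than through the $\Comp$-recursion; but then the absence of order-$1$ applications forces $P$ to be a finite $\Sigma$-tree with no variables and no $\lambda$-abstractions, so $D$ consists only of \ConR and \BrR rules, $c$ is at most the number of \ConR nodes, and $c'$ equals that number, making the inequality immediate.
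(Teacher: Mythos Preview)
Your transformation of the derivation is the same as the paper's, and your rule-by-rule verification that the transformed tree is a valid derivation of $\Gammae\vdash P:\hat\rho_{m-1}\triangleright c'$ is correct (the observations about orders of subterms, of full types in environments, and of the argument sets $T$ are exactly what is needed).

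Where you diverge from the paper is in the proof of $c'\geq c$. The paper does \emph{not} isolate the marker path $\pi$ or perform a charging argument; instead it proves, by a straightforward structural induction on $D$, the strengthened statement that from any $\Gamma\vdash R:(m,F,M,\tau)\triangleright c\succ 0$ (with $\ord(R)\leq m-1$ and all full types in $\Gamma$ of order at most $m-1$) one can derive $\Gamma\vdash R:(m-1,F\restr_{<m-1},M\restr_{<m-1},\tau)\triangleright c'$ with $c'\geq c+|F\cap\{m-1\}|$. The extra term $|F\cap\{m-1\}|$ is precisely the inductive load that makes the step through $\Comp$ go through: a small computation (the paper's Lemma on $\Comp$) shows that if each premiss satisfies $c_i'\geq c_i+|F_i\cap\{m-1\}|$ then the conclusion does too. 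This also absorbs the boundary case $m=1$ uniformly, via the same $\Comp$ lemma applied to the extra pair $(F',c')$ of the \ConR rule.

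Your global charging is a legitimate alternative and arguably explains more vividly why the inequality holds: every order-$m$ flag is traced either to an order-$(m-1)$ flag created on the spot, or to one hidden in an off-$\pi$ subtree that reports $m-1\in F$, and these targets are pairwise distinct. The paper's approach, by contrast, never needs to name $\pi$ or argue disjointness of subtrees; the invariant $c'\geq c+|F\cap\{m-1\}|$ packages the whole accounting into a single local inequality that propagates mechanically. Both arguments are correct.
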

	
	The proof is easy; we simply decrease the order $m$ of all derived full types by $1$, and we ignore flags of order $m$ and markers of order $m-1$.
	To obtain the inequality $c'\geq c$ we observe that when no \AppR rule of order $m$ is used, the information about flags of order $m-1$ goes only from descendants to ancestors,
	and thus every flag of order $m$ is created because of a different flag of order $m-1$.

	By repeatedly applying the two above lemmata, out of a derivation of $\Gammae\vdash P:\hat\rho_m\triangleright c$ we obtain a derivation of $\Gammae\vdash Q:\hat\rho_0\triangleright c'$, where $P\bred^*Q$ and $c'\geq c$.
	Since $\hat\rho_0$ is of order $0$, using the latter derivation it is easy to find in the already expanded part of $Q$ (and thus in $\Ll(Q)=\Ll(P)$) a tree $t$ such that $|t|=c'\geq c$.

\section{Effectiveness}\label{sec:effective}
	
	Finally, we show how Theorem \ref{thm:main} follows from Theorem \ref{thm:types-ok}, i.e., how given a \lY-term $P$ of complexity $m$ we can check whether $\Gammae\vdash P:\hat\rho_m\triangleright c$ can be derived for arbitrarily large $c$.
	We say that two type judgments are equivalent if they differ only in the value of the flag counter.
	\label{page:effective-def}
	Let us consider a set $\Dd$ of all derivations of $\Gammae\vdash P:\hat\rho_m\triangleright c$ in which on each branch (i.e., each root-leaf path) there are at most three type judgments from every equivalence class,
	and among premisses of each \AppR rule there is at most one type judgment from every equivalence class.
	These derivations use only type judgments $\Gamma\vdash Q:\hat\tau\triangleright d$ with $Q$ being a subterm of $P$ and with $\Gamma(x)\neq\emptyset$ only for variables $x$ appearing in $P$.
	Since a finite \lY-term, even when seen as an infinitary $\lambda$-term, has only finitely many subterms,
	this introduces a common bound on the height of all derivations in $\Dd$, and on their degree (i.e., on the maximal number of premisses of a rule).
	It follows that there are only finitely many derivations in $\Dd$, and thus we can compute all of them.
	
	We claim that $\Gammae\vdash P:\hat\rho_m\triangleright c$ can be derived for arbitrarily large $c$ if and only if in $\Dd$ there is a derivation in which on some branch
	there are two equivalent type judgments with different values of the flag counter (and the latter condition can be easily checked).
	Indeed, having such a derivation, we can repeat its fragment between the two equivalent type judgments,
	obtaining derivations of $\Gammae\vdash P:\hat\rho_m\triangleright c$ with arbitrarily large $c$.
	We use here an additivity property of our type system: if out of $\Gamma\vdash Q:\hat\tau\triangleright d$ we can derive $\Gamma'\vdash Q':\hat\tau'\triangleright d'$, 
	then out of $\Gamma\vdash Q:\hat\tau\triangleright d+k$	we can derive $\Gamma'\vdash Q':\hat\tau'\triangleright d'+k$, for every $k\geq-d$.
	Conversely, take a derivation of $\Gammae\vdash P:\hat\rho_m\triangleright c$ for some large enough $c$.
	Suppose that some of its \AppR rules uses two equivalent premisses.
	These premisses concern the argument subterm, which is of smaller order than the operator subterm, and thus of order at most $m-1$. 
	The set of marker orders in these premisses has to be empty, as the sets of marker orders from all premisses have to be disjoint.
	Thus, by Lemma \ref{lem:zero-when-no-marker}, the flag counter in our two premisses is $0$.
	\label{page:effective-narrow}
	In consequence, we can remove one of the premisses, without changing anything in the remaining part of the derivation, even the flag counters.
	In this way we clean the whole derivation, so that at the end among premisses of each \AppR rule there is at most one type judgment from every equivalence class.
	The degree is now bounded, and at each node the flag counter grows only by a constant above the sum of flag counters from the children.
	Thus, if $c$ is large enough, we can find on some branch two equivalent type judgments with different values of the flag counter.
	Then, for some pairs of equivalent type judgments, we remove the part of the derivation between these type judgments (and we adopt appropriately the flag counters in the remaining part).
	It it not difficult to perform this cleaning so that the resulting derivation will be in $\Dd$, and simultaneously on some branch there will remain two equivalent type judgments with different values of the flag counter.

\section{Conclusions}
	In this paper, we have shown an approach for expressing quantitative properties of B\"ohm trees using an intersection type system, on the example of the finiteness problem.
	It is an ongoing work to apply this approach to the diagonal problem, which should give a better complexity than that of the algorithm from \cite{downward-closure}.
	Another ongoing work is to obtain an algorithm for model checking B\"ohm trees with respect to the Weak MSO+U logic \cite{DBLP:conf/stacs/BojanczykT12}.
	This logic extends Weak MSO by a new quantifier U, expressing that a subformula holds for arbitrarily large finite sets.
	Furthermore, it seems feasible that our methods may help in proving a pumping lemma for nondeterministic HORSes.

\bibliographystyle{eptcs}
\bibliography{bib}

\newpage\appendix
\section{Proof of Lemma \ref{lem:base}}

	Let us write $P\approx_n P'$ if the $\lambda$-terms $P$ and $P'$ agree up to depth $n\in\Nat$.
	Formally, $\approx_n$ is defined by induction on $n$ as the smallest equivalence relation such that:
	\begin{itemize}
	\item	$P\approx_0 Q$ for all $\lambda$-terms $P,Q$,
	\item	$a\,P_1\,\dots\,P_r\approx_na\,P_1'\,\dots\,P_r'$ if $P_i\approx_{n-1}P_i'$ for all $i\in\{1,\dots,r\}$,
	\item	$P\,Q\approx_n P'\,Q'$ if $P\approx_{n-1} P'$ and $Q\approx_{n-1}Q'$, and
	\item	$\lambda x.P\approx_n \lambda x.P'$ if $P\approx_{n-1} P'$.
	\end{itemize}
	Observe that $P\approx_n P'$ implies $P\approx_kP'$ for $k<n$.

	We split the proof of Lemma \ref{lem:base} into several lemmata.
	
	\begin{lemma}\label{lem:base-type}
		Let $P$ be a closed $\lambda$-term of sort $o$, and let $t\in\Ll(P)$.
		Then there exists a number $n\in\Nat$ and a $\lambda$-term $Q$ such that $P\bred^*Q$, and whenever $Q\approx_n Q'$ and $Q'\bred^*Q''$ for some $\lambda$-terms $Q'$, $Q''$, then one can derive
		$\Gammae\vdash Q'':\hat\rho_0\triangleright|t|$.
	\end{lemma}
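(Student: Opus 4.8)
The plan is to argue by induction on the size of the finite portion of $\BT(P)$ lying on or above a fixed embedding of $t$ into $\BT(P)$. Since $t$ is finite and $\br$-free, its nodes correspond to finitely many, boundedly deep nodes of $\BT(P)$, so this measure is a well-defined natural number; it is the right measure because the $\br$-case below leaves $|t|$ unchanged but strictly decreases this quantity. Note first that $t\in\Ll(P)$ forces $P$ to have a head normal form: otherwise $\BT(P)$ is the full binary $\br$-tree, which $\to_\br$ never turns into a $\br$-free tree, so $\Ll(P)=\emptyset$. Thus $P\bred^*a\,P_1\,\dots\,P_r$ for some symbol $a$ and some $\lambda$-terms $P_1,\dots,P_r$, which, being arguments of a symbol of rank $r$ inside a closed term of sort $o$, are themselves closed of sort $o$.

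Case $a=\br$ (so $r=2$): as $t$ is $\br$-free, $t\in\Ll(P_1)$ or $t\in\Ll(P_2)$, say the former. Apply the induction hypothesis to $P_1$, obtaining $n_1\in\Nat$ and $Q_1'$ with $P_1\bred^*Q_1'$ and the claimed property; set $Q=\br\,Q_1'\,P_2$ and $n=n_1+1$, so that $P\bred^*\br\,P_1\,P_2\bred^*Q$. Case $a\neq\br$: then $t=a\,t_1\,\dots\,t_r$ with $t_i\in\Ll(P_i)$ for each $i$ ($r=0$ being the base case, where $t$ is a single node labelled $a$). Apply the induction hypothesis to each $P_i$, obtaining $n_i$ and $Q_i'$ with $P_i\bred^*Q_i'$; set $Q=a\,Q_1'\,\dots\,Q_r'$ and $n=1+\max_i n_i$ ($n=1$ if $r=0$). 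In both cases $P\bred^*Q$.

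Now fix $Q'$ and $Q''$ with $Q\approx_n Q'$ and $Q'\bred^*Q''$. Since $n\geq1$ and $Q$ has head symbol $a$, $\approx_n$ forces $Q'=a\,R_1\,\dots\,R_r$ with each $R_i$ agreeing up to depth $n-1$ with the $i$-th argument of $Q$; and since a head symbol is preserved by $\beta$-reduction, $Q''=a\,R_1'\,\dots\,R_r'$ with $R_i\bred^*R_i'$. If $a=\br$, then $R_1\approx_{n-1}Q_1'$, hence $R_1\approx_{n_1}Q_1'$ (as $n-1\geq n_1$, by monotonicity of $\approx$), so the induction hypothesis for $P_1$ gives a derivation of $\Gammae\vdash R_1':\hat\rho_0\triangleright|t|$, and one use of the \BrR rule (selecting the first subterm) yields $\Gammae\vdash Q'':\hat\rho_0\triangleright|t|$. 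If $a\neq\br$, then likewise $R_i\approx_{n_i}Q_i'$, so the induction hypothesis for each $P_i$ gives $\Gammae\vdash R_i':\hat\rho_0\triangleright|t_i|$, and a single use of the \ConR rule (with all type environments empty) combines them: here $m=0$, so the rule places one order-$0$ flag ($F'=\emptyset$, $c'=1$), and $\Comp_0$, receiving only empty flag sets and the empty set of markers, returns $F=\emptyset$ and simply adds up the flag counters it is given, namely $c'=1$ and the $c_i=|t_i|$; hence the conclusion has flag counter $1+\sum_i|t_i|=|t|$, i.e.\ $\Gammae\vdash Q'':\hat\rho_0\triangleright|t|$ (for $r=0$ this is $\Gammae\vdash a:\hat\rho_0\triangleright1$, and $|t|=1$). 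The derivation built this way uses no variable, so it does not matter whether $Q''$ is closed.

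The difficulty is organisational rather than conceptual. The two things to get right are the well-founded measure (plain induction on $|t|$ breaks at $\br$-nodes, so one must count how much of $\BT(P)$ sits on or above $t$, which does strictly decrease in every case), and the interplay of the chosen $n$ and $Q$ with the two perturbations in the hypothesis. The second point reduces to two standard facts — $\approx_n$ determines a term's shape down to depth $n$, and head symbols survive $\beta$-reduction — plus the one-line evaluation of $\Comp_0$ identifying the flag counter with $|t|$. I expect the most laborious (though still routine) bookkeeping to be verifying that the measure decreases in all cases and that the $n_i$ and $Q_i'$ assemble correctly into $n$ and $Q$.
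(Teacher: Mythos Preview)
Your proposal is correct and follows essentially the same approach as the paper: head-normalise, split on whether the head symbol is $\br$, recurse into the relevant argument(s), and reassemble with a single \BrR or \ConR rule, reading off the flag counter from $\Comp_0$. Your induction measure (nodes of $\BT(P)$ on or above the embedded $t$) coincides with the paper's $|t|+k$, where $k$ is the minimal number of $\to_\br$ steps from $\BT(P)$ to $t$; and your choice $n=1+\max_i n_i$ is a harmless tightening of the paper's $n=1+\sum_i n_i$.
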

	
	\begin{proof}
		Let $\to_\br^k$ be the relation obtained by composing $\to_\br$ with itself $k$ times.
		By definition of $\Ll(P)$ we have that $\BT(P)\to_\br^*t$, and thus $\BT(P)\to_\br^kt$ for some $k\in\Nat$.
		We prove the lemma by induction on $|t|+k$, where $k$ is the smallest number such that $\BT(P)\to_\br^kt$.
		Because $\Ll(P)\neq\emptyset$, we have that $P\bred^*a\,P_1\,\dots\,P_r$ (where possibly $a=\br$).
		Then $\BT(P)$ has $a$ in its root, and the subtrees starting in root's children are $\BT(P_1),\dots,\BT(P_r)$.
		We have two cases.
		
		Suppose first that $a\neq\br$ (this case serves as the induction base when $r=0$).
		Then also $t$ has $a$ in its root, and the subtrees $t_1,\dots,t_r$ starting in root's children are such that $\BT(P_i)\to_\br^{k_i}t_i$ for all $i\in\{1,\dots,r\}$,
		where $k_1+\dots+k_r=k$.
		We have $|t_i|+k_i<|t|+k$, since $|t_i|<|t|$.
		By the induction assumption, for every $i\in\{1,\dots,r\}$ we obtain a number $n_i$ and a $\lambda$-term $Q_i$
		such that $P_i\bred^*Q_i$, and whenever $Q_i\approx_{n_i} Q_i'$ and $Q_i'\bred^*Q_i''$ for some $\lambda$-terms $Q'_i$, $Q_i''$, then one can derive $\Gammae\vdash Q''_i:\hat\rho_0\triangleright|t_i|$.
		Taking $Q=a\,Q_1\,\dots\,Q_r$ we have $P\bred^*a\,P_1\,\dots\,P_r\bred^*a\,Q_1\,\dots\,Q_r=Q$.
		As $n$ we take $1+n_1+\dots+n_r$.
		Let now $Q'$ and $Q''$ be such that $Q\approx_n Q'$ and $Q'\bred^*Q''$.
		Then $Q'=a\,Q_1'\,\dots\,Q_r'$ and $Q''=a\,Q_1''\,\dots\,Q_r''$, where $Q_i\approx_{n-1}Q_i'$ (thus also $Q_i\approx_{n_i}Q_i'$) and $Q_i'\bred^*Q_i''$ for $i\in\{1,\dots,r\}$.
		From the induction assumption we obtain derivations of $\Gammae\vdash Q_i'':\hat\rho_0\triangleright|t_i|$.
		Recall that $\hat\rho_0=(0,\emptyset,\emptyset,o)$.
		We apply the \ConR rule to these derivations.
		Since $\ord(\hat\rho)=0$, the pair $(F',c')$ appearing in the rule's definition equals $(\emptyset,1)$.
		The $\Comp_0$ predicate simply adds flag counters from its arguments, and hence the resulting flag counter is $1+|t_1|+\dots+|t_r|$, which equals $|t|$.
		Thus the resulting type judgment is $\Gammae\vdash Q'':\hat\rho_0\triangleright|t|$, as required.
		
		Next, suppose that $a=\br$ (and hence $r=2$).
		It should be clear that in the shortest reduction sequence $\BT(P)\to_\br^k t$ we can rearrange reductions (without increasing their number)
		so that we first eliminate the $\br$ symbol from the root of $\BT(P)$.
		In other words, we have $\BT(P)\to_\br\BT(P_i)\to_\br^{k-1}t$, for some $i\in\{1,2\}$.
		Let us focus our attention on the case of $i=1$; the case of $i=2$ is completely symmetric.
		Since $|t|+k-1<|t|+k$, from the induction assumption we obtain a number $n_1$ and a $\lambda$-term $Q_1$
		such that $P_1\bred^* Q_1$, and whenever $Q_1\approx_n Q_1'$ and $Q_1'\bred^*Q_1''$ for some $\lambda$-terms $Q_1'$, $Q_1''$, then one can derive $\Gammae\vdash Q''_1:\hat\rho_0\triangleright|t|$.
		Taking $Q=\br\,Q_1\,P_2$ we have $P\bred^*\br\,P_1\,P_2\bred^*\br\,Q_1\,P_2=Q$.
		As $n$ we take $1+n_1$.
		Let now $Q'$, $Q''$ be such that $Q\approx_n Q'$ and $Q'\bred^*Q''$.
		Then $Q'=\br\,Q_1'\,Q_2'$ and $Q''=\br\,Q_1''\,Q_2''$, where $Q_1\approx_{n-1}Q_1'$ and $Q_1'\bred^*Q_1''$.
		By the induction assumption we can derive $\Gammae\vdash Q_1'':\hat\rho_0\triangleright|t|$, which after applying the \BrR rule gives $\Gammae\vdash Q'':\hat\rho_0\triangleright|t|$.
	\end{proof}
	
	\begin{lemma}\label{lem:subst-approx}
		If $P\approx_nP'$ and $Q\approx_nQ'$ for some $n\in\Nat$, then also $P[Q/x]\approx_nP'[Q'/x]$.
	\end{lemma}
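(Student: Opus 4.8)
The statement is $P\approx_nP'$ and $Q\approx_nQ'$ imply $P[Q/x]\approx_nP'[Q'/x]$, where $\approx_n$ is the "agree up to depth $n$" equivalence defined just above by induction on $n$. The natural approach is a straightforward induction on $n$, following the inductive definition of $\approx_n$. The base case $n=0$ is immediate, since $\approx_0$ relates all $\lambda$-terms. For the inductive step, one assumes the claim for $n-1$ and analyzes the derivation of $P\approx_nP'$ according to which of the four generating clauses was used; note that since $\approx_n$ is defined as the \emph{smallest equivalence relation} closed under those clauses, some care is needed about the equivalence-closure, but in fact one can first prove the statement for the "one-step" version and then observe it is preserved by reflexivity, symmetry, and transitivity (the last being the only nontrivial one, and it follows from transitivity of $\approx_n$ itself plus the induction hypothesis applied twice).

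The cases of the inductive step then mirror the clauses. If $P=a\,P_1\dots P_r$ and $P'=a\,P_1'\dots P_r'$ with $P_i\approx_{n-1}P_i'$, then since the head symbol is of sort $o$ and the $P_i$ are of sort $o$, substitution commutes with the head: $P[Q/x]=a\,(P_1[Q/x])\dots(P_r[Q/x])$ and similarly for $P'$; from $Q\approx_nQ'$ we get $Q\approx_{n-1}Q'$, so the induction hypothesis gives $P_i[Q/x]\approx_{n-1}P_i'[Q'/x]$ for each $i$, and the clause for head symbols yields the claim at depth $n$. The application case $P=P_1\,P_2$, $P'=P_1'\,P_2'$ with $P_i\approx_{n-1}P_i'$ is analogous: substitution distributes over application, $Q\approx_{n-1}Q'$, the induction hypothesis applies to both components, and the application clause concludes. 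The abstraction case $P=\lambda y.P_1$, $P'=\lambda y.P_1'$ with $P_1\approx_{n-1}P_1'$ requires a short remark about capture-avoidance: since we identify $\lambda$-terms up to renaming of bound variables (as stated in the Preliminaries), we may assume $y\neq x$ and $y$ not free in $Q$ or $Q'$, so that $(\lambda y.P_1)[Q/x]=\lambda y.(P_1[Q/x])$; then $Q\approx_{n-1}Q'$ and the induction hypothesis give $P_1[Q/x]\approx_{n-1}P_1'[Q'/x]$, and the abstraction clause finishes. Finally, there is the case in which $P$ itself is (or the relevant subterm reaches) the variable $x$: here $P[Q/x]=Q\approx_nQ'=P'[Q'/x]$ directly; more generally any variable $y$ distinct from $x$ is unchanged by the substitution and $y\approx_n y$ trivially.

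The only genuinely delicate point is the interplay between the equivalence-closure in the definition of $\approx_n$ and the case analysis: one cannot assume $P$ and $P'$ have the "same shape" outright, since $\approx_n$ was closed under transitivity. The clean way around this is to phrase the induction so that the statement being proved is closed under the operations generating $\approx_n$: reflexivity is trivial, symmetry is symmetric, and transitivity follows because if $P[Q/x]\approx_nP''[Q''/x]$ and $P''[Q''/x]\approx_nP'[Q'/x]$ then $P[Q/x]\approx_nP'[Q'/x]$ by transitivity of $\approx_n$ — here we need an intermediate term $Q''$ with $Q\approx_nQ''\approx_nQ'$, which exists (e.g.\ take $Q''=Q$ or $Q''=Q'$, using that $\approx_n$ is already known to be transitive). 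Once this is set up, the structural cases above go through without further obstruction, and the proof is genuinely routine; I expect no hidden difficulty beyond bookkeeping the bound-variable convention in the abstraction case.
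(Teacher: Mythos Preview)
Your proposal is correct and follows essentially the same approach as the paper: induction on $n$, with the base case $n=0$ trivial and the inductive step by case analysis on the shape of $P$. The paper's proof is two lines shorter because it does not worry about the equivalence closure: for $n>0$ the generating clauses (together with reflexivity for variables) already define an equivalence relation, since $\approx_{n-1}$ is one by induction, so transitivity adds nothing and one may directly assume $P$ and $P'$ have the same outermost shape. Your extra discussion of closure under transitivity is therefore unnecessary, but harmless.
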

	
	\begin{proof}
		Induction on $n$.
		For $n=0$ the lemma is obvious: $\approx_0$ always holds.
		When $n>0$ and $P=R\,S$, then $P'=R'\,S'$ with $R\approx_{n-1}R'$ and $S\approx_{n-1}S'$.
		By the induction assumption we have $R[Q/x]\approx_{n-1}R'[Q'/x]$ and $S[Q/x]\approx_{n-1}S'[Q'/x]$, and thus $P[Q/x]\approx_nP'[Q'/x]$.
		The cases when $P=a\,P_1\,\dots\,P_r$ or $P=\lambda y.Q$ are similar.
		Finally, when $P=P'$ is a variable, the thesis follows immediately from $Q\approx_n Q'$.
	\end{proof}
	
	\begin{lemma}\label{lem:beta-approx}
		If $P\approx_{n+2}P'$ and $P\bred Q$, then for some $Q'$ we have $P'\bred^*Q'$ and $Q\approx_nQ'$.
	\end{lemma}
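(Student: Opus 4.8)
The plan is to argue by induction on $n$, splitting into cases according to the position in $P$ of the contracted redex. The base case $n=0$ is immediate, since $\approx_0$ holds between any two $\lambda$-terms: we may take $Q'=P'$, using an empty reduction sequence. So suppose $n\geq 1$ and that the statement holds for $n-1$.

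First I would handle the case where the redex occupies the root of $P$, that is, $P=(\lambda x.R)\,S$ and $Q=R[S/x]$. Unfolding $P\approx_{n+2}P'$ through the application node and then through the $\lambda$ node (and recalling that $\approx_{n+1}$ implies $\approx_n$, and that we work up to renaming of bound variables), we learn that $P'$ has the form $(\lambda x.R')\,S'$ with $R\approx_n R'$ and $S\approx_n S'$. By Lemma~\ref{lem:subst-approx} this gives $R[S/x]\approx_n R'[S'/x]$, so we may take $Q':=R'[S'/x]$, reached from $P'$ by the single step $P'\bred Q'$.

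The remaining cases are those in which the redex lies strictly below the root; here I would peel off the outermost constructor of $P$ and recurse. If $P=a\,P_1\,\dots\,P_r$ then $P'=a\,P_1'\,\dots\,P_r'$ with $P_j\approx_{n+1}P_j'$ for all $j$, and the contracted redex sits inside one $P_i$, say $P_i\bred\tilde P_i$ with $Q=a\,P_1\,\dots\,\tilde P_i\,\dots\,P_r$. Since $P_i\approx_{(n-1)+2}P_i'$, the induction hypothesis yields $P_i'\bred^*\tilde P_i'$ with $\tilde P_i\approx_{n-1}\tilde P_i'$; then $Q':=a\,P_1'\,\dots\,\tilde P_i'\,\dots\,P_r'$ satisfies $P'\bred^*Q'$ and $Q\approx_n Q'$, the latter because every component of $Q$ agrees up to depth $n-1$ with the corresponding component of $Q'$ (the unchanged ones via $\approx_{n+1}\Rightarrow\approx_{n-1}$, the changed one by the induction hypothesis). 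The case $P=P_1\,P_2$ with the redex inside $P_1$ or inside $P_2$ (note that if the redex were the whole term then $P_1$ would be a $\lambda$-abstraction, which is the root case already treated), and the case $P=\lambda y.R$ with the redex inside $R$, are entirely analogous: decompose, apply the induction hypothesis at index $n-1$ to the immediate subterm containing the redex, and reassemble. There is no redex inside a variable, so these cases are exhaustive.

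I do not expect a real obstacle: this is a routine structural induction. The only things to watch are the bookkeeping of the depth indices — uniformly using the monotonicity $P\approx_n P'\Rightarrow P\approx_k P'$ for $k<n$ to push the various $\approx_{n+1}$ hypotheses down to $\approx_{n-1}$ where the induction hypothesis needs them — and keeping the root-redex case cleanly disjoint from the ``redex in a proper subterm of an application'' case so that no redex is handled twice or missed.
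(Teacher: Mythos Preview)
Your proposal is correct and matches the paper's proof essentially step for step: induction on $n$, trivial base case with $Q'=P'$, the root-redex case via Lemma~\ref{lem:subst-approx}, and the deeper-redex cases by peeling off one constructor and invoking the induction hypothesis at $n-1$. The paper only spells out one representative deeper-redex case (application with the redex in the left subterm), whereas you list them all, but otherwise the arguments are the same.
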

	
	\begin{proof}
		Induction on $n$.
		If $n=0$, the thesis holds for $Q'=P'$.
		Suppose that $n>0$ and $P=(\lambda x.R)\,S$ and $Q=R[S/x]$.
		Then $P'=(\lambda x.R')\,S'$, where $R\approx_nR'$ and $S\approx_{n+1}S'$.
		Taking $Q'=R'[S'/x]$ we have $P'\bred Q'$, and by Lemma \ref{lem:subst-approx} $Q\approx_nQ'$.
		The remaining case is that $n>0$ and the redex involved in the $\beta$-reduction $P\bred Q$ is not located on the front of $P$.
		Then the thesis follows from the induction assumption.
		Let us consider only a representative example: suppose that $P=R\,S$, and $Q=T\,S$, and $R\bred T$.
		In this case $P'=R'\,S'$ with $R\approx_{n+1}R'$ and $S\approx_{n+1}S'$.
		The induction assumption gives us $T'$ such that $R'\bred^*T'$ and $T\approx_{n-1}T'$.
		Thus for $Q'=T'S'$ we have $P'\bred^*Q'$ and $Q\approx_nQ'$.
	\end{proof}

	\begin{lemma}\label{lem:finite-cut}
		For every $n\in\Nat$, we can represent every $\lambda$-term $P$ as $P=P'[S_1/x_1,\dots,S_s/x_s]$ so that $P'$ is finite and $P\approx_nP'$.
	\end{lemma}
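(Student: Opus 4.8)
I would prove the lemma by induction on $n$, constructing $P'$ (and the accompanying substitution) from the outermost constructors of $P$. For the base case $n=0$ there is nothing to do: take $P'$ to be a single fresh variable $x^\alpha$, where $\alpha$ is the sort of $P$, together with the substitution $[P/x^\alpha]$; then $P'$ is finite, $P=P'[P/x^\alpha]$, and $P\approx_0P'$ holds vacuously.

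For the inductive step I would case on the shape of $P$. If $P=a\,P_1\,\dots\,P_r$ with $a$ a symbol (including $\br$, and allowing $r=0$), apply the induction hypothesis to each $P_i$ at level $n$, obtaining $P_i=P_i'[\bar S_i/\bar x_i]$ with $P_i'$ finite and $P_i\approx_nP_i'$; after renaming the substituted (``cut-point'') variables so that the lists $\bar x_1,\dots,\bar x_r$ are pairwise disjoint and avoid the variables occurring in $P$, set $P'=a\,P_1'\,\dots\,P_r'$ and take the concatenation of the substitutions. Since the variables of $\bar x_j$ do not occur in $P_i'$ for $j\neq i$, applying the combined substitution to $P'$ reproduces $P$, and $P\approx_{n+1}P'$ follows directly from $P_i\approx_nP_i'$ and the definition of $\approx_{n+1}$. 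The case $P=P_1\,P_2$ is entirely analogous. If $P=\lambda x.P_1$, apply the induction hypothesis to $P_1$ at level $n$ to get $P_1=P_1'[\bar S/\bar x]$ with $P_1'$ finite and $P_1\approx_nP_1'$; after $\alpha$-renaming $x$ if necessary so that it does not occur among $\bar x$, put $P'=\lambda x.P_1'$ with the same substitution, so that carrying out the substitution inside the scope of $\lambda x$ gives $\lambda x.P_1=P$, and again $P\approx_{n+1}P'$ is immediate. If $P$ is a variable it is already finite, so $P'=P$ with the empty substitution works.

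The only delicate point — and it is bookkeeping rather than a genuine obstacle — is the handling of bound variables in the $\lambda$-case: a subtree of $P$ that is cut off below a $\lambda$-binder may mention the bound variable, so the equality $P=P'[\bar S/\bar x]$ must be understood with the substitution performed in place, within the scope of the binders of $P'$; once the cut-point variables are chosen disjoint from all binders (possible since there are infinitely many variables of each sort and we identify $\lambda$-terms up to renaming of bound variables), everything goes through routinely. This lemma, together with Lemmata~\ref{lem:base-type} and~\ref{lem:beta-approx}, supplies the last ingredient for Lemma~\ref{lem:base}: replace $P$ by a finite term agreeing with it up to a sufficiently large depth (Lemma~\ref{lem:finite-cut}), reorganise the finitely many $\beta$-reductions of that finite term by decreasing order, and transfer the resulting derivation of $\Gammae\vdash Q_0:\hat\rho_0\triangleright|t|$ back to $P$ using Lemma~\ref{lem:beta-approx} and Lemma~\ref{lem:base-type}.
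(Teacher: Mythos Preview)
Your proof is correct and follows essentially the same route as the paper's: induction on $n$, with the base case cutting immediately to a fresh variable, and the inductive step recursing into the immediate sub\-terms and combining the resulting substitutions with disjoint fresh variables. The paper only spells out the application case and leaves the others (including the $\lambda$-case and its capture bookkeeping, which you rightly flag) as ``similar''.
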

	
	\begin{proof}
		Induction on $n$.
		For $n=0$ we represent $P=x[S/x]$, and clearly $x\approx_0P$.
		For $n>0$ we consider the representative case of $P=Q\,R$; for other forms of $P$ the proof is similar.
		The induction assumption gives us representations $Q=Q'[S_1/x_1,\dots,S_r/x_r]$ and $R=R'[S_{r+1}/x_{r+1},\dots,S_s/x_s]$
		with $Q',R'$ finite and such that $Q\approx_{n-1}Q'$ and $R\approx_{n-1}R'$.
		W.l.o.g.~we can assume that the fresh variables $x_1,\dots,x_r$ are not free in $R'$, and the fresh variables $x_{r+1},\dots,x_s$ are not free in $Q'$.
		Then, for $P'=Q'\,R'$ we have $P\approx_n P'$ and $P=P'[S_1/x_1,\dots,S_s/x_s]$.
	\end{proof}

	\begin{corollary}\label{cor:finite-cut}
		Let $n\in\Nat$, and let $P$, $Q$ be $\lambda$-terms such that $P\bred^*Q$.
		Then we can represent $P$ as $P=P'[S_1/x_1,\dots,S_s/x_s]$ so that $P'$ is finite, and for some $\lambda$-term $Q'$ it holds $P'\bred^*Q'$ and $Q\approx_n Q'$.
	\end{corollary}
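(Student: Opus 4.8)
The plan is to combine Lemma~\ref{lem:finite-cut}, which produces a finite approximant of a \emph{single} $\lambda$-term, with Lemma~\ref{lem:beta-approx}, which lets a single $\beta$-reduction be mimicked by an approximant at the cost of two levels of approximation depth. Since $P\bred^*Q$, I would fix a concrete reduction sequence $P=P_0\bred P_1\bred\dots\bred P_\ell=Q$ of some finite length $\ell$; the key accounting observation is that, in order to guarantee $Q\approx_nQ'$ at the very end, one must start the approximation $\ell$ ``$\beta$-steps'' earlier, i.e.\ at depth $n+2\ell$.

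Concretely, I would first apply Lemma~\ref{lem:finite-cut} with the parameter $n+2\ell$ in place of $n$, obtaining a representation $P=P'[S_1/x_1,\dots,S_s/x_s]$ with $P'$ finite and $P\approx_{n+2\ell}P'$. Then I would walk along the reduction sequence, maintaining the invariant that after the first $j$ reductions there is a $\lambda$-term $P_j'$ with $P'\bred^*P_j'$ and $P_j\approx_{n+2(\ell-j)}P_j'$. The base case $j=0$ is exactly the output of Lemma~\ref{lem:finite-cut} (with $P_0'=P'$). For the inductive step, given $P_j\approx_{n+2(\ell-j)}P_j'$ and $P_j\bred P_{j+1}$, Lemma~\ref{lem:beta-approx} applied with $n+2(\ell-j-1)$ in the role of its ``$n$'' (so that $n+2(\ell-j)=\big(n+2(\ell-j-1)\big)+2$) yields a $P_{j+1}'$ with $P_j'\bred^*P_{j+1}'$ and $P_{j+1}\approx_{n+2(\ell-j-1)}P_{j+1}'$; concatenating $P'\bred^*P_j'\bred^*P_{j+1}'$ re-establishes the invariant. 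Taking $j=\ell$ gives $Q=P_\ell\approx_nP_\ell'$ together with $P'\bred^*P_\ell'$, so $Q'=P_\ell'$ witnesses the claim.

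I do not expect any real obstacle here: the whole argument is a bookkeeping induction over the fixed reduction sequence, and the only thing to get right is the accounting that each $\beta$-step consumes exactly two units of approximation depth (hence the $+2\ell$ at the start), together with the monotonicity remark that $P\approx_nP'$ implies $P\approx_kP'$ for $k<n$ — which is precisely what makes it harmless that Lemma~\ref{lem:beta-approx} may replace one reduction by several. The one point to state carefully is that the final $\bred^*$ in the conclusion is obtained by \emph{concatenating} the finitely many reduction sequences $P_j'\bred^*P_{j+1}'$ produced along the way, which is legitimate because $\ell$ is finite, $P\bred^*Q$ being a finite reduction by assumption.
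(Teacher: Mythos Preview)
Your proposal is correct and follows essentially the same approach as the paper's own proof: fix a reduction sequence of length $\ell$, invoke Lemma~\ref{lem:finite-cut} at depth $n+2\ell$, and then step through the sequence applying Lemma~\ref{lem:beta-approx} at each stage to lose two levels of approximation per $\beta$-step, ending with $Q'=P_\ell'$. The only superfluous remark is the appeal to monotonicity of $\approx$ to justify that Lemma~\ref{lem:beta-approx} may output $\bred^*$ rather than $\bred$; this is harmless simply because the lemma already delivers the approximation at the right depth, so no extra monotonicity is needed there.
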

	
	\begin{proof}
		Let us write $P=P_0\bred P_1\bred\dots\bred P_l=Q$.
		By Lemma \ref{lem:finite-cut} we can write $P=P'[S_1/x_1,\dots,S_s/x_s]$ so that $P'$ is finite and $P\approx_{n+2l}P'$.
		Take $P_0'=P'$.
		Consecutively for all $i\in\{1,\dots,l\}$ Lemma \ref{lem:beta-approx} gives us a $\lambda$-term $P_i'$ such that $P_{i-1}'\bred^*P_i'$ and $P_i\approx_{n+2(l-i)}P_i'$.
		At the end, for $Q'=P'_l$ we have $P'\bred^*Q'$ and $Q\approx_nQ'$.
	\end{proof}
	
	Below, a $\lambda$-term $(\lambda x.P)\,Q$ is called a \emph{$\beta$-redex of order $k$} if $k=\ord(\lambda x.P)$.
	
	\begin{lemma}\label{lem:no-new-redexes}
		Let $k\in\Nat$, and let $P$ be a $\lambda$-term without $\beta$-redexes of orders higher than $k$ (as subterms).
		If $P\bred Q$ is a $\beta$-reduction of order $k$, then also in $Q$ there are no $\beta$-redexes of order higher than $k$.
	\end{lemma}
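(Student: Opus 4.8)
The plan is to prove the statement by induction on the structure of $P$, distinguishing whether the contracted redex $(\lambda x.R)\,S$ (where $\ord(\lambda x.R)=k$) is all of $P$ or lies properly inside an immediate subterm of $P$. Throughout I use two elementary sort facts. Since $\beta$-reduction and substitution preserve sorts, and since $\ord(\alpha\arr\beta)=\max(1+\ord(\alpha),\ord(\beta))$, writing the sort of $\lambda x.R$ as $\alpha\arr\beta$ with $x^\alpha$, $R^\beta$, $S^\alpha$ we get $\ord(R)=\ord(\beta)\leq\ord(\alpha\arr\beta)=k$ and $\ord(S)=\ord(\alpha)<\ord(\alpha\arr\beta)=k$. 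Hence any $\lambda$-term of the same sort as $R$ has order at most $k$, and any $\lambda$-term of the same sort as $S$ has order strictly below $k$; in particular these bounds survive substitution.

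For the base case $P=(\lambda x.R)\,S$, so $Q=R[S/x]$, I would classify every $\beta$-redex occurring in $R[S/x]$ as either a \emph{residual} of a redex already present in $R$ or in $S$, or a \emph{created} redex. A residual of a redex that is a subterm of $R$ or of $S$ is (a substitution instance of) a subterm of $P$, hence has order at most $k$ by the hypothesis on $P$, and substitution does not change this order. A created redex can arise only in one way: $R$ contains a subterm $x\,V$ with $x$ in operator position and $S$ is an abstraction $\lambda z.S'$, in which case $R[S/x]$ contains $(\lambda z.S'')\,V'$ for appropriate substitution instances, a redex of order $\ord(S)<k$. One checks this is exhaustive: a $\lambda$-node of $R[S/x]$ coming from $R$ never acquires a new argument through the substitution (variables, the only things that disappear, are not parents of a $\lambda$-node), and the $\lambda$-nodes of a copy of $S$ that acquire an argument are exactly the tops of those copies. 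So every redex of $Q$ has order at most $k$.

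For the inductive step the contracted redex lies properly inside an immediate subterm of $P$. If $P=a\,P_1\,\dots\,P_r$ or $P=\lambda y.P_0$, the reduction happens inside one immediate subterm $P_i$ (resp.\ $P_0$), which as a subterm of $P$ has no redexes of order greater than $k$; the induction hypothesis applied to $P_i\bred Q_i$ bounds the order of all redexes among subterms of $Q$ except possibly $Q$ itself, and $Q$ is headed by a symbol (resp.\ by $\lambda y$), so it is not a $\beta$-redex. The only delicate case is $P=P_1\,P_2$ with the redex inside $P_1$ (the case of $P_2$ is symmetric and easier), so $Q=Q_1\,P_2$ with $P_1\bred Q_1$; the induction hypothesis handles all redexes that are subterms of $Q_1$ or of $P_2$, and it remains to bound the order of $Q$ itself when $Q$ is a redex, i.e.\ when $Q_1$ is an abstraction. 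If $P_1$ was already an abstraction, then $P=P_1\,P_2$ was itself a redex — a subterm of $P$, hence of order at most $k$ by hypothesis — and $Q$ is its residual, of the same order. Otherwise $P_1$ is an application reducing in one step to an abstraction; a short inspection of the possible shapes shows this forces the contracted redex to be $P_1$ itself, say $P_1=(\lambda w.P_{1,0})\,N$ with $\ord(\lambda w.P_{1,0})=k$, whence $Q_1=P_{1,0}[N/w]$ has the sort of $P_{1,0}$ and $\ord(Q_1)=\ord(P_{1,0})\leq\ord(\lambda w.P_{1,0})=k$, which is the order of the redex $Q$.

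I expect the main obstacle to be making the residual/created dichotomy in the base case fully rigorous and exhaustive — in particular, pinning down exactly which $\lambda$-nodes of $R[S/x]$ can end up in operator position of an application not already present in $R$ or in $S$ — together with the companion observation in the inductive step that an application can reduce to an abstraction only by contracting a redex at its own head. Both points are routine but require the case analysis to be complete; everything else reduces to the two sort inequalities stated at the outset.
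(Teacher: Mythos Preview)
Your proof is correct, but it is organized differently from the paper's. The paper avoids induction altogether: it writes the redex position as a one-hole context, $P=P'[(\lambda x.R)\,S/y]$ and $Q=P'[R[S/x]/y]$, and then classifies an arbitrary subterm $(\lambda z.T)\,U$ of $Q$ with $\ord(\lambda z.T)>k$ into exactly three possibilities---inside a copy of $S$, of the form $V[S/x]$ for a subterm $V\neq x$ of $R$, or of the form $V[R[S/x]/y]$ for a subterm $V\neq y$ of $P'$---ruling each out using $\ord(x)<k$ and $\ord(y)=\ord(R)\leq k$. Your base case is precisely the paper's first two cases, and your inductive step unwinds the context $P'$ one constructor at a time instead of treating it in a single stroke; the delicate sub-analysis you do for $P=P_1\,P_2$ (when $P_1$ is an application reducing to an abstraction, the redex must be $P_1$ itself, giving $\ord(Q_1)\leq k$) is exactly the content of the paper's third case specialized to the innermost layer of $P'$. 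So the two arguments are equivalent in substance; the paper's is shorter, while yours makes the residual/created dichotomy explicit and is closer to how one would mechanize the proof. One small wording point: your phrase ``lies properly inside an immediate subterm'' should be read as ``is a proper subterm of $P$, hence contained in an immediate subterm''---you later (correctly) allow the redex to coincide with that immediate subterm.
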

	
	\begin{proof}
		This lemma was already justified on page \pageref{page:rearrange-beta}, but let us repeat.
		Let $P'$ be obtained from $P$ by replacing by $y$ the $\beta$-redex $(\lambda x.R)\,S$ involved in the $\beta$-reduction $P\bred Q$.
		Then we have $P=P'[(\lambda x.R)\,S/y]$ and $Q=P'[R[S/x]/y]$.
		Suppose that $Q$ has a $\beta$-redex of order higher than $k$, i.e., a subterm $(\lambda z.T)\,U$ with $\ord(\lambda z.T)>k$; we want to prove that this is impossible.
		The subterm $(\lambda z.T)\,U$, like every subterm of $Q$, can be found in one of the following three places:
		\begin{itemize}
		\item	Possibly $(\lambda z.T)\,U$ is a subterm of $S$. 
			This is impossible, because by assumption $P$ (and thus also its subterm $S$) contains no $\beta$-redexes of orders higher than $k$.
		\item	Possibly $(\lambda z.T)\,U=V[S/x]$ for a subterm $V$ of $R$, where $V\neq x$.
			Then $V$ has to be an application $V=W\,X$, with $\ord(W)=\ord(\lambda z.T)>k$.
			We have $W\neq x$, because $\ord(x)<\ord(\lambda x.R)=k$.
			Thus $W$ is a $\lambda$-abstraction, with $V$ being itself a $\beta$-redex of order higher than $k$, which is again impossible by assumption.
		\item	Otherwise $(\lambda z.T)\,U=V[R[S/x]/y]$ for a subterm $V$ of $P'$, where $V\neq y$.
			Again, $V$ has to be an application $V=W\,X$, with $\ord(W)=\ord(\lambda z.T)>k$.
			We have $W\neq y$, because $\ord(y)=\ord(R)\leq\ord(\lambda x.R)=k$.
			Thus $W$ is a $\lambda$-abstraction, with $V$ being itself a $\beta$-redex of order higher than $k$;
			this is impossible, since $V[(\lambda x.R)\,S/y]$ is a subterm of $P$ and also a $\beta$-redex of order higher than $k$.\qedhere
		\end{itemize}
	\end{proof}

	\begin{lemma}\label{lem:finite-reorder}
		Let $P'$ be a finite $\lambda$-term of complexity at most $m$.
		Then there exist $\lambda$-terms $Q_m',Q_{m-1}',\dots,Q_0'$ such that $P'=Q_m'$, and for every $k\in\{1,\dots,m\}$ the term $Q'_{k-1}$ can be reached from $Q'_k$ using only $\beta$-reductions of order $k$, 
		and $Q_0'$ is in $\beta$-normal form.
	\end{lemma}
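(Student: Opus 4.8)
The plan is to build the sequence $Q'_m,\dots,Q'_0$ iteratively, at each stage peeling off all $\beta$-redexes of the highest order that is still present, and to use Lemma~\ref{lem:no-new-redexes} to guarantee that higher-order redexes never reappear. Set $Q'_m=P'$. Since $P'$ has complexity at most $m$, every subterm of $P'$ has order at most $m$; in particular every subterm of the form $\lambda x.R$ occurring as the operator of a redex has order at most $m$, so $Q'_m$ contains no $\beta$-redex of order greater than $m$. This ``no $\beta$-redex of order greater than $k$'' condition is the invariant I want to maintain as $k$ decreases.

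Now, for $k=m,m-1,\dots,1$ in turn, assuming $Q'_k$ is finite and has no $\beta$-redex of order greater than $k$, I would perform a maximal sequence of $\beta$-reductions of order exactly $k$ starting from $Q'_k$ and call the resulting term $Q'_{k-1}$. Such a maximal sequence is finite: a finite sorted $\lambda$-term is an ordinary simply typed $\lambda$-term, hence strongly normalizing, so there is no infinite sequence of $\beta$-reductions, and in particular no infinite sequence of order-$k$ ones; therefore after finitely many steps no order-$k$ redex remains. Along this sequence the invariant is preserved, since each step is a $\beta$-reduction of order $k$ applied to a term with no redex of order greater than $k$, so by Lemma~\ref{lem:no-new-redexes} the result again has no redex of order greater than $k$; moreover finiteness is preserved, because substituting a finite $\lambda$-term into a finite $\lambda$-term yields a finite $\lambda$-term. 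Consequently $Q'_{k-1}$ is finite, has no $\beta$-redex of order greater than $k$, and, by maximality, none of order exactly $k$ either; that is, $Q'_{k-1}$ has no $\beta$-redex of order greater than $k-1$, which is exactly the invariant needed for the next iteration, and $Q'_{k-1}$ is reached from $Q'_k$ using only $\beta$-reductions of order $k$, as required.

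After the iteration for $k=1$ we are left with $Q'_0$ having no $\beta$-redex of order greater than $0$. But a $\beta$-redex $(\lambda x.R)\,S$ always has order $\ord(\lambda x.R)\geq 1$, because $\ord(\alpha\arr\beta)=\max(1+\ord(\alpha),\ord(\beta))\geq 1$; hence $Q'_0$ has no $\beta$-redex at all, i.e.\ it is in $\beta$-normal form. Collecting the terms produced gives the desired sequence $Q'_m,\dots,Q'_0$.

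I do not expect a genuine obstacle. The only points requiring a little care are (i) invoking strong normalization of finite simply typed $\lambda$-terms to know that each order-$k$ reduction phase terminates, so that ``maximal'' makes sense, and (ii) the observation that there are no $\beta$-redexes of order $0$, which is what lets the phase for $k=1$ leave an honest normal form; both are routine, and the substantive combinatorial content has already been handled by Lemma~\ref{lem:no-new-redexes}.
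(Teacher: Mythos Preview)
Your proof is correct and follows essentially the same approach as the paper's: set $Q'_m=P'$, then for $k=m,\dots,1$ exhaust all order-$k$ $\beta$-reductions, using Lemma~\ref{lem:no-new-redexes} to ensure no higher-order redexes reappear. Your version is in fact slightly more explicit than the paper's---you spell out the invariant ``no $\beta$-redex of order greater than $k$'' at each stage and justify termination via strong normalization of simply typed $\lambda$-calculus, whereas the paper compresses these points into ``because $P'$ is finite'' and a one-line appeal to Lemma~\ref{lem:no-new-redexes}.
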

	
	\begin{proof}
		Take $Q_m'=P'$.
		Then, for $k=m,\dots,1$, consecutively, out of $Q_k'$ we perform $\beta$-reductions of order $k$ as long as possible, and as $Q_{k-1}'$ we take the resulting $\lambda$-term,
		from which no more $\beta$-reductions of order $k$ are possible.
		Every such sequence of $\beta$-reductions finally ends, because $P'$ is finite.
		For every $k\in\{1,\dots,m\}$ Lemma \ref{lem:no-new-redexes} ensures that in $Q_0'$ there are no $\beta$-redexes of order $k$, 
		because all $\beta$-reductions between $Q_{k-1}'$ and $Q_0'$ were of orders smaller than $k$.
		Moreover, because the complexity of a $\lambda$-term cannot grow during $\beta$-reductions, the complexity of $Q'_0$ is at most $m$, and hence it has no $\beta$-redexes of order higher than $m$.
		Thus $Q_0'$ is in $\beta$-normal form.
	\end{proof}
	
	\begin{proof}[Proof of Lemma \ref{lem:base}]
		Recall that in this lemma we are given a closed $\lambda$-term $P$ of sort $o$ and complexity $m$, and some $t\in\Ll(P)$, 
		and our goal is to exhibit $\lambda$-terms $Q_m,Q_{m-1},\dots,Q_0$ such that $P=Q_m$, and for every $k\in\{1,\dots,m\}$ the term $Q_{k-1}$ can be reached from $Q_k$ using only $\beta$-reductions of order $k$,
		and we can derive $\Gammae\vdash Q_0:\hat\rho_0\triangleright|t|$.
		
		We first apply Lemma \ref{lem:base-type} to $P$ and $t$, obtaining a number $n$ and a $\lambda$-term $Q$ such that $P\bred^*Q$.
		Then, we apply to them Corollary \ref{cor:finite-cut}, obtaining a representation $P=P'[S_1/x_1,\dots,S_s/x_s]$ for finite $P'$, and a $\lambda$-term $Q'$.
		Notice that the complexity of $P'$ cannot be higher than that of $P$, as for every subterm $R$ of $P'$, the term $R[S_1/x_1,\dots,S_s/x_s]$ is a subterm of $P$, and has the same order as $R$.
		We then apply Lemma \ref{lem:finite-reorder} to $P'$, obtaining $\lambda$-terms $Q_m',Q_{m-1}',\dots,Q_0'$.
		As $Q_k$ we take $Q_k'[S_1/x_1,\dots,S_s/x_s]$, for $k\in\{0,\dots,m\}$.
		We have $Q_m=P$ since $Q_m'=P'$.
		For every $k\in\{1,\dots,m\}$ Lemma \ref{lem:finite-reorder} gives us a sequence of $\beta$-reduction of order $k$ from $Q'_k$ to $Q'_{k-1}$.
		After performing the substitution $[S_1/x_1,\dots,S_s/x_s]$ to every $\lambda$-term in this sequence, all $\beta$-reductions in this sequence remain correct $\beta$-reductions of order $k$,
		and now the sequence leads from $Q_k$ to $Q_{k-1}$.
		
		Finally, by Corollary \ref{cor:finite-cut} we have $P'\bred^*Q'$ and $Q\approx_n Q'$.
		Since, by Lemma \ref{lem:finite-reorder}, the $\beta$-normal form of $P'$ is $Q_0'$, we also have $Q'\bred^*Q_0'$.
		Thus by Lemma \ref{lem:base-type} (where we take $Q_0'$ as $Q''$) we obtain a derivation of $\Gammae\vdash Q'_0:\hat\rho_0\triangleright|t|$.
		To every $\lambda$-term in this derivation we apply the substitution $[S_1/x_1,\dots,S_s/x_s]$, obtaining a derivation of $\Gammae\vdash Q_0:\hat\rho_0\triangleright|t|$.
		The derivation remains correct: the problem could only appear in a \VarR rule used for some of the variables $x_1,\dots,x_s$,
		but since the type environment of the resulting type judgment is empty, the derivation uses the \VarR rule only for bound variables of $Q'_0$, not for $x_1,\dots,x_s$.
	\end{proof}

\section{Proof of Lemma \ref{lem:increase-m}}

	\begin{lemma}\label{lem:increase-m-comp-0}
		If $\Comp_m(M\palkaC((F_i,c_i))_{i\in I})=(F,c)$, where $M$ and $F_i$ are subsets of $\{0,\dots,m-1\}$,
		then it holds that $\Comp_{m+1}(M\palkaC((G_i,0))_{i\in I})=(G,0)$, where $G=F\cup\{m\mid c>0\}$ and $G_i=F_i\cup\{m\mid c_i>0\}$.
	\end{lemma}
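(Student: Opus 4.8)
The plan is to unfold both sides from the definition of the $\Comp$ predicate and check that the recursively defined quantities $f_n, f_n'$ agree. Concretely, write $(F,c) = \Comp_m(M \palkaC ((F_i,c_i))_{i\in I})$ with its internal witnesses $f_n, f_n'$ for $n \in \{0,\dots,m\}$, and $(G,c^*) = \Comp_{m+1}(M \palkaC ((G_i,0))_{i\in I})$ with its witnesses $g_n, g_n'$ for $n \in \{0,\dots,m+1\}$. The goal is to show $c^* = 0$, and $G = F \cup \{m \mid c>0\}$.

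First I would prove by induction on $n \in \{0,\dots,m\}$ that $g_n = f_n + \sum_{i\in I} |(\{m\} \cap G_i) \cap \{n\}|$, which for $n \le m-1$ simply reduces to $g_n = f_n$ (since $G_i$ agrees with $F_i$ on $\{0,\dots,m-1\}$, and the added element $m$ does not meet $\{n\}$), and for $n = m$ gives $g_m = f_m + |\{i \in I \mid c_i > 0\}|$. The base case and inductive step both follow directly from the defining equations $f_n = f_n' + \sum_i |F_i \cap \{n\}|$ and $f_n' = f_{n-1}$ if $n-1 \in M$, else $0$, using that $M \subseteq \{0,\dots,m-1\}$ so $m \notin M$ and hence $g_{m}' = 0$ and $f_m' = 0$; likewise $g_{m+1}' = g_m$ (the case $m \in M$) cannot occur, so $g_{m+1}' = 0$.

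From $g_{m+1}' = 0$ and $|G_i \cap \{m+1\}| = 0$ for all $i$ (since $G_i \subseteq \{0,\dots,m\}$) we get $g_{m+1} = 0$, and $\sum_{i\in I} c_i^* = 0$ because each second component $0$ was passed in; hence $c^* = g_{m+1}' + \sum_i 0 = 0$, as claimed. For the flag set, note $G = \{n \in \{0,\dots,m\} \mid g_n > 0 \land n \notin M\}$. Restricted to $n \le m-1$ this is exactly $\{n \mid f_n > 0 \land n \notin M\} = F$ (using $g_n = f_n$ and $F \subseteq \{0,\dots,m-1\}$). For $n = m$: since $m \notin M$, $m \in G$ iff $g_m > 0$ iff $f_m > 0 \lor \exists i\, (c_i > 0)$; and $c = f_m' + \sum_i c_i = \sum_i c_i$ (as $f_m' = 0$), so $g_m > 0$ iff $f_m > 0$ or $c > 0$. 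It remains to observe that $f_m > 0$ forces $c > 0$: indeed if $f_m > 0$ then, tracing $f_m = f_m' + \sum_i |F_i \cap \{m\}| = 0 + 0$ — wait, $F_i \subseteq \{0,\dots,m-1\}$ so $|F_i \cap \{m\}| = 0$ and $f_m' = 0$, giving $f_m = 0$ always; so the disjunct $f_m > 0$ never arises and $m \in G$ iff $c > 0$, i.e.\ $G = F \cup \{m \mid c > 0\}$.

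The main obstacle is purely bookkeeping: one must be careful that the internal recursion of $\Comp$ ranges over a larger index set $\{0,\dots,m+1\}$ on the right-hand side, and that the two potential ``new'' contributions at the top — the element $m$ possibly added to each $G_i$, and the extra index $m+1$ in the recursion — interact correctly with the membership test $n \notin M$ and with $M \subseteq \{0,\dots,m-1\}$. Once one records that $m \notin M$ and $m+1 \notin M$, every step is a one-line unfolding, so I expect no real difficulty beyond writing the induction cleanly.
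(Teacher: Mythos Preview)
Your approach is the same as the paper's: unfold both instantiations of $\Comp$, check that the internal quantities agree on indices below $m$, and then analyse the new top index. However, there is a concrete bookkeeping error that breaks the argument as written.

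You deduce from $m\notin M$ that $g_m'=0$ and $f_m'=0$. This is wrong: the recursion gives $f_m'=f_{m-1}$ whenever $m-1\in M$, and nothing forbids $m-1\in M$ (indeed $M\subseteq\{0,\dots,m-1\}$). The same applies to $g_m'$. Consequently your later claims ``$c=\sum_i c_i$'' and ``$f_m=0$ always'' are false in general. What is true (and what the paper uses) is the equality $g_m'=f_m'$: both are $f_{m-1}=g_{m-1}$ if $m-1\in M$ and $0$ otherwise, using $g_{m-1}=f_{m-1}$ from the previous step. Since $F_i\cap\{m\}=\emptyset$ you also get $f_m=f_m'$, hence $c=f_m'+\sum_i c_i=f_m+\sum_i c_i$, so $c>0\Leftrightarrow f_m>0\lor\exists i\,(c_i>0)$, which is exactly your expression for $g_m>0$. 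With this one-line fix your argument goes through and matches the paper's proof.
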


	In the above, by $\{m\mid c>0\}$ we mean the set $\{m\}$ when $c>0$, and $\emptyset$ otherwise.
	
	\begin{proof}
		In the definition of the $\Comp$ predicate some numbers $f_n$ and $f'_n$ are computed.
		Denote by $f_{n,m}$ and $f'_{n,m}$ the values taken by $f_n$ and $f'_n$ in the above instantiation of the $\Comp_m$ predicate,
		and by $f_{n,m+1}$ and $f'_{n,m+1}$ their values in the above instantiation of the $\Comp_{m+1}$ predicate.
		Since the sets $F_i$ and $G_i$ are the same when restricted to numbers smaller than $m$, 
		we have $f_{n,m}=f_{n,m+1}$ for $n<m$, and $f'_{n,m}=f'_{n,m+1}$ for $n\leq m$.
		In particular, for $n<m$ the definition of $\Comp$ says that $n\in F\Leftrightarrow n\in G$.
		We also have $f'_{m+1,m+1}=0$, since $m\not\in M$;
		thus the flag counter computed by the $\Comp_{m+1}$ predicate (as the sum of $f'_{m+1,m+1}=0$ and of the zeroes given as arguments to $\Comp_{m+1}$) is actually $0$.
		Finally, we recall that $c=f'_{m,m}+\sum_{i\in I}c_i$;
		thus $c>0$ if and only if $f'_{m,m}>0$ or $c_i>0$ for some $i\in I$.
		On the other hand, we have $f_{m,m+1}=f'_{m,m+1}+\sum_{i\in I}|G_i\cap\{m\}|$;
		thus $f_{m,m+1}>0$ if and only if $f'_{m,m+1}>0$ or $m\in G_i$ for some $i\in I$.
		Since $f'_{m,m}=f'_{m,m+1}$ and $c_i>0\Leftrightarrow m\in G_i$, we obtain $c>0\Leftrightarrow f_{m,m+1}>0$.
		The $\Comp_{m+1}$ predicate wants to put $m$ to the set $G$ if only if $f_{m,m+1}>0$ (since $m\not\in M$), thus if and only if $c>0$;
		this agrees with our definition of $G$.
	\end{proof}

	\begin{lemma}\label{lem:increase-m-aux-0}
		If we can derive $\Gamma\vdash R:(m,F,M,\tau)\triangleright c$, then we can also derive $\Gamma\vdash R:(m+1,G,M,\tau)\triangleright0$,
		where $G=F\cup\{m\mid c>0\}$.
	\end{lemma}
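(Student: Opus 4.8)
The plan is to argue by induction on the structure of the given derivation of $\Gamma\vdash R:(m,F,M,\tau)\triangleright c$, rewriting it rule by rule into a derivation of $\Gamma\vdash R:(m+1,G,M,\tau)\triangleright0$ with $G=F\cup\{m\mid c>0\}$. Throughout, the environment $\Gamma$, the marker set $M$, and the type $\tau$ stay the same, and the membership conditions are preserved: $G,M\subseteq\{0,\dots,m\}$ and $G\cap M=\emptyset$, the latter because $m\notin M$ (as $M\subseteq\{0,\dots,m-1\}$). So the whole content is in tracking how the flag set and the flag counter change, which is exactly what Lemma~\ref{lem:increase-m-comp-0} was set up to do.

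The base cases are \BrR, which is immediate from the induction hypothesis applied to the single premise, and \VarR, where necessarily $c=0$ (so $G=F$) and the side conditions $\Split(\Gamma\palka\Gammae[x\mapsto\{(k,F,M',\tau)\}])$ and $M\restr_{<k}=M'$ are unaffected by changing the declared order from $m$ to $m+1$; the full type $(k,F,M',\tau)$ stored in the environment keeps its own fixed order $k$. For \LamR, apply the induction hypothesis to the premise for $P$ to get the order-$(m+1)$ judgment with flag set $G$, and reapply \LamR. Here the full types in the set $T$ all carry the order of the sort of $\lambda x.P$, not $m$, so $T$ and the removed markers $\bigcup_{(k,F',M',\sigma)\in T}M'$ are literally unchanged, and the conclusion has exactly the required shape.

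The cases \ConR and \AppR are the substantive ones and follow the same scheme. First apply the induction hypothesis to all premises, obtaining counters $0$ and flag sets $G_i=F_i\cup\{m\mid c_i>0\}$. For \ConR, the auxiliary pair $(F',c')$ passed to $\Comp_m$ --- equal to $(\emptyset,1)$ if $m=0$ and to $(\{0\},0)$ if $m>0$ --- is, in the notation of Lemma~\ref{lem:increase-m-comp-0}, sent to $(\{0\},0)$, which is precisely the pair the order-$(m+1)$ \ConR rule prescribes (since $m+1>0$). Hence Lemma~\ref{lem:increase-m-comp-0}, applied with the premise index set enlarged by the slot for $(F',c')$, yields $\Comp_{m+1}(M\palkaC(\{0\},0),(G_1,0),\dots,(G_r,0))=(G,0)$, and reapplying \ConR (with $M=M'\uplus M_1\uplus\dots\uplus M_r$ and the split of $\Gamma$ unchanged) gives the claim. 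For \AppR the extra point is the way $F_i$ is split at $\ord(P)$: since $\ord(P)\le m$, the possibly new element $m$ lies in the part with orders at least $\ord(P)$, so $G_i\restr_{<\ord(P)}=F_i\restr_{<\ord(P)}$ --- hence the argument type demanded of $P$ is unchanged and matches exactly what the induction hypothesis delivered for $P$ --- while $G_i\restr_{\geq\ord(P)}=(F_i\restr_{\geq\ord(P)})\cup\{m\mid c_i>0\}$, which is exactly the flag set that Lemma~\ref{lem:increase-m-comp-0} pairs with $0$ when applied to the equation $\Comp_m(M\palkaC(F',c'),((F_i\restr_{\geq\ord(P)},c_i))_{i\in I})=(F,c)$. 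So $\Comp_{m+1}(M\palkaC(G',0),((G_i\restr_{\geq\ord(P)},0))_{i\in I})=(G,0)$, and reapplying \AppR (with $\ord(P)\le m+1$, and $M$ and the split of $\Gamma$ unchanged) closes the case.

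I do not expect a genuine obstacle; the only place needing attention is the bookkeeping in the \AppR case, namely checking that the two restrictions $\restr_{<\ord(P)}$ and $\restr_{\geq\ord(P)}$ interact correctly with the new element $m$, and that the argument type on the operator side still matches the types derived on the argument side after the order shift. Everything else is a faithful copy of the original rule with the flag data recomputed through Lemma~\ref{lem:increase-m-comp-0}.
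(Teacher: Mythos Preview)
Your proof is correct and follows essentially the same approach as the paper: induction on the derivation, with each rule case handled identically, and the substantive \ConR and \AppR cases reduced to Lemma~\ref{lem:increase-m-comp-0} together with the observation that $\ord(P)\le m$ makes the restrictions $\restr_{<\ord(P)}$ and $\restr_{\geq\ord(P)}$ interact correctly with the new element $m$.
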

	
	\begin{proof}
		Denote $\hat\tau=(m,F,M,\tau)$ and $\hat\tau'=(m+1,G,M,\tau)$.
		The proof is by induction on the structure of the derivation of $\Gamma\vdash R:\hat\tau\triangleright c$.
		We have several cases, depending on the shape of $R$.
		
		Suppose that $R=\br\,P_1\,P_2$.
		Then the last rule of the derivation is \BrR with premiss $\Gamma\vdash P_i:\hat\tau\triangleright c$ for some $i\in\{1,2\}$.
		The induction assumption gives us a derivation of $\Gamma\vdash P_i:\hat\tau'\triangleright 0$.
		Applying back the \BrR rule we derive $\Gamma\vdash R:\hat\tau'\triangleright 0$.
		
		Next, suppose that $R=x$ is a variable.
		Then the \VarR rule requires that $c=0$, and thus $G=F$.
		We just need to change in this rule $m$ to $m+1$ (which is not problematic at all), and we obtain a derivation of $\Gamma\vdash R:\hat\tau'\triangleright 0$.
		
		Next, suppose that $R=\lambda x.P$.
		Then the derivation ends with the \LamR rule, whose premiss is $\Gamma'[x\mapsto T]\vdash P:(m,F,M_\lambda,\tau_\lambda)\triangleright c$,
		where $\tau=T\arr\tau_\lambda$, and $M=M_\lambda\setminus\bigcup_{(k,F',M',\sigma)\in T}M'$, and $\Split(\Gamma\palka\Gamma')$, and $\Gamma'(x)=\emptyset$.
		The induction assumption gives us a derivation of $\Gamma'[x\mapsto T]\vdash P:(m+1,G,M_\lambda,\tau_\lambda)\triangleright0$.
		We can apply back the \LamR rule, and derive $\Gamma\vdash R:\hat\tau'\triangleright 0$.
		
		Next, suppose that $R=a\,P_1\,\dots\,P_r$ for $a\neq\br$.
		We have $\tau=o$.
		Let $\Gamma_i\vdash P_i:(m,F_i,M_i,o)\triangleright c_i$ for $i\in\{1,\dots,r\}$ be the premisses of the final \ConR rule.
		Using the induction assumption we derive $\Gamma_i\vdash P_i:(m+1,G_i,M_i,o)\triangleright0$, where $G_i=F_i\cup\{m\mid c_i>0\}$.
		We want to apply back the \ConR rule.
		Some of the conditions of the \ConR rule remain unchanged: $M=M'\uplus M_1\uplus\dots\uplus M_r$, and $M'=\emptyset$ if $r>0$, and $\Split(\Gamma\palka\Gamma_1,\dots,\Gamma_r)$.
		We also know that $(F,c)=\Comp_m(M\palkaC(F',c'),(F_1,c_1),\dots,(F_r,c_r))$, where $F'=\emptyset$ and $c'=1$ if $m=0$, and $F'=\{0\}$ and $c'=0$ if $m>0$.
		Notice that $F'\cup\{m\mid c'>0\}=\{0\}$.
		Thus from Lemma \ref{lem:increase-m-comp-0} we obtain that $\Comp_{m+1}(M\palkaC(\{0\},0),(G_1,0),\dots,(G_r,0))=(G,0)$, as required.
		
		Finally, suppose that $R=P\,Q$.
		Let $\Gamma'\vdash P:(m,F',M',T\arr\tau)\triangleright c'$ and $\Gamma_i\vdash Q:(m,F_i,M_i,\tau_i)\triangleright c_i$ for each $i\in I$ be the premisses of the final \AppR rule,
		where $T=\{(\ord(P),F_i\restr_{<\ord(P)},M_i\restr_{<\ord(P)},\tau_i)\mid i\in I\}$.
		Using the induction assumption we derive $\Gamma'\vdash P:(m+1,G',M',T\arr\tau)\triangleright0$, where $G'=F'\cup\{m\mid c'>0\}$, and 
		$\Gamma_i\vdash Q:(m+1,G_i,M_i,\tau_i)\triangleright c_i$, where $G_i=F_i\cup\{m\mid c_i>0\}$, for each $i\in I$.
		We want to apply back the \AppR rule.
		The conditions $M=M'\uplus\biguplus{}_{i\in I}M_i$ and $\Split(\Gamma\palka\Gamma',(\Gamma_i)_{i\in I})$ required by the \AppR rule remain unchanged, 
		while $\ord(P)\leq m+1$ holds because previously we had $\ord(P)\leq m$.
		Also due to $\ord(P)\leq m$ we have $G_i\restr_{<\ord(P)}=F_i\restr_{<\ord(P)}$,
		and thus the type needed now for $P$, that is $\{(\ord(P),G_i\restr_{<\ord(P)},M_i\restr_{<\ord(P)},\tau_i)\mid i\in I\}\arr o$, is actually equal $T\arr o$,
		The condition $\Comp_m(M\palkaC(F',c'),((F_i\restr_{\geq\ord(P)},c_i))_{i\in I})=(F,c)$ implies that $\Comp_{m+1}(M\palkaC(G',0),((G_i\restr_{\geq\ord(P)},0))_{i\in I})=(G,0)$ by Lemma \ref{lem:increase-m-comp-0}
		(notice that $G_i\restr_{\geq\ord(P)}=F_i\restr_{\geq\ord(P)}\cup\{m\mid c_i>0\}$ because $m\geq\ord(P)$).
	\end{proof}

	\begin{lemma}\label{lem:increase-m-comp}
		Suppose that $\Comp_m(M\palkaC((F_i,c_i))_{i\in I})=(F,c)$, where $M$ and $F_i$ are subsets of $\{0,\dots,m-1\}$.
		If $s\in I$ is such that $c_i\leq c_s$ for all $i\in I$, and $d_s\geq\log_2 c_s$, and $G_i=F_i\cup\{m\mid c_i>0\}$ for $i\in I$,
		then $\Comp_{m+1}(M\cup\{m\}\palkaC(F_s,d_s),((G_i,0))_{i\in I\setminus\{s\}})=(F,d)$ for some $d$ such that $d\geq\log_2 c$.
	\end{lemma}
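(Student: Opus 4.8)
The plan is to unfold the definition of the $\Comp$ predicate on both sides and to compare the auxiliary quantities $f_n,f'_n$ that it computes. Write $M^+=M\cup\{m\}$ for the marker set used on the right-hand side, and let $f^+_n,f'^+_n$ denote the values of $f_n,f'_n$ inside the instance $\Comp_{m+1}(M^+\palkaC(F_s,d_s),((G_i,0))_{i\in I\setminus\{s\}})$. This instance is almost the same as the one treated in Lemma~\ref{lem:increase-m-comp-0}, except that the marker $m$ has been inserted and that the distinguished premiss $s$ keeps a (recomputed) counter $d_s$ instead of being zeroed out and folded into its flag set. The first thing I would establish, by a routine induction on $n$, is that $f^+_n=f_n$ for every $n<m$: the sets $M^+$ and $M$ agree below $m$, the sets $G_i$ and $F_i$ agree below $m$, and $(F_s,d_s)$ contributes its flag set exactly as $(F_s,c_s)$ did on the left. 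Consequently the output set $F^+$ on the right satisfies $F^+\cap\{0,\dots,m-1\}=F\cap\{0,\dots,m-1\}=F$, and $m\notin F^+$ because $m\in M^+$; since $F\subseteq\{0,\dots,m-1\}$ this already shows that the first component of the right-hand side is $F$, as claimed.

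It remains to analyse the flag counter. Again directly from the definitions: $f'^+_m=f^+_{m-1}$ when $m-1\in M$ and $0$ otherwise, i.e. $f'^+_m=f'_m$ (using $f^+_{m-1}=f_{m-1}$ from the previous step); and among the flag sets on the right, $F_s$ does not contain $m$ while $G_i=F_i\cup\{m\mid c_i>0\}$ contains $m$ exactly when $c_i>0$. Hence $f^+_m=f'_m+|\{i\in I\setminus\{s\}\mid c_i>0\}|$. Since $m\in M^+$, this yields $f'^+_{m+1}=f^+_m$ order-$(m+1)$ flags on the current judgment, and no flag set contains $m+1$, so the right-hand flag counter is $d=f'^+_{m+1}+d_s=f^+_m+d_s=f'_m+|\{i\in I\setminus\{s\}\mid c_i>0\}|+d_s$. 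On the left, by definition, $c=f'_m+\sum_{i\in I}c_i$.

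The inequality $d\ge\log_2 c$ then follows from the same elementary estimate used in the proof of Lemma~\ref{lem:increase-m}. If $c_s=0$ then all $c_i=0$ (as $c_i\le c_s$), so $c=f'_m$ and $d=f'_m+d_s\ge f'_m=c\ge\log_2 c$. If $c_s>0$, put $p=|\{i\in I\mid c_i>0\}|$, so $p\ge 1$ (because $s$ belongs to that set) and $d=f'_m+(p-1)+d_s\ge f'_m+(p-1)+\log_2 c_s$. On the other hand $c=f'_m+\sum_{i:\,c_i>0}c_i\le f'_m+p\,c_s\le (f'_m+p)c_s$, the last step using $c_s\ge 1$. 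Writing $K=f'_m+p\ge 1$ and invoking $K-1\ge\log_2 K$, we obtain $\log_2 c\le\log_2 K+\log_2 c_s\le(K-1)+\log_2 c_s=f'_m+p-1+\log_2 c_s\le d$, which is exactly the required bound.

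The only mildly delicate part is the bookkeeping of the first two paragraphs: tracking precisely which of the $f_n$ are preserved and which change once the marker order $m$ is inserted, and being careful that $M,F_i\subseteq\{0,\dots,m-1\}$ is what makes $f'^+_m=f'_m$ and $m\notin F^+$. This is entirely routine and parallels the computation already done in Lemma~\ref{lem:increase-m-comp-0}. The arithmetic heart — the inequality $K-1\ge\log_2 K$ together with $c\le(f'_m+p)c_s$ — is immediate, so no genuine obstacle is expected.
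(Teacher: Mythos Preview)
Your proposal is correct and follows essentially the same route as the paper's own proof: you compare the auxiliary quantities $f_n,f'_n$ between the two $\Comp$ instances, obtain $f^+_n=f_n$ for $n<m$ (hence the output set is $F$), compute $d=f'_m+|\{i\in I\setminus\{s\}\mid c_i>0\}|+d_s$, and finish with the estimate $c\le K\,c_s$ for $K=f'_m+|\{i\in I\mid c_i>0\}|$ together with $K-1\ge\log_2 K$. Your treatment of the degenerate case $c_s=0$ is in fact slightly more careful than the paper's, which writes $d=c$ where $d\ge c$ is what actually holds.
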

	
	\begin{proof}
		As in the proof of Lemma \ref{lem:increase-m-comp-0}, 
		denote by $f_{n,m}$ and $f'_{n,m}$ the values taken by the variables $f_n$ and $f'_n$ in the above instantiation of the $\Comp_m$ predicate,
		and by $f_{n,m+1}$ and $f'_{n,m+1}$ their values in the above instantiation of the $\Comp_{m+1}$ predicate.
		As previously we have $f_{n,m}=f_{n,m+1}$ for $n<m$, and $f'_{n,m}=f'_{n,m+1}$ for $n\leq m$,
		and thus $\Comp_{m+1}$ correctly says which numbers smaller than $m$ should belong to $F$.
		Moreover, it says that $m\not\in F$, since $m\in M\cup\{m\}$, which is also correct.
		It remains to check that the flag counter $d$ computed by $\Comp_{m+1}$ actually satisfies $d\geq\log_2 c$.
		Because $m\in M\cup\{m\}$ we have $f'_{m+1,m+1}=f_{m,m+1}$, and thus
		\begin{align}
			d&=f'_{m+1,m+1}+d_s=f_{m,m+1}+d_s=f'_{m,m+1}+|F_s\cap\{m\}|+\sum_{i\in I\setminus\{s\}}|G_i\cap\{m\}|+d_s=\nonumber\\
			&=f'_{m,m}+0+|\{i\in I\setminus\{s\}\mid c_i>0\}|+d_s\,.\label{eq:increase-m-comp-1}
		\end{align}
		On the other hand, we have 
		\begin{align}
			c=f'_{m,m}+\sum_{i\in I}c_i\,.\label{eq:increase-m-comp-2}
		\end{align}
		
		In the degenerate case of $c_s=0$ we have that $c_i=0$ for all $i\in I$, and by the above $d=c$, thus even more $d\geq\log_2 c$.
		
		Next, suppose that $c_s>0$.
		Denote $k=f'_{m,m}+|\{i\in I\mid c_i>0\}|$.
		Then (\ref{eq:increase-m-comp-1}) gives $d=k-1+d_s$, while continuing (\ref{eq:increase-m-comp-2}) we have 
		$c\leq f'_{m,m}\cdot c_s+|\{i\in I\mid c_i>0\}|\cdot c_s=k\cdot c_s$.
		Altogether we obtain as required:
		\begin{align*}
			d=k-1+d_s\geq k-1+\log_2 c_s\geq \log_2(k\cdot c_s)\geq\log_2 c\,.
			\tag*{\qedhere}
		\end{align*}
	\end{proof}

	\begin{corollary}\label{cor:increase-m-comp-con-0}
		Suppose that $\Comp_m(M\palkaC(F',c'))=(F,c)$, where $M\subseteq\{0,\dots,m-1\}$, and $F'=\emptyset\land c'=1$ if $m=0$, and $F'=\{0\}\land c'=0$ if $m>0$.
		Then $\Comp_{m+1}(M\cup\{m\}\palkaC(\{0\},0))=(F,d)$ for some $d$ such that $d\geq\log_2 c$.
	\end{corollary}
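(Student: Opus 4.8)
The plan is to obtain Corollary~\ref{cor:increase-m-comp-con-0} directly from Lemma~\ref{lem:increase-m-comp}, handling the case $m=0$ separately because there the ``virtual premiss'' carrying the order-$0$ flag has a nonzero counter and so is not of the shape covered by that lemma.

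First, for $m\geq 1$: here the hypothesis specializes to $\Comp_m(M\palkaC(\{0\},0))=(F,c)$, since $F'=\{0\}$ and $c'=0$. I would apply Lemma~\ref{lem:increase-m-comp} with the one-element index set $I=\{s\}$, taking $(F_s,c_s)=(\{0\},0)$ and $d_s=0$. Its hypotheses hold: $M\subseteq\{0,\dots,m-1\}$ by assumption and $\{0\}\subseteq\{0,\dots,m-1\}$ because $m\geq 1$; the index $s$ trivially has maximal counter among $(c_i)_{i\in I}$; and $d_s=0\geq\log_2 c_s=\log_2 0$, the degenerate subcase $c_s=0$ being already dealt with inside that lemma. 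Since $I\setminus\{s\}=\emptyset$, the conclusion of the lemma reads exactly $\Comp_{m+1}(M\cup\{m\}\palkaC(\{0\},0))=(F,d)$ for some $d\geq\log_2 c$, as required.

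For $m=0$ I would simply compute both sides; this is immediate since then $M=\emptyset$, $F'=\emptyset$, $c'=1$. On the left, $\Comp_0(\emptyset\palkaC(\emptyset,1))$ yields $F=\emptyset$ (there is no $n<0$) and counter $c=f'_0+1=1$, using $f'_0=0$ because $-1\notin M$. On the right, in $\Comp_1(\{0\}\palkaC(\{0\},0))$ one gets $f_0=|\{0\}\cap\{0\}|=1$, then $f'_1=f_0=1$ since $0\in M\cup\{m\}=\{0\}$, then $f_1=f'_1=1$; hence the output flag set is $\{n\in\{0\}\mid f_n>0\wedge n\notin\{0\}\}=\emptyset=F$ and the output counter is $d=f'_1=1$. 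Finally $d=1\geq 0=\log_2 1=\log_2 c$.

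I do not expect any real obstacle; the only point needing care is tracking which promoted input the order-$0$ flag becomes, namely that for $m=0$ this flag, once the marker order $0$ is present, does create a single order-$1$ flag (so the new counter is $1$, not $0$), while for $m\geq 1$ it never carries a counter of its own and is just fed into $\Comp_{m+1}$ as the pair $(\{0\},0)$. Lining these two situations up with the hypotheses of Lemma~\ref{lem:increase-m-comp} and with its statement for the (empty) set of off-path premisses is the whole argument.
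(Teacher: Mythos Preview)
Your proposal is correct and follows essentially the same approach as the paper: apply Lemma~\ref{lem:increase-m-comp} with the singleton index set for $m>0$ (instantiating $d_s\geq\log_2 c_s$ as $0\geq\log_2 0$), and compute both sides directly for $m=0$. Your write-up just spells out the verifications in slightly more detail.
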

	
	\begin{proof}
		For $m>0$ this is a direct consequence of Lemma \ref{lem:increase-m-comp} (where the assumption $d_s\geq\log_2 c_s$ is instantiated as $0\geq\log_2 0$).
		For $m=0$ we have $M=\emptyset$, and $\Comp_0(M\palkaC(\emptyset,1))=(\emptyset,1)$, while $\Comp_1(M\cup\{0\}\palkaC(\{0\},0))=(\emptyset,1)$;
		this is fine, since $d=1\geq\log_2 1=\log_2 c$
		(for $m=0$ we could not apply Lemma \ref{lem:increase-m-comp}, because the set $F'=\emptyset$ in $\Comp_m$ changes into $\{0\}$ in $\Comp_{m+1}$).
	\end{proof}

	\begin{corollary}\label{cor:increase-m-comp-con-pos}
		Suppose that $\Comp_m(M\palkaC(F',c'),((F_i,c_i))_{i\in I})=(F,c)$, where $M$ and $F_i$ are subsets of $\{0,\dots,m-1\}$, and $F'=\emptyset\land c'=1$ if $m=0$, and $F'=\{0\}\land c'=0$ if $m>0$.
		If $s\in I$ is such that $c_i\leq c_s$ for all $i\in I$, and $d_s\geq\log_2 c_s$, and $G_i=F_i\cup\{m\mid c_i>0\}$ for $i\in I$,
		then $\Comp_{m+1}(M\cup\{m\}\palkaC(\{0\},0),(F_s,d_s),((G_i,0))_{i\in I\setminus\{s\}})=(F,d)$ for some $d$ such that $d\geq\log_2 c$.
	\end{corollary}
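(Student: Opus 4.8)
The plan is to deduce this from Lemma~\ref{lem:increase-m-comp} by absorbing the order-$0$ flag pair $(F',c')$ into the family of argument pairs. Pick a fresh index $*$ and set $(F_*,c_*):=(F',c')$, so that the hypothesis becomes $\Comp_m(M\palkaC((F_i,c_i))_{i\in I\cup\{*\}})=(F,c)$; note $F_*\subseteq\{0,\dots,m-1\}$, so the hypotheses of Lemma~\ref{lem:increase-m-comp} are still met. To apply that lemma to this enlarged index set with the same $s$ and $d_s$, I must check two things: that the maximum of the counters is still attained at $s$, i.e.\ $c_*\le c_s$; and that the pair attached to $*$ on the $\Comp_{m+1}$ side, namely $(G_*,0)$ with $G_*=F_*\cup\{m\mid c_*>0\}$, equals $(\{0\},0)$, which is the pair occurring in the desired conclusion. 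For $m>0$ we have $(F_*,c_*)=(\{0\},0)$, hence $c_*=0\le c_s$ automatically and $G_*=\{0\}$; for $m=0$ we have $(F_*,c_*)=(\emptyset,1)$, hence $G_*=\emptyset\cup\{0\}=\{0\}$ (this is precisely where $c'>0$ enters) and $c_*=1\le c_s$ provided $c_s\ge 1$. In every such case Lemma~\ref{lem:increase-m-comp}, applied to $I\cup\{*\}$, gives $\Comp_{m+1}(M\cup\{m\}\palkaC(F_s,d_s),(\{0\},0),((G_i,0))_{i\in I\setminus\{s\}})=(F,d)$ for some $d\ge\log_2 c$, and since $\Comp$ does not depend on the order of its argument pairs, this is exactly the claimed equality.

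The only configuration not covered is $m=0$ together with $c_s=0$: then $c_*=1>c_s$, so $*$ would be the genuine maximiser and the reduction breaks. I would handle this case by hand. From $c_i\le c_s=0$ we get $c_i=0$ for all $i\in I$, all the sets $G_i$ are empty, and $\Comp_0(\emptyset\palkaC(\emptyset,1),((\emptyset,0))_{i\in I})=(\emptyset,1)$ (the set $\{0,\dots,m-1\}$ is empty, so $F=\emptyset$, and the counter is $0+1+\sum_i 0=1$). On the $\Comp_1$ side, with marker set $M\cup\{m\}=\{0\}$, the only argument flag-set containing $0$ is the $\{0\}$ of the order-$0$ flag pair, so $f_0>0$; since $0$ is now a marker order this forces $f_1'=f_0>0$, which yields $F=\emptyset$ again (as $0$ is then removed) and a counter $d=f_1'+d_s\ge 1\ge 0=\log_2 1=\log_2 c$. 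Hence $\Comp_1(\{0\}\palkaC(\{0\},0),(F_s,d_s),((G_i,0))_{i\in I\setminus\{s\}})=(\emptyset,d)$ with $d\ge\log_2 c$, as required.

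The point I expect to be delicate is spotting this $m=0$, $c_s=0$ corner in the first place: unlike the counter-$0$ pairs produced by applying Lemma~\ref{lem:increase-m-aux-0} to the sub-derivations, the order-$0$ flag pair carries counter $1$ when $m=0$, so it cannot simply be treated as another $(F_i,c_i)$ dominated by $c_s$; and one must check, through the rule $G_i=F_i\cup\{m\mid c_i>0\}$, that the literal $\{0\}$ written in the statement is the image of that pair. Everything else is a mechanical unfolding of the definition of $\Comp$, of the same flavour as the computations in the proofs of Lemmata~\ref{lem:increase-m-comp-0} and~\ref{lem:increase-m-comp}.
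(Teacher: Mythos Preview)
Your proposal is correct and follows essentially the same route as the paper: absorb $(F',c')$ as an extra index and invoke Lemma~\ref{lem:increase-m-comp} whenever $c'\le c_s$ (i.e., for $m>0$, and for $m=0$ with $c_s\ge 1$), then dispatch the residual case $m=0,\ c_s=0$ by a direct computation yielding $d=1+d_s\ge\log_2 1$. The paper's proof is more terse but identical in substance, including the observation that $F'\cup\{m\mid c'>0\}=\{0\}$ in both regimes.
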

	
	\begin{proof}
		For $m>0$ this is a direct consequence of Lemma \ref{lem:increase-m-comp} (we notice that $F'\cup\{m\mid c'>0\}=\{0\}$ and $c'=0\leq c_s$).
		If $m=0$ and $c_s\geq 1$ we can use Lemma \ref{lem:increase-m-comp} as well.
		Suppose that $m=0$ and $c_s=0$.
		Then $M=\emptyset$, and $F_i=G_i=\emptyset\land c_i=0$ for all $i\in I$,
		thus $\Comp_0(M\palkaC(\emptyset,1),((F_i,c_i))_{i\in I})=(\emptyset,1)$, 
		while $\Comp_1(M\cup\{0\}\palkaC(\{0\},0),(F_s,d_s),((G_i,0))_{i\in I\setminus\{s\}})=(\emptyset,1+d_s)$; this is fine, since $d=1+d_s\geq\log_2 1=\log_2 c$.
	\end{proof}

	\begin{lemma}\label{lem:increase-m-aux}
		Suppose that we can derive $\Gamma\vdash R:(m,F,M,\tau)\triangleright c$, where $\ord(R)\leq m$ and that every full type assigned by $\Gamma$ to a variable is of order at most $m$.
		Then we can also derive $\Gamma\vdash R:(m+1,F,M\cup\{m\},\tau)\triangleright d$ for some $d$ such that $d\geq\log_2 c$.
	\end{lemma}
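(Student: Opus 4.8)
The plan is to prove Lemma~\ref{lem:increase-m-aux} by induction on the structure of the derivation of $\Gamma\vdash R:(m,F,M,\tau)\triangleright c$, with a case split on the shape of $R$. This mirrors the proof of Lemma~\ref{lem:increase-m-aux-0}, except that the order-$m$ marker is now pushed \emph{along} one chosen branch of the derivation rather than discarded everywhere: at each internal node I would route the marker into the premiss whose flag counter is largest, apply the induction assumption to that premiss, and convert every sibling premiss via Lemma~\ref{lem:increase-m-aux-0}. The distribution of flags among the converted siblings is then exactly the situation analysed in Lemma~\ref{lem:increase-m-aux-0}, while the $\Comp$ bookkeeping along the chosen branch has already been packaged in Lemma~\ref{lem:increase-m-comp} and Corollaries~\ref{cor:increase-m-comp-con-0} and~\ref{cor:increase-m-comp-con-pos}; thus for each rule the work is to reassemble the premisses and invoke the right combinator lemma.

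The cases $R=\br\,P_1\,P_2$, $R=x$ and $R=\lambda x.P$ are quick. For \BrR one applies the induction assumption to the unique premiss and reapplies \BrR. For \VarR one has $c=0$, so $d=0$ works, and adding $m$ to the marker set is harmless because the binder order $k$ satisfies $k\le m$ (by the hypothesis that every full type in $\Gamma$ has order at most $m$), hence $(M\cup\{m\})\restr_{<k}=M\restr_{<k}$. For \LamR one checks that the premiss meets the hypotheses of the lemma ($\ord(P)\le\ord(R)\le m$, and all full types in $T$ have order $\ord(R)\le m$), applies the induction assumption to obtain order $m+1$, marker set $M_\lambda\cup\{m\}$ and counter $d\ge\log_2 c$, and reapplies \LamR; the resulting marker set is $(M_\lambda\cup\{m\})\setminus\bigcup_{(k,F',M',\sigma)\in T}M'=M\cup\{m\}$, since $m$ exceeds every element of every such $M'$.

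For \ConR with $R=a\,P_1\,\dots\,P_r$: if $r=0$ one puts $m$ into the marker set $M'$ placed at the leaf and recomputes the flags with Corollary~\ref{cor:increase-m-comp-con-0}; if $r>0$ one picks $s$ with $c_s$ maximal, applies the induction assumption to the $s$-th premiss to reach order $m+1$, marker set $M_s\cup\{m\}$ and counter $d_s\ge\log_2 c_s$, applies Lemma~\ref{lem:increase-m-aux-0} to the other premisses, and finishes with Corollary~\ref{cor:increase-m-comp-con-pos}. The \AppR case $R=P\,Q$ is the heart of the argument and proceeds in the same spirit, but here the subderivations to weigh against each other are the $P$-premiss, which feeds $(F',c')$ into $\Comp_m$, together with the $Q$-premisses, each feeding only $(F_i\restr_{\ge\ord(P)},c_i)$. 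I would treat these uniformly by working with an abstract index set $\{0\}\cup I$ in which $0$ is the $P$-slot, so that Lemma~\ref{lem:increase-m-comp} applies verbatim, and then distinguish whether the maximum of $c'$ and the $c_i$ is attained at $P$ or at some $Q_s$. In the former case $m$ is pushed into $M'$ by the induction assumption on the $P$-premiss (valid because $\ord(P)\le m$) and every $Q$-premiss is converted by Lemma~\ref{lem:increase-m-aux-0}; in the latter case $m$ is pushed into the corresponding $M_s$ (valid because $\ord(Q)<\ord(P)\le m$) and the $P$-premiss together with the remaining $Q$-premisses are converted by Lemma~\ref{lem:increase-m-aux-0}. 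Either way Lemma~\ref{lem:increase-m-comp} returns $\Comp_{m+1}$ with the same flag set $F$ and a counter $d\ge\log_2 c$.

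The main obstacle is the \AppR reconstruction. After the surgery one must verify that \AppR can genuinely be reapplied: the type derived for $P$ is unchanged because $\ord(P)\le m$ forces $G_i\restr_{<\ord(P)}=F_i\restr_{<\ord(P)}$ and because adding $m\ge\ord(P)$ to some $M_i$ leaves $M_i\restr_{<\ord(P)}$ untouched; the sets actually handed to $\Comp_{m+1}$ in the rebuilt rule are $G_i\restr_{\ge\ord(P)}$, which must coincide with the inputs Lemma~\ref{lem:increase-m-comp} expects (this is again where $\ord(P)\le m$, hence $\{m\}\restr_{\ge\ord(P)}=\{m\}$, is used); and the new marker sets across the premisses must still form a disjoint union, which holds because $m\notin M$. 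Once this correspondence between the \AppR premisses and the abstract slots of Lemma~\ref{lem:increase-m-comp} is pinned down, the rest is routine checking against the rule definitions, including the degenerate cases $I=\emptyset$ and $r=0$.
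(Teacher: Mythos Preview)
Your proposal is correct and follows essentially the same approach as the paper's own proof: induction on the derivation, routing the order-$m$ marker into the premiss with maximal flag counter, converting the siblings via Lemma~\ref{lem:increase-m-aux-0}, and discharging the $\Comp$ bookkeeping with Lemma~\ref{lem:increase-m-comp} and Corollaries~\ref{cor:increase-m-comp-con-0}, \ref{cor:increase-m-comp-con-pos}. The case analysis, the side-condition checks (in particular the restrictions $G_i\restr_{<\ord(P)}=F_i\restr_{<\ord(P)}$, $(M_s\cup\{m\})\restr_{<\ord(P)}=M_s\restr_{<\ord(P)}$, and the disjointness of the new marker sets via $m\notin M$), and the split of the \AppR case according to whether the maximum counter sits at $P$ or at some $Q_s$ all match the paper's argument.
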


	\begin{proof}
		Denote $\hat\tau=(m,F,M,\tau)$ and $\hat\tau'=(m+1,F,M\cup\{m\},\tau)$.
		The proof is by induction on the structure of the derivation of $\Gamma\vdash R:\hat\tau\triangleright c$.
		We have several cases, depending on the shape of $R$.
		
		The case of $R=\br\,P_1\,P_2$ follows immediately from the induction assumption, as in Lemma \ref{lem:increase-m-aux-0}.
		
		Suppose that $R=x$ is a variable.
		The \VarR rule used in the derivation ensures that $c=0$ and that $\Gamma(x)$ contains a full type $(k,F,M',\tau)$ with $M\restr_{<k}=M'$.
		By assumptions of the lemma, $k\leq m$.
		Then $(M\cup\{m\})\restr_{<k}=M'$, and hence the \VarR rule can equally well derive $\Gamma\vdash R:\hat\tau'\triangleright 0$;
		we have $0 \geq \log_2 0$.
		
		Next, suppose that $R=\lambda x.P$.
		Then the derivation ends with the \LamR rule, whose premiss is $\Gamma'[x\mapsto T]\vdash P:(m,F,M_\lambda,\tau_\lambda)\triangleright c$,
		where $\tau=T\arr\tau_\lambda$, and $M=M_\lambda\setminus\bigcup_{(k,F',M',\sigma)\in T}M'$, and $\Split(\Gamma\palka\Gamma')$, and $\Gamma'(x)=\emptyset$.
		Because $\tau$ is a type, the definition of a type ensures that all full types in $T$ are of order $\ord(R)\leq m$.
		Additionally $\ord(P)\leq\ord(R)\leq m$, so assumptions of the lemma are satisfied for the premiss;
		the induction assumption gives us a derivation of $\Gamma'[x\mapsto T]\vdash P:(m+1,F,M_\lambda\cup\{m\},\tau_\lambda)\triangleright d$, where $d\geq\log_2 c$.
		Again because all full types in $T$ are of order $\ord(R)\leq m$ (and hence $M'\subseteq\{0,\dots,m-1\}$ for all $(k,F',M',\sigma)\in T$)
		we have $M\cup\{m\}=M_\lambda\cup\{m\}\setminus\bigcup_{(k,F',M',\sigma)\in T}M'$.
		Thus after applying back the \LamR rule we obtain a derivation of $\Gamma\vdash R:\hat\tau'\triangleright d$.
		
		Next, suppose that $R=a$ (where $a$ is a symbol of rank $0$).
		We have $\tau=o$.
		The conditions of the \ConR rule are $\Split(\Gamma\palka\Gammae)$ and $\Comp_m(M\palkaC(F',c'))=(F,c)$, where $F'=\emptyset$ and $c'=1$ if $m=0$, and $F'=\{0\}$ and $c'=0$ if $m>0$.
		By Corollary \ref{cor:increase-m-comp-con-0}, $\Comp_{m+1}(M\cup\{m\}\palkaC(\{0\},0))=(F,d)$ for some $d$ such that $d\geq\log_2 c$.
		We can use the \ConR rule to derive $\Gamma\vdash R:\hat\tau'\triangleright d$.

		Next, suppose that $R=a\,P_1\,\dots\,P_r$ for $a\neq\br$ and $r>0$.
		We have $\tau=o$.
		Denote $I=\{1,\dots,r\}$.
		Let $\Gamma_i\vdash P_i:(m,F_i,M_i,o)\triangleright c_i$ for $i\in I$ be the premisses of the final \ConR rule,
		and let $s\in I$ be such that $c_s\geq c_i$ for all $i\in I$.
		Using the induction assumption we derive $\Gamma_s\vdash P_s:(m+1,F_s,M_s\cup\{m\},o)\triangleright d_s$ for some $d_s$ such that $d_s\geq\log_2 c_s$, 
		while for $i\in I\setminus\{s\}$ we use Lemma \ref{lem:increase-m-aux-0} to derive $\Gamma_i\vdash P_i:(m+1,G_i,M_i,o)\triangleright0$, where $G_i=F_i\cup\{m\mid c_i>0\}$.
		We want to apply back the \ConR rule.
		The condition $M=M'\uplus M_1\uplus\dots\uplus M_r$ updates accordingly: $m$ is added to $M$ and to $M_s$.
		The conditions $(r>0)\Rightarrow(M'=\emptyset)$ and $\Split(\Gamma\palka\Gamma_1,\dots,\Gamma_r)$ remain unchanged.
		We know that $\Comp_m(M\palkaC(F',c'),((F_i,c_i))_{i\in I})=(F,c)$, where $F'=\emptyset$ and $c'=1$ if $m=0$, and $F'=\{0\}$ and $c'=0$ if $m>0$;
		by Corollary \ref{cor:increase-m-comp-con-pos} this implies $\Comp_{m+1}(M\cup\{m\}\palkaC(\{0\},0),(F_s,d_s),((G_i,0))_{i\in I\setminus\{s\}})=(F,d)$ for some $d$ such that $d\geq\log_2 c$, as required.
		
		Finally, suppose that $R=P\,Q$.
		Let $\Gamma'\vdash P:(m,F',M',T\arr\tau)\triangleright c'$ and $\Gamma_i\vdash Q:(m,F_i,M_i,\tau_i)\triangleright c_i$ for each $i\in I$ be the premisses of the final \AppR rule,
		where $T=\{(\ord(P),F_i\restr_{<\ord(P)},M_i\restr_{<\ord(P)},\tau_i)\mid i\in I\}$.
		A condition of the \AppR rule implies that $\ord(Q)<\ord(P)\leq m$, so we can use the induction assumption for these premisses.
		We know that $\Comp_m(M\palkaC(F',c'),((F_i\restr_{\geq\ord(P)},c_i))_{i\in I})=(F,c)$.
		Denote $G'=F'\cup\{m\mid c'>0\}$ and $G_i=F_i\cup\{m\mid c_i>0\}$ for $i\in I$.
		Notice that $G_i\restr_{\geq\ord(P)}=F_i\restr_{\geq\ord(P)}\cup\{m\mid c_i>0\}$.
		We have two subcases.
		If $c'\geq c_i$ for all $i\in I$, then using the induction assumption we derive $\Gamma'\vdash P:(m+1,F',M'\cup\{m\},T\arr\tau)\triangleright d'$ for some $d'$ such that $d'\geq\log_2 c'$,
		and using Lemma \ref{lem:increase-m-aux-0} we derive $\Gamma_i\vdash Q:(m+1,G_i,M_i,\tau_i)\triangleright 0$ for each $i\in I$.
		Lemma \ref{lem:increase-m-comp} then implies that $\Comp_{m+1}(M\cup\{m\}\palkaC(F',d'),((G_i\restr_{\geq\ord(P)},0))_{i\in I})=(F,d)$ for some $d$ such that $d\geq\log_2 c$.
		Otherwise, we choose $s\in I$ such that $c_s\geq c_i$ for all $i\in I$ (and $c_s\geq c'$);
		using the induction assumption we derive $\Gamma_s\vdash Q:(m+1,F_s,M_s\cup\{m\},\tau_s)\triangleright d_s$ for some $d_s$ such that $d_s\geq\log_2 c_s$,
		and using Lemma \ref{lem:increase-m-aux-0} we derive $\Gamma'\vdash P:(m+1,G',M',T\arr\tau)\triangleright0$ and $\Gamma_i\vdash Q:(m+1,G_i,M_i,\tau_i)\triangleright 0$ for each $i\in I\setminus\{s\}$.
		Lemma \ref{lem:increase-m-comp} then implies that $\Comp_{m+1}(M\cup\{m\}\palkaC(F_s\restr_{\geq\ord(P)},d_s),(G',0),((G_i\restr_{\geq\ord(P)},0))_{i\in I\setminus\{s\}})=(F,d)$ for some $d$ such that $d\geq\log_2 c$.
		In both cases we apply back the \AppR rule to the obtained type judgments.
		The condition $M=M'\uplus\biguplus{}_{i\in I}M_i$ is updated accordingly: $m$ is added to $M$ and either to $M'$ or to $M_s$.
		The condition $\Split(\Gamma\palka\Gamma',(\Gamma_i)_{i\in I})$ remains unchanged, and $\ord(P)\leq m+1$ holds since we even have $\ord(P)\leq m$.
		We also have that $G_i\restr_{<\ord(P)}=F_i\restr_{<\ord(P)}$ and $(M_s\cup\{m\})\restr_{<\ord(P)}=M_s$, so the type required in a premiss for $P$ is indeed $T\arr o$.
	\end{proof}

	Lemma \ref{lem:increase-m} says that if we can derive $\Gammae\vdash R:(m,\emptyset,\{0,\dots,m-1\},o)\triangleright c$, 
	then we can also derive $\Gammae\vdash R:(m+1,\emptyset,\{0,\dots,m\},o)\triangleright d$ for some $d$ such that $d\geq\log_2 c$.
	This is just a special case of Lemma \ref{lem:increase-m-aux}.

\section{Proof of Lemma \ref{lem:c-step}}
	
\newcommand{\Mk}{\mathsf{Mk}}

	Below, by $\Mk(\hat\tau)$ we denote the set of marker orders of the full type $\hat\tau$, i.e., $\Mk(\hat\tau)=M$ if $\hat\tau=(m,F,M,\tau)$.
	We extend this notation to sets of full types: $\Mk(T)=\bigcup_{\hat\tau\in T}\Mk(\hat\tau)$,
	and to type environments: $\Mk(\Gamma)=\bigcup_x\Mk(\Gamma(x))$, where $x$ ranges over all variables.

	\begin{lemma}\label{lem:clear-tenv}
		Suppose that we can derive $\Gamma\vdash R:\hat\tau\triangleright c$, and $x$ is not free in $R$.
		Then for $\Sigma=\Gamma[x\mapsto\emptyset]$ we can also derive $\Sigma\vdash R:\hat\tau\triangleright c$, and $\Split(\Gamma\palka\Sigma)$ holds.
	\end{lemma}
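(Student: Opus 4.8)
The plan is to induct on the structure of the derivation of $\Gamma\vdash R:\hat\tau\triangleright c$, as with the other structural lemmata for this system. Because $x$ is not free in $R$, the derivation never uses $x$ in a \VarR rule, and the type environment $\Gamma$ only carries, splits and weakens full types; deleting $\Gamma(x)$ should therefore change neither the shape of the derivation nor the derived full type, the flag sets, or the flag counter (none of which depend on the environment). The one delicate point is the $\Split$ predicate, since it may not discard a full type with a nonempty marker set. So the heart of the matter is the observation that, under the hypotheses, $\Gamma(x)$ \emph{contains no full type with a nonempty set of marker orders}. Granting this, $\Split(\Gamma\palka\Sigma)$ is immediate: $\Sigma$ agrees with $\Gamma$ off $x$, and at $x$ the only thing $\Split$ could demand --- that the marker-providing full types of $\Gamma(x)$ reappear in $\Sigma(x)=\emptyset$ --- is vacuous; the same vacuity at $x$ is exactly what makes every $\Split$ side condition go through in the inductive step.

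I would prove this observation together with the derivability claim, by the same induction. At a leaf, a \VarR rule is for a variable $y\ne x$ (otherwise $x$ is free in $R$), and its condition $\Split(\Gamma\palka\Gammae[y\mapsto\{(k,F,M',\tau)\}])$ forces every variable other than $y$, in particular $x$, to carry no marker-providing full type; a \ConR leaf has a $\Split$ condition with no premiss environments, which forces no variable at all to carry one. At an inner rule, were $\Gamma(x)$ to contain a marker-providing full type, the $\Split$ condition of that rule would push it into the $x$-slot of some premiss environment, contradicting the induction hypothesis, since $x$ is not free in the corresponding subterm.

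For the derivability of $\Sigma\vdash R:\hat\tau\triangleright c$ I would run through the rules. In \BrR, \ConR and \AppR, $x$ is free in none of the subterms for which types are derived, so the induction hypothesis replaces each premiss environment $\Gamma_i$ by $\Gamma_i[x\mapsto\emptyset]$; every side condition other than the $\Split$ one is literally unchanged, and $\Split$ survives because off $x$ it coincides with the original condition and at $x$ all environments involved are empty. In \VarR, $R=y$ with $y\ne x$, so $\Gamma[x\mapsto\emptyset](y)=\Gamma(y)$ and the rule applies verbatim with $\Gamma$ replaced by $\Sigma$. For \LamR I would split on the bound variable: if it equals $x$ (so $x$ may still be free in the body), no induction hypothesis is needed --- the premiss already uses an environment mapping $x$ to $\emptyset$, and $\Split(\Sigma\palka\Gamma')$ holds just as $\Split(\Gamma\palka\Gamma')$ did; if the bound variable is some $y\ne x$, then $x$ is not free in the body, the induction hypothesis turns the premiss into one over $(\Gamma'[x\mapsto\emptyset])[y\mapsto T]$, which still maps $y$ to $\emptyset$, and $\Split(\Sigma\palka\Gamma'[x\mapsto\emptyset])$ holds for the same reason, so \LamR re-applies. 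The only nonroutine step is the bottom-up tracing of marker-providing full types in the observation, and it is short; the rest is bookkeeping with $\Split$.
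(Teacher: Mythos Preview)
Your proposal is correct and is essentially a careful expansion of the paper's two-sentence sketch: the paper simply observes that since $x$ is not free in $R$, every full type in $\Gamma(x)$ must eventually be discarded by some $\Split$ condition, hence has empty marker set, and then those full types can be stripped from all environments in the derivation. Your induction makes this precise; the only small glosses are that \BrR has no $\Split$ side condition (but passes $\Gamma$ unchanged to its premiss, so the induction hypothesis applies directly), and that in the \LamR case with bound variable equal to $x$ the ``no marker-providing full types in $\Gamma(x)$'' observation comes from $\Split(\Gamma\mid\Gamma')$ together with $\Gamma'(x)=\emptyset$ rather than from the induction hypothesis --- both trivial to fill in.
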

	
	\begin{proof}
		Because $x$ is not free in $R$, all full types assigned to $x$ by $\Gamma$ are discarded somewhere in the derivation of $\Gamma\vdash R:\hat\tau\triangleright c$.
		On the one hand, this means that the set of marker orders in all these full types is empty, and thus $\Split(\Gamma\palka\Sigma)$ holds.
		On the other hand, we can remove these full types from type environments in the derivation, which results in a derivation of $\Sigma\vdash R:\hat\tau\triangleright c$.
	\end{proof}

	\begin{lemma}\label{lem:enrich-tenv}
		Suppose that we can derive $\Gamma\vdash P:\hat\tau\triangleright c$.
		If $\Split(\Gamma'\palka\Gamma)$ holds, then we can also derive $\Gamma'\vdash P:\hat\tau\triangleright c$.
	\end{lemma}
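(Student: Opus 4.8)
The plan is a straightforward structural induction on the derivation of $\Gamma\vdash P:\hat\tau\triangleright c$, whose only genuinely nontrivial ingredient is a composition property of $\Split$ under weakening, which I would establish first.

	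The property is: if $\Split(\Gamma'\palka\Gamma)$ and $\Split(\Gamma\palka(\Gamma_i)_{i\in I})$, then $\Split(\Gamma'\palka(\Gamma_i)_{i\in I})$. Indeed, $\Split(\Gamma'\palka\Gamma)$ gives $\Gamma(x)\subseteq\Gamma'(x)$ for every variable $x$, and also that every full type $\hat\sigma\in\Gamma'(x)$ with $\Mk(\hat\sigma)\neq\emptyset$ already lies in $\Gamma(x)$; combined with $\Gamma(x)\subseteq\Gamma'(x)$, this means $\Gamma(x)$ and $\Gamma'(x)$ contain exactly the same full types with nonempty marker set. Hence $\Gamma_i(x)\subseteq\Gamma(x)\subseteq\Gamma'(x)$ for all $i$, and any full type in $\Gamma'(x)$ that provides markers also lies in $\Gamma(x)$, so it occurs in some $\Gamma_i(x)$ by $\Split(\Gamma\palka(\Gamma_i)_{i\in I})$. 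The one-element instance of this already covers what is needed for the \VarR and \LamR cases, where the split family is a single environment.

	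With this in hand the induction is almost mechanical. In the \VarR case the premises are $\Split(\Gamma\palka\Gammae[x\mapsto\{(k,F,M',\tau)\}])$ and $M\restr_{<k}=M'$; the composition property upgrades the first to $\Split(\Gamma'\palka\Gammae[x\mapsto\{(k,F,M',\tau)\}])$, the second is untouched, so \VarR now derives $\Gamma'\vdash x:(m,F,M,\tau)\triangleright 0$. In the \ConR and \AppR cases the subderivations for the arguments (and, for \AppR, for the operator) are reused verbatim together with their original environments; only the side condition $\Split(\Gamma\palka\dots)$ is replaced by $\Split(\Gamma'\palka\dots)$ via the composition property, while the remaining side conditions (the disjoint-union constraint on marker sets, the $\Comp_m$ equation, $\ord(P)\leq m$, the shape of the argument full-type set) involve neither $\Gamma$ nor $\Gamma'$ and stay the same. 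The \LamR case is identical: its premise $\Gamma_0[x\mapsto T]\vdash P:(m,F,M,\tau)\triangleright c$ together with $\Split(\Gamma\palka\Gamma_0)$ and $\Gamma_0(x)=\emptyset$ is reused with $\Gamma_0$ unchanged, $\Split(\Gamma'\palka\Gamma_0)$ follows by composition, and the conclusion's marker set $M\setminus\bigcup_{(k,F',M',\sigma)\in T}M'$ depends only on data that does not change.

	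Finally, the \BrR case is the only place where the induction hypothesis is actually invoked: the premise carries the same environment $\Gamma$ as the conclusion, so I would apply the induction hypothesis to obtain $\Gamma'\vdash P_i:\hat\tau\triangleright c$ and then re-apply \BrR. I do not anticipate a real obstacle; the only subtlety worth isolating is the composition property of $\Split$, and once it is stated the rest is bookkeeping that never touches the derived full type or the flag counter.
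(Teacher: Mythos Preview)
Your proposal is correct and follows essentially the same approach as the paper: induction on the derivation, invoking the induction hypothesis only in the \BrR case and, for every other rule, leaving the premisses untouched while replacing $\Gamma$ by $\Gamma'$ in the conclusion via the transitivity/composition property of $\Split$ that you isolate. The paper's proof is just a terser version of yours; it states exactly this idea in two sentences without making the $\Split$ composition lemma explicit.
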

	
	\begin{proof}
		Induction on the structure of a fixed derivation of $\Gamma\vdash P:\hat\tau\triangleright c$.
		If the last rule is \BrR, we change $\Gamma$ to $\Gamma'$ in the premiss of the rule using the induction assumption, and we obtain a derivation of $\Gamma'\vdash P:\hat\tau\triangleright c$.
		In every other rule we can change $\Gamma$ to $\Gamma'$ only in the conclusion.
	\end{proof}
	
	\begin{lemma}\label{lem:m-from-env-in-type}
		Suppose that we can derive $\Gamma\vdash R:\hat\tau\triangleright c$.
		Then $\Mk(\Gamma)\subseteq\Mk(\hat\tau)$, and $\Mk(\Gamma(x))\cap\Mk(\Gamma(y))=\emptyset$ for all variables $x,y$ with $x\neq y$.
	\end{lemma}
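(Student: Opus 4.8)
The plan is to prove the lemma by structural induction on the derivation of $\Gamma\vdash R:\hat\tau\triangleright c$. Before starting, I would isolate one elementary consequence of the definition of $\Split$: whenever $\Split(\Gamma\palka(\Gamma_i)_{i\in I})$ holds, then for every variable $x$ we have $\Mk(\Gamma(x))=\bigcup_{i\in I}\Mk(\Gamma_i(x))$. The inclusion $\supseteq$ is immediate from $\Gamma_i(x)\subseteq\Gamma(x)$, and the requirement that every full type from $\Gamma(x)$ with a non-empty marker set occurs in some $\Gamma_i(x)$ gives $\subseteq$. In particular, the special case $\Split(\Gamma\palka\Gamma')$ (weakening) leaves $\Mk(\Gamma(x))=\Mk(\Gamma'(x))$ for every $x$, and forces $\Mk(\Gamma(x))=\emptyset$ whenever $\Gamma'(x)=\emptyset$.

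With this observation the inductive step is largely mechanical. For \BrR the conclusion and the premiss share both $\Gamma$ and $\hat\tau$, so the claim follows from the induction hypothesis. For \VarR, the condition $\Split(\Gamma\palka\Gammae[x\mapsto\{(k,F,M',\tau)\}])$ forces $\Mk(\Gamma(y))=\emptyset$ for all $y\neq x$ and $\Mk(\Gamma(x))\subseteq M'$; since $M'=M\restr_{<k}\subseteq M=\Mk(\hat\tau)$ and all marker sets but one are empty, both assertions hold. For \ConR and \AppR I apply the induction hypothesis to every premiss, use the observation to write $\Mk(\Gamma(x))$ as the union of $\Mk(\Gamma_i(x))$ (together with $\Mk(\Gamma'(x))$ in the \AppR case), and bound each $\Mk(\Gamma_i(x))$ by $M_i$ (and $\Mk(\Gamma'(x))$ by $M'$); summing and using that $M=M'\uplus M_1\uplus\dots$ is a disjoint union yields $\Mk(\Gamma(x))\subseteq M=\Mk(\hat\tau)$. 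Disjointness across distinct variables $x\neq y$ follows the same pattern: if an order $o$ lay in $\Mk(\Gamma(x))\cap\Mk(\Gamma(y))$, it would be contributed to both by some premisses, and since the sets $M',M_1,\dots$ are pairwise disjoint these contributions must come from one and the same premiss, contradicting the induction hypothesis for that premiss.

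The one genuinely interesting case is \LamR, and it is the place where the two halves of the statement interact. The full type of the conclusion is $(m,F,M\setminus\Mk(T),T\arr\tau)$, where $M$ is the marker set of the premiss $\Gamma'[x\mapsto T]\vdash P:(m,F,M,\tau)\triangleright c$. By $\Split(\Gamma\palka\Gamma')$ and the observation above, $\Mk(\Gamma(y))=\Mk(\Gamma'(y))$ for every variable $y$ (with $\Mk(\Gamma(x))=\emptyset$, since $\Gamma'(x)=\emptyset$), so it suffices to show $\Mk(\Gamma'(y))\subseteq M\setminus\Mk(T)$. The induction hypothesis on the premiss gives $\Mk(\Gamma'(y))\subseteq M$, and — crucially — its disjointness half, applied to the variables $y$ and $x$ (recall $\Gamma'[x\mapsto T](x)=T$), gives $\Mk(\Gamma'(y))\cap\Mk(T)=\emptyset$; hence $\Mk(\Gamma'(y))\subseteq M\setminus\Mk(T)=\Mk(\hat\tau)$. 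Disjointness for the conclusion is inherited directly from the induction hypothesis restricted to variables other than $x$.

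I do not expect any real obstacle: once the $\Split$-observation is recorded, every case is a short computation, and the only subtlety is remembering to invoke the disjointness half of the induction hypothesis to absorb the subtraction of $\Mk(T)$ in the \LamR case.
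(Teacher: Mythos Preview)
Your proposal is correct and follows essentially the same route as the paper's proof: structural induction on the derivation, with the \LamR case handled exactly as you describe (using the disjointness half of the induction hypothesis to absorb the subtraction of $\Mk(T)$). Your preliminary observation that $\Split(\Gamma\palka(\Gamma_i)_{i\in I})$ implies $\Mk(\Gamma(x))=\bigcup_{i\in I}\Mk(\Gamma_i(x))$ is a clean way to package what the paper does inline in each case.
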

	
	\begin{proof}
		Fix some derivation of $\Gamma\vdash R:\hat\tau\triangleright c$; the proof is by induction on the structure of this derivation.
		We analyze the shape of $R$.
		
		Suppose first that $R=x$.
		The \VarR rule says that $\Split(\Gamma,\Gammae[x\mapsto T])$ holds for $T$ such that $\Mk(T)\subseteq\Mk(\hat\tau)$.
		Thus $\Mk(\Gamma(x))\subseteq\Mk(\hat\tau)$ and $\Mk(\Gamma(y))=\emptyset$ for all variables $y$ with $y\neq x$.
		
		In the case when $R=\br\,P_1\,P_2$ the thesis follows immediately from the induction assumption applied to the premiss of the final \BrR rule.
		
		Next, suppose that $R=\lambda z.P$.
		Let $\Gamma'[z\mapsto T]\vdash P:\hat\tau'\triangleright c$ with $\Gamma'(z)=\emptyset$ be the premiss of the final \LamR rule.
		By the conditions of the rule we have $\Mk(\hat\tau)=\Mk(\hat\tau')\setminus\Mk(T)$ and $\Split(\Gamma\palka\Gamma')$.
		The latter condition implies that $\Mk(\Gamma(z))=\emptyset$.
		For every variable $x$ other than $z$ the induction assumption ensures that $\Mk(\Gamma'[z\mapsto T](x))\subseteq\Mk(\hat\tau')$
		and that $\Mk(\Gamma'[z\mapsto T](x))\cap\Mk(\Gamma'[z\mapsto T](z))=\emptyset$.
		Since $\Mk(\Gamma(x))=\Mk(\Gamma'[z\mapsto T](x))$ and $\Mk(\Gamma'[z\mapsto T](z))=\Mk(T)$ we obtain $\Mk(\Gamma(x))\subseteq\Mk(\hat\tau')\setminus\Mk(T)=\Mk(\hat\tau)$.
		For any two variables $x,y$ with $z\neq x\neq y\neq z$ we have $\Mk(\Gamma'[z\mapsto T](x))\cap\Mk(\Gamma'[z\mapsto T](y))=\emptyset$ by the induction assumption, 
		and thus $\Mk(\Gamma(x))\cap\Mk(\Gamma(y))=\emptyset$.
		
		Next, suppose that $R=a\,P_1\,\dots\,P_r$.
		Let $\Gamma_i\vdash P_i:\hat\tau_i\triangleright c_i$ for $i\in\{1,\dots,r\}$ be the premisses of the final \ConR rule.
		Consider some variable $x$, and some $k\in\Mk(\Gamma(x))$.
		Because of the condition $\Split(\Gamma\palka\Gamma_1,\dots,\Gamma_r)$ of the \ConR rule we have $k\in\Mk(\Gamma_i(x))$ for some $i\in\{1,\dots,r\}$;
		then $k\in\Mk(\hat\tau_i)$ by the induction assumption $\Mk(\Gamma_i(x))\subseteq\Mk(\hat\tau_i)$, 
		and thus $k\in\Mk(\hat\tau)$ by the condition $\Mk(\hat\tau)=\Mk(\hat\tau_1)\uplus\dots\uplus\Mk(\hat\tau_r)$ of the \ConR rule.
		Suppose that we also have $k\in\Mk(\Gamma(y))$ for some variable $y$ other than $x$.
		Then $k\in\Mk(\Gamma_j(y))\subseteq\Mk(\hat\tau_j)$ for some $j\in\{1,\dots,r\}$;
		we cannot have $j=i$ by the induction assumption ($\Mk(\Gamma_i(x))$ and $\Mk(\Gamma_i(y))$ are disjoint),
		and we cannot have $j\neq i$ because $\Mk(\hat\tau_i)$ and $\Mk(\hat\tau_j)$ are disjoint.

		The case when $R=P\,Q$ is completely analogous to the previous one.
	\end{proof}

	\begin{lemma}\label{lem:c-subst}
		Suppose that we can derive $\Gamma\vdash R[S/x]:\hat\tau\triangleright c$.
		Then, for some finite set $I$, we can derive $\Sigma_i\vdash S:\hat\sigma_i\triangleright d_i$ for every $i\in I$, and $\Lambda^x\vdash R:\hat\tau\triangleright e$,
		where $\Lambda^x=\Lambda[x\mapsto\{\hat\sigma_i\mid i\in I\}]$ with $\Lambda(x)=\emptyset$, and $\Split(\Gamma\palka\Lambda,(\Sigma_i)_{i\in I})$ holds, 
		and $c=e+\sum_{i\in I}d_i$, and $\Mk(\hat\sigma_i)\cap\Mk(\hat\sigma_j)=\emptyset$ for $i,j\in I$ if $i\neq j$.
	\end{lemma}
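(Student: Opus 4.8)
The plan is to argue by induction on the structure of the given derivation of $\Gamma\vdash R[S/x]:\hat\tau\triangleright c$, with a case analysis on the shape of $R$ (not of $R[S/x]$, since we must record where the copies of $S$ sit). Three cases go through at once. If $x$ does not occur free in $R$, then $R[S/x]=R$; take $I=\emptyset$, $\Lambda=\Gamma[x\mapsto\emptyset]$, $e=c$, and invoke Lemma~\ref{lem:clear-tenv} for both $\Lambda\vdash R:\hat\tau\triangleright c$ and $\Split(\Gamma\palka\Lambda)$ (this subsumes $R=y$ with $y\neq x$, and $R=a$ of rank $0$). If $R=x$, then $R[S/x]=S$, so $\Gamma\vdash S:\hat\tau\triangleright c$; take $I$ a singleton with $\hat\sigma=\hat\tau$, $\Sigma=\Gamma$, $d=c$, and $\Lambda=\Gammae$, $e=0$, deriving $\Gammae[x\mapsto\{\hat\tau\}]\vdash x:\hat\tau\triangleright0$ by \VarR with $k:=m$ and $M':=M$ (legal as $\hat\tau=(m,F,M,\tau)$ has $F,M\subseteq\{0,\dots,m-1\}$). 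If $R=\br\,P_1\,P_2$, apply the induction hypothesis to the single premiss and reapply \BrR.

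For the remaining shapes $R=a\,P_1\,\dots\,P_r$, $R=P\,Q$, $R=\lambda y.P$ the recipe is uniform: apply the induction hypothesis to every premiss of the last rule, let $I$ be the disjoint union of the index sets it returns, let $\Lambda$ send $x$ to $\emptyset$ and every other variable to the union over the premisses of the environments obtained there, and reapply the last rule. Since the induction hypothesis returns for each subterm a derivation with the \emph{same} derived full type, every side condition mentioning the $F_i$'s, $M_i$'s, $\tau_i$'s --- in particular the disjoint unions $M=M'\uplus M_1\uplus\dots$ and any intersection type rebuilt from the premisses --- holds verbatim; only the flag counters drop. Because the quantities $f_n,f'_n$ inside $\Comp_m$ depend on the $F_i$'s and $M$ but not on the counters, and $\Comp_m$ merely adds the counters, its set output is unchanged and its counter output becomes $c-\sum_{i\in I}d_i$, so we set $e:=c-\sum_{i\in I}d_i$. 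The condition $\Split(\Gamma\palka\Lambda,(\Sigma_i)_{i\in I})$ follows by composing the top-level $\Split$ of the last rule with the per-premiss ones, using that marker-free full types may be discarded and duplicated freely while a marker-carrying full type of some $\Gamma(z)$ is forced into a premiss environment and thence into some $\Lambda_j(z)\subseteq\Lambda(z)$ or some $\Sigma_i(z)$. Disjointness of the $\Mk(\hat\sigma_i)$ within one premiss is the induction hypothesis; across premisses, each $\hat\sigma_i$ arising from the $j$-th premiss lies in that premiss' environment, so by Lemma~\ref{lem:m-from-env-in-type} its marker set is contained in the marker set of the full type derived there, one of the mutually disjoint sets occurring in the disjoint-union side condition of \ConR / \AppR.

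The case $R=\lambda y.P$ carries one extra wrinkle. After renaming so that $y\neq x$ and $y$ is not free in $S$, the premiss is $\Gamma'[y\mapsto T]\vdash P[S/x]:(m,F,M_\lambda,\tau_\lambda)\triangleright c$, and the induction hypothesis yields $\Lambda_P[x\mapsto\{\hat\sigma_i\}]\vdash P:(m,F,M_\lambda,\tau_\lambda)\triangleright e$ with only $\Lambda_P(y)\subseteq T$, whereas \LamR needs the entry $y\mapsto T$ exactly in order to reach the target type $T\arr\tau_\lambda$. Since $y$ is not free in $S$, every full type assigned to $y$ by the $\Sigma_i$ is marker-free (Lemma~\ref{lem:clear-tenv}), so the $\Split$ from the induction hypothesis already forces every marker-carrying full type of $T$ into $\Lambda_P(y)$; Lemma~\ref{lem:enrich-tenv} then raises $y$'s entry from $\Lambda_P(y)$ to $T$ without altering the judgment, \LamR applies, the subtraction $M_\lambda\setminus\bigcup_{\hat\sigma\in T}\Mk(\hat\sigma)$ returns exactly $M$, and the conclusion's $\Split$ comes, as before, by composition.

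I expect the $\Split$-chasing to be the main obstacle --- routine but finicky: it forces one to interleave Lemmas~\ref{lem:clear-tenv}, \ref{lem:enrich-tenv} and~\ref{lem:m-from-env-in-type} carefully, never silently discarding or duplicating a marker-carrying full type, and the $\lambda$-case environment surgery is legitimate precisely because the bound variable does not meet $S$.
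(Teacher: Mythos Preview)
Your proposal is correct and follows essentially the same approach as the paper's proof: induction on the derivation with a case split on the shape of $R$, the same handling of the base cases, the same uniform recipe for \ConR/\AppR (disjoint union of index sets, $\Lambda$ as the variable-wise union, disjointness of the $\Mk(\hat\sigma_i)$ across premisses via Lemma~\ref{lem:m-from-env-in-type} and the disjoint-union side condition), and the same $\lambda$-case repair using Lemmata~\ref{lem:clear-tenv} and~\ref{lem:enrich-tenv} to restore the $y$-entry to the full $T$. Your reading of $\Comp_m$ as additive in the counters and invariant in its set output is exactly the observation the paper exploits.
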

	
	\begin{proof}
		Fix some derivation of $\Gamma\vdash R[S/x]:\hat\tau\triangleright c$; the proof is by induction on the structure of this derivation.

		One possibility is that $x$ is not free in $R$.
		Then we can take $I=\emptyset$, and $\Lambda^x=\Lambda=\Gamma[x\mapsto\emptyset]$, and $e=c$.
		Because $x$ is not free in $R=R[S/x]$, by Lemma \ref{lem:clear-tenv} applied to the original derivation,
		we know that $\Lambda^x\vdash R:\hat\tau\triangleright e$ can be derived, and that $\Split(\Gamma\palka\Lambda)$ holds.
		Because $I=\emptyset$, the condition concerning disjointness of $\Mk(\hat\sigma_i)$ becomes trivial.
		
		In the sequel we assume that $x$ is free in $R$. 
		We analyze the shape of $R$.

		Suppose first that $R=x$.
		Then we take $I=\{1\}$, and $(\Sigma_1,\hat\sigma_1,d_1)=(\Gamma,\hat\tau,c)$, and $\Lambda=\Gammae$, and $e=0$.
		Obviously $\Lambda(x)=\emptyset$, and $\Split(\Gamma\palka\Lambda,\Sigma_1)$ holds, and $c=e+d_1$.
		We can derive $\Sigma_1\vdash S:\hat\sigma_1\triangleright d_1$ by assumption, and $\Lambda^x\vdash R:\hat\tau\triangleright e$ using the \VarR rule, where $\Lambda^x=\Lambda[x\mapsto\{\hat\sigma_1\}]$.

		Next, suppose that $R=\br\,P_1\,P_2$.
		Then our derivation ends with the \BrR rule, whose premiss is $\Gamma\vdash P_k[S/x]:\hat\tau\triangleright c$, for some $k\in\{1,2\}$.
		The induction assumption applied to this premiss gives us a derivation of $\Sigma_i\vdash S:\hat\sigma_i\triangleright d_i$ for every $i\in I$,
		and of $\Lambda^x\vdash P_k:\hat\tau\triangleright e$, where appropriate conditions hold.
		By applying back the \BrR rule to the latter type judgment, we obtain a derivation of $\Lambda^x\vdash R:\hat\tau\triangleright e$ as required.

		Next, suppose that $R=\lambda y.P$.
		We have $y\neq x$, and, as always during a substitution, we assume (by performing $\alpha$-conversion) that $y$ is not free in $S$.
		The original derivation ends with the \LamR rule, whose premiss is $\Gamma'[y\mapsto T]\vdash P[S/x]:\hat\tau'\triangleright c$ with $\Gamma'(y)=\emptyset$.
		We apply the induction assumption to this premiss, and we obtain a derivation of $\Sigma_i[y\mapsto T_i]\vdash S:\hat\sigma_i\triangleright d_i$ for every $i\in I$, 
		and of $\Lambda^x[y\mapsto T']\vdash P:\hat\tau'\triangleright e$,
		where $\Lambda^x=\Lambda[x\mapsto\{\hat\sigma_i\mid i\in I\}]$ with $\Lambda(x)=\Lambda(y)=\Sigma_i(y)=\emptyset$, and $\Split(\Gamma'[y\mapsto T]\palka\Lambda[y\mapsto T'],(\Sigma_i[y\mapsto T_i])_{i\in I})$ holds, 
		and $c=e+\sum_{i\in I}d_i$, and $\Mk(\hat\sigma_i)\cap\Mk(\hat\sigma_j)=\emptyset$ for $i,j\in I$ if $i\neq j$.
		Together with $\Split(\Gamma\palka\Gamma')$ the above implies that $\Split(\Gamma\palka\Lambda,(\Sigma_i)_{i\in I})$ holds.
		Because $y$ is not free in $S$, Lemma \ref{lem:clear-tenv} implies that for every $i\in I$ we can derive $\Sigma_i\vdash S:\hat\sigma_i\triangleright d_i$ 
		(instead of $\Sigma_i[y\mapsto T_i]\vdash S:\hat\sigma_i\triangleright d_i$) and that $\Split(\Sigma_i[y\mapsto T_i]\palka\Sigma_i)$ holds.
		Thus $\Mk(T_i)=\emptyset$ for all $i\in I$, and hence $\Mk(T\setminus T')=\emptyset$; simultaneously $T'\subseteq T$, which implies that $\Split(\Lambda^x[y\mapsto T]\palka\Lambda^x[y\mapsto T'])$ holds.
		In consequence, by Lemma \ref{lem:enrich-tenv} applied to $\Lambda^x[y\mapsto T']\vdash P:\hat\tau'\triangleright e$ we can derive $\Lambda^x[y\mapsto T]\vdash P:\hat\tau'\triangleright e$.
		To the latter type judgment we apply again the \LamR rule, which gives $\Lambda^x\vdash R:\hat\tau\triangleright e$.

		Another possibility is that $R=a\,P_1\,\dots\,P_r$.
		Then the original derivation ends with the \ConR rule, whose premisses are $\Gamma_j\vdash P_j[S/x]:\hat\tau_j\triangleright c_j$ for $j\in\{1,\dots,r\}$.
		We apply the induction assumption to these premisses.
		Assuming w.l.o.g.~that the resulting sets $I_j$ are disjoint, and taking $I=\bigcup_{j=1}^rI_j$, 
		we obtain a derivation of $\Sigma_i\vdash S:\hat\sigma_i\triangleright d_i$ for every $i\in I$, and of $\Lambda^x_j\vdash P_j:\hat\tau_j\triangleright e_j$ for every $j\in\{1,\dots,r\}$,
		where, for every $j\in\{1,\dots,r\}$, we have $\Lambda^x_j=\Lambda_j[x\mapsto\{\hat\sigma_i\mid i\in I_j\}]$ with $\Lambda_j(x)=\emptyset$, 
		and $\Split(\Gamma_j\palka\Lambda_j,(\Sigma_i)_{i\in I_j})$ holds, 
		and $c_j=e_j+\sum_{i\in I_j}d_i$,
		and $\Mk(\hat\sigma_i)\cap\Mk(\hat\sigma_{i'})=\emptyset$ for $i,i'\in I_j$ if $i\neq i'$.
		By Lemma \ref{lem:m-from-env-in-type} we have $\Mk(\hat\sigma_i)\subseteq\Mk(\hat\tau_j)$ for $i\in I_j$, $j\in J$.
		Since $\Mk(\hat\tau_j)\cap\Mk(\hat\tau_{j'})=\emptyset$ for $j,j'\in J$ with $j\neq j'$ by a side condition of the \ConR rule, 
		this implies that $\Mk(\hat\sigma_i)\cap\Mk(\hat\sigma_{i'})=\emptyset$ for all $i,i'\in I$ with $i\neq i'$.
		Recalling the side condition $\Split(\Gamma\palka(\Gamma_j)_{j\in\{1,\dots,r\}})$ of the \ConR rule, 
		we observe that $\Split(\Gamma\palka(\Lambda_j)_{j\in\{1,\dots,r\}},(\Sigma_i)_{i\in I})$ holds.
		Define $\Lambda$ by taking $\Lambda(z)=\bigcup_{j=1}^r\Lambda_j(z)$ for every variable $z$.
		We then have $\Split(\Gamma\palka\Lambda,(\Sigma_i)_{i\in I})$ and $\Split(\Lambda\palka(\Lambda_j)_{j\in\{1,\dots,r\}})$, as well as $\Split(\Lambda^x\palka(\Lambda^x_j)_{j\in\{1,\dots,r\}})$.
		Another side condition of the \ConR rule says that $\Comp_m(M\palkaC(F',c'),(F_1,c_1),\dots,(F_r,c_r))=(F,c)$ for appropriate arguments $M,F,F',c',F_j$.
		Taking $e=c+\sum_{j=1}^r (e_j-c_j)$ we also have that $\Comp_m(M\palkaC(F',c'),(F_1,e_1),\dots,(F_r,e_r))=(F,e)$.
		Having all this, we can apply the \ConR rule again, deriving $\Lambda^x\vdash R:\hat\tau\triangleright e$.
		Simultaneously we observe that $c=e+\sum_{i\in I}d_i$.
		
		Finally, suppose that $R=P\,Q$.
		This case is very similar to the previous one.
		The original derivation ends with the \AppR rule, whose premisses are $\Gamma_0\vdash P[S/x]:\hat\tau_0\triangleright c_0$ and $\Gamma_j\vdash Q[S/x]:\hat\tau_j\triangleright c_j$ for $j\in J$,
		where we assume that $0\not\in J$.
		We apply the induction assumption to all these premisses.
		Assuming w.l.o.g.~that the resulting sets $I_j$ are disjoint, and taking $I=\bigcup_{j\in\{0\}\cup J}I_j$, we obtain a derivation of
		$\Sigma_i\vdash S:\hat\sigma_i\triangleright d_i$ for every $i\in I$,
		and of $\Lambda^x_0\vdash P:\hat\tau_0\triangleright e_0$, and of $\Lambda^x_j\vdash Q:\hat\tau_j\triangleright e_j$ for every $j\in J$,
		where, for every $j\in\{0\}\cup J$, we have $\Lambda^x_j=\Lambda_j[x\mapsto\{\hat\sigma_i\mid i\in I_j\}]$ with $\Lambda_j(x)=\emptyset$,
		and $\Split(\Gamma_j\palka\Lambda_j,(\Sigma_i)_{i\in I_j})$ holds,
		and $c_j=e_j+\sum_{i\in I_j}d_i$,
		and $\Mk(\hat\sigma_i)\cap\Mk(\hat\sigma_{i'})=\emptyset$ for $i,i'\in I_j$ if $i\neq i'$.
		By Lemma \ref{lem:m-from-env-in-type} we have $\Mk(\hat\sigma_i)\subseteq\Mk(\hat\tau_j)$ for $i\in I_j$, $j\in\{0\}\cup J$.
		Since $\Mk(\hat\tau_j)\cap\Mk(\hat\tau_{j'})=\emptyset$ for $j,j'\in\{0\}\cup J$ with $j\neq j'$ by a side condition of the \ConR rule, 
		this implies that $\Mk(\hat\sigma_i)\cap\Mk(\hat\sigma_{i'})=\emptyset$ for all $i,i'\in I$ with $i\neq i'$.
		Recalling the side condition $\Split(\Gamma\palka(\Gamma_j)_{i\in\{0\}\cup J})$ of the \AppR rule, 
		we observe that $\Split(\Gamma\palka(\Lambda_j)_{j\in\{0\}\cup J},(\Sigma_i)_{i\in I})$ holds.
		Define $\Lambda$ by taking $\Lambda(z)=\bigcup_{j\in\{0\}\cup J}\Lambda_j(z)$ for every variable $z$.
		We then have $\Split(\Gamma\palka\Lambda,(\Sigma_i)_{i\in I})$ and $\Split(\Lambda\palka(\Lambda_j)_{j\in\{0\}\cup J})$, as well as $\Split(\Lambda^x\palka(\Lambda^x_j)_{j\in\{0\}\cup J})$.
		Another side condition of the \ConR rule says that $\Comp_m(M\palkaC((G_j,c_j))_{j\in\{0\}\cup J})=(F,c)$ for appropriate sets $M,F,G_j$.
		Taking $e=c+\sum_{j\in\{0\}\cup J}(e_j-c_j)$ we also have that $\Comp_m(M\palkaC((G_j,c_j))_{j\in\{0\}\cup J})=(F,e)$.
		Thus we can apply the \AppR rule again, deriving $\Lambda^x\vdash R:\hat\tau\triangleright e$.
		Simultaneously we observe that $c=e+\sum_{i\in I}d_i$.
	\end{proof}

	\begin{lemma}\label{lem:comp-with-empty}
		Suppose that $\Comp_m(M\palkaC(F,e),((\emptyset,d_i))_{i\in I})=(G,c)$, where $F\cap M=\emptyset$ and $F,M\subseteq\{0,\dots,\allowbreak m-1\}$.
		Then $G=F$ and $c=e+\sum_{i\in I}d_i$.
	\end{lemma}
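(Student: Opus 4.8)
The plan is to unfold the definition of $\Comp_m$ in this particular instantiation and track the auxiliary quantities $f_n$ and $f_n'$ that occur in it. First I would fix notation: view the list of input pairs as the distinguished pair $(F,e)$ together with the family $(\emptyset,d_i)_{i\in I}$, so that in the definition of $\Comp_m$ the sum $\sum|F_i\cap\{n\}|$ over all input pairs collapses to $|F\cap\{n\}|$, because $F_i=\emptyset$ for every $i\in I$. Hence $f_n=f_n'+|F\cap\{n\}|$ for all $n\in\{0,\dots,m\}$.

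The only step with any content is to prove, by induction on $n\in\{0,\dots,m\}$, that $f_n'=0$. The base case $f_0'=0$ is immediate, since $n-1=-1\notin M$. For the induction step, the induction hypothesis gives $f_{n-1}=|F\cap\{n-1\}|$; now if $n-1\notin M$ then $f_n'=0$ directly from the definition, while if $n-1\in M$ then $n-1\notin F$ because $F\cap M=\emptyset$, so $f_{n-1}=0$ and again $f_n'=0$. In words: the only mechanism that propagates a flag count upward in order is a marker, but every flag present here lives in $F$, which is disjoint from $M$, so no count is ever carried.

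Granting $f_n'=0$ for all $n$, both conclusions follow at once. For the flag counter, $c=f_m'+\sum c_i=0+e+\sum_{i\in I}d_i$. For the set, $f_n=|F\cap\{n\}|>0$ exactly when $n\in F$, so $G=\{n\in\{0,\dots,m-1\}\mid f_n>0\wedge n\notin M\}=(F\cap\{0,\dots,m-1\})\setminus M=F$, using $F\subseteq\{0,\dots,m-1\}$ and $F\cap M=\emptyset$. I do not expect a real obstacle; the only thing needing mild care is the bookkeeping that the empty argument sets contribute nothing to any $f_n$, together with the observation that it is precisely the hypothesis $F\cap M=\emptyset$ that stops the recursion defining $f_n'$ from ever producing a positive value.
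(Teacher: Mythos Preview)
Your proof is correct and follows essentially the same route as the paper's: both unfold the definition of $\Comp_m$, observe that the empty argument sets contribute nothing, and then establish $f_n'=0$ (and hence $f_n=|F\cap\{n\}|$) for all $n$ by running through $n=0,\dots,m$ using the disjointness hypothesis $F\cap M=\emptyset$. The paper phrases this as ``consecutively'' rather than an explicit induction, but the argument is the same.
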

	
	\begin{proof}
		Consider the numbers $f_n$ and $f_n'$ appearing in the definition of the $\Comp_m$ predicate.
		Looking at them consecutively for $n=0,\dots,m$ we notice that $f_n'=0$ and $f_n=|F\cap\{n\}|$.
		Indeed, $f_n'=0$ implies $f_n=|F\cap\{n\}|$, and if $n-1\not\in M$, we have $f_n'=0$,
		while if $n-1\in M$, we have $f_n'=f_{n-1}=|F\cap\{n-1\}|=0$, because $F\cap M=\emptyset$.
		Then $G=\{n\in\{0,\dots,m-1\}\mid f_n>0\land n\not\in M\}=F$ (again because $F\cap M=\emptyset$),
		and $c=f'_m+e+\sum_{i\in I}d_i=e+\sum_{i\in I}d_i$.
	\end{proof}

	\begin{proof}[Proof of Lemma \ref{lem:c-step}]
		Recall that we are given a derivation of $\Gamma\vdash Q:\hat\tau\triangleright c$ with $\ord(\hat\tau)=m$, and a $\beta$-reduction $P\bred Q$ that is of order $m$,
		and our goal is to derive $\Gamma\vdash P:\hat\tau\triangleright c$.
		
		Suppose first that $P=(\lambda x.R)\,S$ and $Q=R[S/x]$, where $\ord(\lambda x.R)=m$.
		From Lemma \ref{lem:c-subst} we obtain a derivation of $\Sigma_i\vdash S:\hat\sigma_i\triangleright d_i$ for every $i\in I$ (for some set $I$), 
		and a derivation of $\Lambda[x\mapsto T]\vdash R:\hat\tau\triangleright e$,
		where $T=\{\hat\sigma_i\mid i\in I\}$, and $\Lambda(x)=\emptyset$, and $\Split(\Gamma\palka\Lambda,(\Sigma_i)_{i\in I})$ holds, and $c=e+\sum_{i\in I}d_i$,
		and $\Mk(\hat\sigma_i)\cap\Mk(\hat\sigma_j)=\emptyset$ for $i,j\in I$ if $i\neq j$.
		Let us write $\hat\tau=(m,F,M,\tau)$, and $\hat\sigma_i=(m,F_i,M_i,\sigma_i)$.
		To the type judgment $\Lambda[x\mapsto T]\vdash R:\hat\tau\triangleright e$ we apply the \LamR rule,
		deriving $\Lambda\vdash\lambda x.R:(m,F,M\setminus\bigcup_{i\in I}M_i,T\arr\tau)\triangleright e$.

		To this type judgment, and to $\Sigma_i\vdash S:\hat\sigma_i\triangleright d_i$ for $i\in I$, we want to apply the \AppR rule.
		By definition of a full type, the sets $F_i$ and $M_i$ may contain only numbers smaller than $m$ (since $\hat\sigma_i$ is a full type).
		Recalling that $\ord(\lambda x.R)=m$ we have that the type $\{(\ord(\lambda x.R),F_i\restr_{<\ord(\lambda x.R)},M_i\restr_{<\ord(\lambda x.R)},\sigma_i)\mid i\in I\}\arr\tau$
		that we have to derive for $\lambda x.R$ is indeed $T\arr\tau$.
		The conditions $M=(M\setminus\bigcup_{i\in I}M_i)\uplus\biguplus_{i\in I}M_i$,
		and $\ord(\lambda x.R)\leq m$, and $\Split(\Gamma\palka\Lambda,(\Sigma_i)_{i\in I})$ follow from what we have.
		Notice that the sets $F_i\restr_{\geq\ord(\lambda x.R)}$ are empty, and that $F\cap M=\emptyset$ by definition of a full type ($\hat\tau=(m,F,M,\tau)$ is a full type),
		and hence $\Comp_m(M\palkaC(F,e),((F_i\restr_{\geq\ord(\lambda x.R)},d_i))_{i\in I})=(F,c)$ by Lemma \ref{lem:comp-with-empty}.
		Thus the \AppR rule can be applied; it derives $\Gamma\vdash P:\hat\tau\triangleright c$.
		
		It remains to consider the general situation: the redex involved in the $\beta$-reduction $P\bred Q$ is located somewhere deeper in $P$.
		Then the proof is by a trivial induction on the depth of this redex.
		Formally, we have several cases depending on the shape of $P$, but let us consider only a representative example:
		suppose that $P=T\,U$ and $Q=T\,V$ with $U\bred V$.
		In the derivation $\Gamma\vdash Q:\hat\tau\triangleright c$ we apply the induction assumption to those premisses of the final \AppR rule that concern the subterm $V$,
		and we obtain type judgments in which $V$ is replaced by $U$.
		We can apply the \AppR rule to them, and to the premiss talking about $T$, and derive $\Gamma\vdash P:\hat\tau\triangleright c$.
	\end{proof}

\section{Proof of Lemma \ref{lem:zero-when-no-marker}}
	
	In order to enable an inductive proof of Lemma \ref{lem:zero-when-no-marker}, we need to strengthen slightly its statement, and consider also $\lambda$-terms of order $m$.
	We say that a full type $(m,F,M,T_1\arr\dots\arr T_k\arr o)$ is $(m-1)$-clear if $m-1\not\in M\cup\bigcup_{i=1}^k\Mk(T_i)$.
	
	\begin{lemma}\label{lem:low-is-m-1-clear}
		If $\hat\tau\in\Ttrip^\alpha_m$ for $\ord(\alpha)<m$, and $m-1\not\in\Mk(\hat\tau)$, then $\hat\tau$ is $(m-1)$-clear.
	\end{lemma}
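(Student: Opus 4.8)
The plan is to prove the statement by directly unwinding the definitions of full types and of $(m-1)$-clearness, using the order bound $\ord(\alpha)<m$ as the only real ingredient. Write $\hat\tau=(m,F,M,\tau)$ with $\tau\in\Tt^\alpha$, and let $\alpha=\alpha_1\arr\dots\arr\alpha_k\arr o$, so that $\tau=T_1\arr\dots\arr T_k\arr o$, where by the clause $\Tt^{\beta\arr\gamma}=\Pp(\Ttrip_{\ord(\beta\arr\gamma)}^\beta)\times\Tt^\gamma$ each $T_i$ is a subset of $\Ttrip^{\alpha_i}_{\ord(\alpha_i\arr\dots\arr\alpha_k\arr o)}$, i.e.\ a set of full types all of order $\ord(\alpha_i\arr\dots\arr\alpha_k\arr o)$. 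Since $\hat\tau$ being $(m-1)$-clear means precisely $m-1\notin M\cup\bigcup_{i=1}^k\Mk(T_i)$, and $m-1\notin M=\Mk(\hat\tau)$ is given, it suffices to show $m-1\notin\Mk(T_i)$ for every $i$.

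For this I would observe that for each $i\in\{1,\dots,k\}$ we have $\ord(\alpha_i\arr\dots\arr\alpha_k\arr o)\leq\ord(\alpha)<m$, simply because the order of any suffix of $\alpha$ is at most $\ord(\alpha)$. Hence every full type $\hat\sigma\in T_i$ has $\ord(\hat\sigma)<m$, so $\ord(\hat\sigma)\leq m-1$; writing $\hat\sigma=(j,F',M',\sigma')$ with $j=\ord(\hat\sigma)\leq m-1$, the definition $\Ttrip_j^\beta=\{(j,F',M',\tau')\mid F',M'\subseteq\{0,\dots,j-1\},\dots\}$ forces $M'\subseteq\{0,\dots,j-1\}\subseteq\{0,\dots,m-2\}$, so $m-1\notin M'=\Mk(\hat\sigma)$. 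Taking the union over $\hat\sigma\in T_i$ gives $m-1\notin\Mk(T_i)$, and then combining over all $i$ together with the hypothesis $m-1\notin M$ yields $m-1\notin M\cup\bigcup_{i=1}^k\Mk(T_i)$, which is exactly the claim.

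I do not expect a genuine obstacle: the argument is a short computation with the definitions of $\Tt^\alpha$, $\Ttrip^\alpha_k$, and $\Mk$. The only step requiring a bit of care is the sort bookkeeping in the second paragraph, namely confirming that every component set $T_i$ of $\tau$ consists of full types whose order is $\leq\ord(\alpha)$; this is immediate from the recursive definition of $\Tt^{\beta\arr\gamma}$ but is the one place where one must track the sorts of the subterms rather than merely quote a previous lemma.
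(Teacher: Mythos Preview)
Your proof is correct and follows essentially the same approach as the paper: decompose $\alpha=\alpha_1\arr\dots\arr\alpha_k\arr o$ and $\tau=T_1\arr\dots\arr T_k\arr o$, observe that each $T_i\subseteq\Ttrip^{\alpha_i}_{m_i}$ with $m_i=\ord(\alpha_i\arr\dots\arr\alpha_k\arr o)\leq\ord(\alpha)<m$, so $\Mk(T_i)\subseteq\{0,\dots,m_i-1\}$ cannot contain $m-1$, and combine with the hypothesis $m-1\notin M$. Your write-up is slightly more explicit about unwinding the definition of $\Ttrip_j^\beta$ at the level of individual $\hat\sigma\in T_i$, but the argument is the same.
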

	
	\begin{proof}
		If $\alpha=\alpha_1\arr\dots\arr\alpha_k\arr o$, we can write $\hat\tau=(m,F,M,T_1\arr\dots\arr T_k\arr o)$.
		For $i\in\{1,\dots,k\}$ by definition we have $T_i\subseteq\Ttrip^{\alpha_i}_{m_i}$ for $m_i=\ord(\alpha_i\arr\dots\arr\alpha_k\arr o)$,
		and hence $\Mk(T_i)\subseteq\{0,\dots,m_i-1\}$;
		because $\ord(\alpha_i\arr\dots\arr\alpha_k\arr o)\leq\ord(\alpha)<m$, we have $m-1\not\in T_i$.
		Thus $\hat\tau$ is $(m-1)$-clear.
	\end{proof}
	
	\begin{lemma}\label{lem:zero-when-no-marker-ext}
		If we can derive $\Gamma\vdash R:\hat\tau\triangleright c$, where $\ord(\hat\tau)=m>0$ and $\hat\tau$ is $(m-1)$-clear, then $c=0$.
	\end{lemma}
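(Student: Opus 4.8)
The plan is to prove Lemma~\ref{lem:zero-when-no-marker-ext} by induction on the structure of a fixed derivation of $\Gamma\vdash R:\hat\tau\triangleright c$, where I write $\hat\tau=(m,F,M,\tau)$. The single arithmetic observation driving every case is that, since $(m-1)$-clearness of $\hat\tau$ in particular forces $m-1\notin M$, the value $f'_m$ computed inside any invocation $\Comp_m(M\palkaC\dots)$ is $0$ (it equals $f_{m-1}$ only when $m-1\in M$, and is $0$ otherwise); hence the flag counter produced in the conclusion of the last rule is exactly the sum of the flag counters received from its premisses. So it will suffice to check, for each shape of $R$, that every subderivation used in the last rule again derives a full type of order $m>0$ that is $(m-1)$-clear, and then to apply the induction hypothesis to it.

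I would dispatch the base and easy cases first. If $R=x$, the \VarR rule outright requires $c=0$. If $R=\br\,P_1\,P_2$, the \BrR premiss carries the very same $\hat\tau$. If $R=\lambda x.P$, the \LamR premiss is $\Gamma'[x\mapsto T]\vdash P:(m,F,M_\lambda,\tau_\lambda)\triangleright c$ with $\tau=T\arr\tau_\lambda$ and $M=M_\lambda\setminus\Mk(T)$, so $M_\lambda\subseteq M\cup\Mk(T)$; since $T$ is one of the argument sets of the $(m-1)$-clear type $\tau$, both $M$ and $\Mk(T)$ avoid $m-1$, hence so does $M_\lambda$, and the argument sets of $\tau_\lambda$ are among those of $\tau$, so the premiss type is again $(m-1)$-clear. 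If $R=a\,P_1\,\dots\,P_r$ with $a\neq\br$, then $\tau=o$ and each premiss has type $(m,F_i,M_i,o)$; because the $M_i$ are summands of a disjoint union equal to $M$ we get $m-1\notin M_i$, and a type with target $o$ has no argument sets, so each premiss type is trivially $(m-1)$-clear and the induction hypothesis yields $c_i=0$. In the \ConR and \LamR cases the flag counter of the conclusion then equals the sum of those of the premisses, hence $0$.

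The interesting case is $R=P\,Q$ handled by \AppR, and I expect the bookkeeping there to be the only genuine obstacle. Write the operator premiss as $\Gamma'\vdash P:(m,F',M',T_P\arr\tau)\triangleright c'$ with $T_P=\{(\ord(P),F_i\restr_{<\ord(P)},M_i\restr_{<\ord(P)},\tau_i)\mid i\in I\}$, and the argument premisses as $\Gamma_i\vdash Q:(m,F_i,M_i,\tau_i)\triangleright c_i$, where $M=M'\uplus\biguplus_{i\in I}M_i$ and $\ord(P)\leq m$. For the operator premiss I would argue: $M'\subseteq M$ avoids $m-1$; the new argument set $T_P$ has $\Mk(T_P)=\bigcup_{i\in I}(M_i\restr_{<\ord(P)})\subseteq\bigcup_{i\in I}M_i\subseteq M$, so it too avoids $m-1$; and the remaining argument sets of $T_P\arr\tau$ are exactly those of $\tau$, which avoid $m-1$ since $\hat\tau$ is $(m-1)$-clear; so the operator premiss type is $(m-1)$-clear and the induction hypothesis gives $c'=0$. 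For each argument premiss $(m,F_i,M_i,\tau_i)$ I would instead invoke Lemma~\ref{lem:low-is-m-1-clear}: the sort $\beta$ of $Q$ has $\ord(\beta)=\ord(Q)<\ord(P)\leq m$ (since $P$ has a sort $\beta\arr\gamma$, so $\ord(P)\geq 1+\ord(\beta)$), hence $(m,F_i,M_i,\tau_i)\in\Ttrip^\beta_m$ with $\ord(\beta)<m$; as $m-1\notin M_i$, that lemma makes this full type automatically $(m-1)$-clear, and the induction hypothesis gives $c_i=0$. The flag counter produced by $\Comp_m$ in the conclusion is then $f'_m+c'+\sum_{i\in I}c_i=0$, so $c=0$.

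Finally, the stated Lemma~\ref{lem:zero-when-no-marker} is the special case $\ord(P)\leq m-1$: then the derived full type lies in $\Ttrip^\alpha_m$ with $\ord(\alpha)=\ord(P)<m$, so by Lemma~\ref{lem:low-is-m-1-clear} the hypothesis $m-1\notin M$ already makes it $(m-1)$-clear, and $c=0$ follows from Lemma~\ref{lem:zero-when-no-marker-ext} when $m>0$ (for $m=0$ the hypothesis $\ord(P)\leq m-1$ is unsatisfiable, so the claim is vacuous).
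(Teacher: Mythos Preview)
Your proof is correct and follows essentially the same route as the paper's: a structural induction on the derivation, verifying in each case that every premiss again derives an $(m-1)$-clear full type of order $m$, and then using $m-1\notin M$ to force $f'_m=0$ so that the conclusion's flag counter is the sum of those of the premisses. Your treatment of the \LamR and \AppR cases is in fact slightly more explicit than the paper's (you spell out that the argument sets of $\tau_\lambda$, respectively of $T_P\arr\tau$, are among those of $\tau$), and your appended derivation of Lemma~\ref{lem:zero-when-no-marker} from Lemma~\ref{lem:zero-when-no-marker-ext} matches the paper's as well.
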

	
	\begin{proof}
		Fix some derivation of $\Gamma\vdash R:\hat\tau\triangleright c$.
		The proof is by induction on the structure of this derivation.
		Let us write $\hat\tau=(m,F,M,\tau)$.
		We have several cases depending on the shape of $R$.
		
		If $R=x$, the \VarR rule ensures that the flag counter $c$ is $0$.
		
		If $R=\br\,P_1\,P_2$, then above the final \BrR rule we have a premiss $\Gamma\vdash P_i:\hat\tau\triangleright c$ for some $i\in\{1,2\}$;
		the induction assumption used for this premiss implies $c=0$.
		
		Suppose that $R=\lambda x.P$.
		Then above the final \LamR rule we have a premiss $\Gamma'\vdash P:(m,F,M',\tau')\triangleright c$, where $\tau=T\arr\tau'$ and $M=M'\setminus\Mk(T)$.
		Because $\hat\tau$ is $(m-1)$-clear, we have $m-1\not\in M\cup\Mk(T)$, hence also $m-1\not\in M'$; thus $(m,F,M',\tau')$ is $(m-1)$-clear.
		The induction assumption applied to our premiss implies $c=0$.
		
		Next, suppose that $R=a\,P_1\,\dots\,P_r$.
		Let $\Gamma_i\vdash P_i:(m,F_i,M_i,o)\triangleright c_i$ for $i\in\{1,\dots,r\}$ be the premisses of the final \ConR rule.
		For all $i\in\{1,\dots,r\}$ a side condition of this rule says that $M_i\subseteq M$, so the full type $(m,F_i,M_i,o)$ is $(m-1)$-clear, and hence $c_i=0$ by the induction assumption.
		We know that $\Comp_m(M\palkaC(\{0\},0),(F_1,c_1),\dots,(F_r,c_r))=(F,c)$ (recall that $m>0$).
		The number $f_m'$ considered in the definition of $\Comp_m$ is $0$ in our case of $m-1\not\in M$, and thus we have $c=f_m'+\sum_{i=1}^r c_i = 0$.

		Finally, suppose that $R=P\,Q$.
		Let $\Gamma'\vdash P:(m,F',M',T\arr\tau)\triangleright c'$ and $\Gamma_i\vdash Q:(m,F_i,M_i,\tau_i)\triangleright c_i$ for $i\in I$ be the premisses of the final \AppR rule.
		A side condition of the rule says that $M'\subseteq M$ and $M_i\subseteq M$ for $i\in I$, so $m-1$ does not belong to these sets.
		By definition the marker sets in full types in $T$ are subsets of some $M_i$, so $m-1\not\in\Mk(T)$, and hence $(m,F',M',T\arr\tau)$ is $(m-1)$-clear.
		On the other hand $\ord(Q)<\ord(P)\leq m$, so $(m,F,M_i,\tau_i)$ for $i\in I$ are $(m-1)$-clear by Lemma \ref{lem:low-is-m-1-clear}.
		Thus the induction assumption can be used for all premisses; it says that $c'=0$ and $c_i=0$ for all $i\in I$.
		Because $\Comp(M\palkaC(F',c'),((F_i,c_i))_{i\in I})=(F,c)$, and $m-1\not\in M$, we obtain $c=0$ (as in the previous case).
	\end{proof}

	\begin{proof}[Proof of Lemma \ref{lem:zero-when-no-marker}]
		Recall that in this lemma we are given a derivation of $\Gamma\vdash P:\hat\tau\triangleright c$, where $m-1\not\in\Mk(\hat\tau)$ and $\ord(P)\leq m-1$ for $m=\ord(\hat\tau)$, and we have to prove that $c=0$.
		Lemma \ref{lem:low-is-m-1-clear} implies that $\hat\tau$ is $(m-1)$-clear, and thus $c=0$ by Lemma \ref{lem:zero-when-no-marker-ext} (we have $m>0$ because $\ord(P)\leq m-1$).
	\end{proof}

\section{Proof of Lemma \ref{lem:s-step}}

	Let us formalize the notion of counting \AppR rules of order $m$ in a derivation.
	We use here \emph{extended type judgments} of the form $\Gamma\vdash P:\hat\tau\triangleright c\succ n$, where $\Gamma\vdash P:\hat\tau\triangleright c$ is a type judgment, and $n\in\Nat$.
	The number $n$ is called \emph{application counter}.
	The meaning is that $\Gamma\vdash P:\hat\tau\triangleright c$ can be derived by a derivation in which the \AppR rule of order $\ord(\hat\tau)$ is used $n$ times.
	Formally, we lift our type system so that it can derive extended type judgments, as follows.
	If, in the original type system, using premisses $\Gamma_i\vdash P_i:\hat\tau_i\triangleright c_i$ for $i\in I$ some rule could derive $\Gamma\vdash R:\hat\tau\triangleright c$, then
	\begin{itemize}
	\item	if $R=P\,Q$ with $\ord(P)=\ord(\hat\tau)$, 
		then using premisses $\Gamma_i\vdash P_i:\hat\tau_i\triangleright c_i\succ n_i$ for $i\in I$ we can derive $\Gamma\vdash R:\hat\tau\triangleright c\succ 1+\sum_{i\in I}n_i$;
	\item	otherwise (i.e., when $R=P\,Q$ with $\ord(P)\neq\ord(\hat\tau)$, or $R$ does not start with an application)
		using premisses $\Gamma_i\vdash P_i:\hat\tau_i\triangleright c_i\succ n_i$ for $i\in I$ we can derive $\Gamma\vdash R:\hat\tau\triangleright c\succ\sum_{i\in I}n_i$.
	\end{itemize}
	
	It should be clear that we can derive $\Gamma\vdash R:\hat\tau\triangleright c$ if and only if we can derive $\Gamma\vdash R:\hat\tau\triangleright c\succ n$ for some $n\in\Nat$.
	We also notice that in proofs of Lemmata \ref{lem:clear-tenv} and \ref{lem:enrich-tenv} the number of \AppR rules of the maximal order (and, more generally, the shape of a derivation) remain unchanged.
	Thus we can restate these lemmata as follows.
	
	\begin{lemma}\label{lem:clear-tenv-ext}
		Suppose that we can derive $\Gamma\vdash R:\hat\tau\triangleright c\succ 0$, and $x$ is not free in $R$.
		Then for $\Sigma=\Gamma[x\mapsto\emptyset]$ we can also derive $\Sigma\vdash R:\hat\tau\triangleright c\succ 0$, and $\Split(\Gamma\palka\Sigma)$ holds.
	\end{lemma}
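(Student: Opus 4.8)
The plan is to reuse the proof of Lemma~\ref{lem:clear-tenv} essentially verbatim and merely observe that the surgery it performs does not change the application counter. Recall that, given a derivation of $\Gamma\vdash R:\hat\tau\triangleright c$ with $x$ not free in $R$, that proof notes that every full type assigned by $\Gamma$ to $x$ is discarded somewhere in the derivation; hence each of them has an empty set of marker orders (a $\Split$ may not discard a marker-providing full type, and such a full type cannot be duplicated either, so it could not disappear), which immediately gives $\Split(\Gamma\palka\Sigma)$. Moreover, since $x$ is not free in $R$, its entry is never reintroduced by a \VarR rule (which would need $x$ as a subterm) nor by a \LamR rule (whose bound variable is, up to $\alpha$-conversion, fresh), so every occurrence of a full type of $x$ in the derivation descends from $\Gamma(x)$. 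The derivation of $\Sigma\vdash R:\hat\tau\triangleright c$ is then obtained by deleting, from every type environment in the derivation, exactly these marker-free full types.

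The only point to check beyond Lemma~\ref{lem:clear-tenv} is that this deletion preserves every extended type judgment, application counter included. Deleting a marker-free full type from a type environment keeps every side condition of the form $\Split(\dots)$ satisfied, and the $\Comp_m$ computations — as well as the sets $F$, $M$ and the flag counter $c$ — do not refer to type environments at all; so each node still hosts the same rule, deriving the same full type for the same $\lambda$-term. In particular a node derives a type for an application $P\,Q$ with $\ord(P)=\ord(\hat\tau)$ after the surgery exactly when it did before, and since the application counter of a node is the sum of the application counters of its premisses plus $1$ at precisely these nodes, it is preserved node by node. Thus a derivation of $\Gamma\vdash R:\hat\tau\triangleright c\succ 0$ is transformed into a derivation of $\Sigma\vdash R:\hat\tau\triangleright c\succ 0$.

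There is no real obstacle here: this is exactly the remark made just before the statement, that the shape of the derivations built in the proofs of Lemmata~\ref{lem:clear-tenv} and~\ref{lem:enrich-tenv} is unchanged, so ``re-running'' those proofs in the extended system is automatic. If one prefers a self-contained argument, one redoes the trivial induction on the structure of the derivation of $\Gamma\vdash R:\hat\tau\triangleright c\succ 0$ following the case split of the proof of Lemma~\ref{lem:c-subst} in the degenerate subcase ``$x$ not free in $R$'', carrying the application counter through each rule; the \AppR case is the only one where the counter could a priori change, and it does not, because the application $R=P\,Q$ and the order $\ord(P)$ are the same on both sides.
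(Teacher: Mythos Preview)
Your proposal is correct and mirrors exactly what the paper does: it simply observes that the surgery performed in the proof of Lemma~\ref{lem:clear-tenv} leaves the shape of the derivation unchanged, hence preserves the application counter node by node. The paper itself gives no further detail than this one-line remark, so your write-up is, if anything, more explicit than the original.
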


	\begin{lemma}\label{lem:enrich-tenv-ext}
		Suppose that we can derive $\Gamma\vdash P:\hat\tau\triangleright c\succ 0$.
		If $\Split(\Gamma'\palka\Gamma)$ holds, then we can also derive $\Gamma'\vdash P:\hat\tau\triangleright c\succ 0$.
	\end{lemma}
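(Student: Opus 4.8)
The plan is to obtain Lemma~\ref{lem:enrich-tenv-ext} as a direct re-reading of Lemma~\ref{lem:enrich-tenv}, now in the extended type system, observing that nothing in the latter's proof disturbs the application counter. Recall that application counters are non-negative and are summed along premisses, and that the only rule that increments the counter is an \AppR rule of order $\ord(\hat\tau)=m$; hence the hypothesis $\Gamma\vdash P:\hat\tau\triangleright c\succ 0$ already tells us that the fixed derivation contains no such rule, and that every extended type judgment occurring in it carries application counter $0$ as well.

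First I would set up the same induction as in the proof of Lemma~\ref{lem:enrich-tenv}: induction on the structure of a fixed derivation of $\Gamma\vdash P:\hat\tau\triangleright c\succ 0$. If the last rule is \BrR, its single premiss is $\Gamma\vdash P_i:\hat\tau\triangleright c\succ 0$ (the counter of the premiss equals that of the conclusion, because \BrR falls into the ``otherwise'' clause of the extended system and $|I|=1$); the induction hypothesis gives $\Gamma'\vdash P_i:\hat\tau\triangleright c\succ 0$, and reapplying \BrR yields $\Gamma'\vdash P:\hat\tau\triangleright c\succ 0$. For every other final rule --- \VarR, \LamR, \ConR, \AppR --- I would replace $\Gamma$ by $\Gamma'$ in the conclusion only, leaving the premisses (hence their application counters, and hence the conclusion's counter) untouched; the only thing left to verify is that the $\Split$ side-condition of the rule still holds after the replacement.

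Checking that $\Split$ condition is the routine part: from $\Split(\Gamma'\palka\Gamma)$ together with the condition $\Split(\Gamma\palka(\Gamma_i)_i)$ already present in the rule (or $\Split(\Gamma\palka\Gammae[x\mapsto T])$ in the \VarR case) one gets the desired $\Split(\Gamma'\palka(\Gamma_i)_i)$ by a one-line ``transitivity'' argument on the definition of $\Split$: the membership inclusions compose, a full type with non-empty marker set sitting in $\Gamma'(x)$ already sits in $\Gamma(x)$ and therefore in one of the $\Gamma_i(x)$, while full types with empty marker set may be added or dropped freely. I do not expect any genuine obstacle here: the lemma is a bookkeeping corollary, and the paragraph preceding it has already recorded the fact that makes it work, namely that the proof of Lemma~\ref{lem:enrich-tenv} never alters the shape of a derivation and in particular never creates or destroys an \AppR rule of order $m$. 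If anything, the only point requiring care is to notice that the induction is genuinely needed solely to peel off a leading chain of \BrR rules, every other case being non-recursive.
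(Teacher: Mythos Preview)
Your proposal is correct and matches the paper's approach: the paper does not give a separate proof of Lemma~\ref{lem:enrich-tenv-ext} at all, but simply remarks that the proof of Lemma~\ref{lem:enrich-tenv} leaves the shape of the derivation (and hence the number of order-$m$ \AppR rules) unchanged, so the extended version follows immediately. Your write-up is in fact more detailed than what the paper provides.
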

	
	As in the proof of Lemma \ref{lem:c-step} we start with a lemma describing substitution.
	
	\begin{lemma}\label{lem:s-subst}
		Suppose that we can derive $\Sigma_i\vdash S:\hat\sigma_i\triangleright d_i\succ 0$ for every $i\in I$ (for some finite set $I$), and $\Lambda^x\vdash R:\hat\tau\triangleright e\succ 0$,
		where $\Lambda^x=\Lambda[x\mapsto\{\hat\sigma_i\mid i\in I\}]$ with $\Lambda(x)=\emptyset$, and $\Split(\Gamma\palka\Lambda,(\Sigma_i)_{i\in I})$ holds,
		and $\ord(S)<\ord(\hat\tau)$, and $\ord(\hat\sigma_i)=\ord(\hat\tau)$ for all $i\in I$.
		Then we can derive $\Gamma\vdash R[S/x]:\hat\tau\triangleright e+\sum_{i\in I}d_i\succ 0$.
	\end{lemma}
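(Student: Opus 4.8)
The plan is to prove Lemma~\ref{lem:s-subst} by induction on the structure of the derivation of $\Lambda^x\vdash R:\hat\tau\triangleright e\succ 0$; the construction is, essentially, the one from the proof of Lemma~\ref{lem:c-subst} run backwards, so I would reuse the same case analysis in the opposite direction. Write $m=\ord(\hat\tau)$. For the base case in which $x$ does not occur free in $R$ we have $R[S/x]=R$; by Lemma~\ref{lem:clear-tenv-ext} every full type in $\Lambda^x(x)$ has an empty marker set, so $d_i=0$ for all $i\in I$ by Lemma~\ref{lem:zero-when-no-marker} (recall $\ord(S)\le m-1$); the same lemma also provides a derivation with environment $\Lambda$ (i.e.\ with $x$ dropped), and Lemma~\ref{lem:enrich-tenv-ext} then passes to $\Gamma$ — legitimately, since by Lemma~\ref{lem:m-from-env-in-type} each $\Sigma_i$ carries no marker (its full types being empty-marker), so $\Split(\Gamma\palka\Lambda,(\Sigma_i)_{i\in I})$ already yields $\Split(\Gamma\palka\Lambda)$. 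For the base case $R=x$ we have $R[S/x]=S$; since $\ord(\hat\sigma_i)=m$ for all $i$, the \VarR rule used must assign to $x$ a full type of order $m$, and its shape then forces that full type to be $\hat\tau$ itself, so $\hat\sigma_j=\hat\tau$ for some $j\in I$ while every other $\hat\sigma_i$ has an empty marker set, hence $d_i=0$; we take $\Sigma_j\vdash S:\hat\sigma_j\triangleright d_j\succ 0$, weaken its environment to $\Gamma$ by Lemma~\ref{lem:enrich-tenv-ext}, and observe $e+\sum_{i\in I}d_i=0+d_j$.

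In the inductive step I would split on the last rule, which fixes the shape of $R$, and partition $I$ among the premisses: by the $\Split$ side conditions a full type $\hat\sigma_i$ with $M_i\ne\emptyset$ flows to exactly one premiss (and is eventually used at a unique \VarR leaf for $x$ inside it), while a full type with $M_i=\emptyset$ — for which $d_i=0$ by Lemma~\ref{lem:zero-when-no-marker} — may be sent to any premisses or to none. I apply the induction hypothesis to each premiss, with the matching sub-index-set and with each $\Sigma_i$ reshaped to fit the premiss's environment: for a $\lambda$-abstraction over $y$ this means extending the relevant $\Sigma_i$ by $[y\mapsto T_i]$ for the appropriate part $T_i$ of the abstracted set, exactly as the $\Sigma_i[y\mapsto T_i]$ in the proof of Lemma~\ref{lem:c-subst}, and restricting on the outer variables; then I reassemble with the same rule. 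The application counter stays $0$: because $\ord(S)<m$ and the given derivation is $\succ 0$, the subterm $x$ is never the operator of an order-$m$ application, so substituting $S$ creates no order-$m$ application and every \AppR touched remains of order $<m$. The flag counter is handled by the additivity of $\Comp_m$ in its counter arguments: the numbers $f_n,f'_n$ in the definition of $\Comp_m$ are independent of the $c_i$, so replacing each premiss counter $c_i$ by $c_i+\sum_{j\in I_i}d_j$ replaces the conclusion counter $c$ by $c+\sum_{j\in I}d_j$ (for a single premiss this is exactly Lemma~\ref{lem:comp-with-empty}), which is the required equation $c=e+\sum_{i\in I}d_i$.

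The main obstacle is bookkeeping rather than insight: keeping the three $\Split$ invariants, the type-matching, and the equation $c=e+\sum_{i\in I}d_i$ synchronised while reshaping the environments $\Sigma_i$ through binders — and, more generally, whenever a rule's $\Split$ side condition redistributes a variable's full types. What makes it go through is that every full type with an empty marker set contributes $0$ to the flag counter, so duplicating, dropping, or relocating the corresponding derivations of $S$ never disturbs the accounting; only the finitely many full types with nonempty marker sets, each pinned to a single occurrence of $x$, need to be placed with care. A minor nuisance is that two distinct indices $i\ne j$ may give $\hat\sigma_i=\hat\sigma_j$, but this is harmless: such a full type necessarily has an empty marker set (the $\Mk(\hat\sigma_i)$ being pairwise disjoint, as in the conclusion of Lemma~\ref{lem:c-subst}), so both $d_i$ and $d_j$ vanish.
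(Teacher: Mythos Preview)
Your plan is essentially the paper's own proof: induction on the derivation of $\Lambda^x\vdash R:\hat\tau\triangleright e\succ 0$, with the key observation (the paper calls it property $(\diamondsuit)$) that whenever $\Mk(\hat\sigma_i)=\emptyset$ one gets $d_i=0$ and $\Mk(\Sigma_i)=\emptyset$, so empty-marker full types can be freely dropped or duplicated in the counting.

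Two small corrections. In the $\lambda y.P$ case you have the direction of the environment surgery reversed: you do not extend the $\Sigma_i$'s by $[y\mapsto T_i]$; since $y$ is not free in $S$ you \emph{clear} $y$, setting $\Sigma_i^\lambda=\Sigma_i[y\mapsto\emptyset]$ via Lemma~\ref{lem:clear-tenv-ext}, while all of $T$ stays with the $\Lambda$-part (the outer environment for the inductive call is $\Gamma[y\mapsto\emptyset][y\mapsto T]$). Second, your final paragraph appeals to pairwise disjointness of the $\Mk(\hat\sigma_i)$, but that is not a hypothesis of the lemma. The paper avoids this: in the \ConR and \AppR cases it argues that if a single index $i$ ends up in two sub-index-sets $I_j$ and $I_{j'}$, then Lemma~\ref{lem:m-from-env-in-type} gives $\Mk(\hat\sigma_i)\subseteq\Mk(\hat\tau_j)\cap\Mk(\hat\tau_{j'})=\emptyset$ (the latter by the rule's marker-disjointness side condition), whence $d_i=0$; this is what makes $\sum_j\sum_{i\in I_j}d_i=\sum_{i\in I}d_i$ go through without any global disjointness assumption on the $\hat\sigma_i$.
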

	
	\begin{proof}
		We start by observing the following property, denoted ($\diamondsuit$):
		\begin{quote}
			If $\Mk(\hat\sigma_i)=\emptyset$ for some $i\in I$, then $d_i=0$ and $\Mk(\Sigma_i)=\emptyset$.
		\end{quote}
		Indeed, $\Mk(\Sigma_i)\subseteq\Mk(\sigma_i)=\emptyset$ follows from Lemma \ref{lem:m-from-env-in-type}, while $d_i=0$ follows from Lemma \ref{lem:zero-when-no-marker} if we recall that $\ord(S)<\ord(\hat\tau)=\ord(\hat\sigma_i)$.
		
		The proof of the lemma is by induction on the structure of some fixed derivation of $\Lambda^x\vdash R:\hat\tau\triangleright e\succ 0$.

		One possibility is that $x$ is not free in $R$.
		In such a situation by Lemma \ref{lem:clear-tenv-ext} we can derive $\Lambda\vdash R:\hat\tau\triangleright e\succ 0$ and $\Split(\Lambda^x\palka\Lambda)$ holds,
		which means that $\Mk(\hat\sigma_i)=\emptyset$ for all $i\in I$.
		By ($\diamondsuit$) we have $d_i=0$ and $\Mk(\Sigma_i)=\emptyset$ for all $i\in I$.
		Thus $\Split(\Gamma\palka\Lambda,(\Sigma_i)_{i\in I})$ implies $\Split(\Gamma\palka\Lambda)$, and thus by Lemma \ref{lem:enrich-tenv-ext} applied to $\Lambda\vdash R:\hat\tau\triangleright e\succ 0$
		we can derive $\Gamma\vdash R:\hat\tau\triangleright e\succ 0$.
		This is the desired type judgment since $R[S/x]=R$ and $e+\sum_{i\in I}d_i=e$.
		
		In the sequel we assume that $x$ is free in $R$. 
		We analyze the shape of $R$.

		Suppose first that $R=x$.
		Then the derivation for $R$ consists of a single rule, and thus $e=0$ and for some $s\in I$ we have $\Split(\Lambda^x\palka\Gammae[x\mapsto\{\hat\sigma_s\}])$ and $\hat\tau=\hat\sigma_s$
		(this holds because all $\hat\sigma_i$ are of the same order as $\hat\tau$).
		It follows that $\Mk(\Lambda(y))=\emptyset$ for every variable $y$, and that $\Mk(\hat\sigma_i)=\emptyset$ for every $i\in I\setminus\{s\}$.
		By ($\diamondsuit$) we have $d_i=0$ and $\Mk(\Sigma_i)=\emptyset$ for all $i\in I\setminus\{s\}$.
		It follows that $\Split(\Gamma\palka\Lambda,(\Sigma_i)_{i\in I})$ implies $\Split(\Gamma\palka\Sigma_s)$;
		we can thus derive $\Gamma\vdash S:\hat\sigma_s\triangleright d_s\succ 0$ by Lemma \ref{lem:enrich-tenv-ext}.
		This is what we need, since $R[S/x]=S$, and $\hat\tau=\hat\sigma_s$, and $e+\sum_{i\in I}d_i=d_s$.

		Next, suppose that $R=\br\,P_1\,P_2$.
		Then our derivation ends with the \BrR rule, whose premiss is $\Lambda^x\vdash P_k:\hat\tau\triangleright e\succ 0$, for some $k\in\{1,2\}$.
		The induction assumption applied to this premiss gives us a derivation of $\Gamma\vdash P_k[S/x]:\hat\tau\triangleright e+\sum_{i\in I}d_i\succ 0$.
		By applying back the \BrR rule we derive $\Gamma\vdash R[S/x]:\hat\tau\triangleright e+\sum_{i\in I}d_i\succ 0$, as required.

		Next, suppose that $R=\lambda y.P$.
		We have $y\neq x$, and, as always during a substitution, we assume (by performing $\alpha$-conversion) that $y$ is not free in $S$.
		The derivation for $R$ ends with the \LamR rule, whose premiss is $\Lambda_\lambda^x[y\mapsto T]\vdash P:\hat\tau'\triangleright e\succ 0$, where $\Lambda_\lambda^x(y)=\emptyset$ and $\Split(\Lambda^x\palka\Lambda^x_\lambda)$ holds.
		Denote $\Lambda_\lambda=\Lambda_\lambda^x[x\mapsto\emptyset]$.
		We then have $\Lambda_\lambda^x=\Lambda_\lambda[x\mapsto\{\hat\sigma_i\mid i\in J\}]$ for some $J\subseteq I$.
		The condition $\Split(\Lambda^x\palka\Lambda^x_\lambda)$ implies that $\Split(\Lambda\palka\Lambda_\lambda)$ holds, and that $\Mk(\hat\sigma_i)=\emptyset$ for all $i\in I\setminus J$;
		then $d_i=0$ and $\Mk(\Sigma_i)=\emptyset$ for $i\in I\setminus J$ by ($\diamondsuit$).
		In the light of $\Split(\Gamma\palka\Lambda,(\Sigma_i)_{i\in I})$ this implies that $\Split(\Gamma\palka\Lambda_\lambda,(\Sigma_i)_{i\in J})$ holds.
		Because $\Gamma(y)$ may be nonempty, we have to define $\Gamma_\lambda=\Gamma[y\mapsto\emptyset]$ and $\Sigma_i^\lambda=\Sigma_i[y\mapsto\emptyset]$ for $i\in J$.
		By assumption $y$ is not free in $S$, so Lemma \ref{lem:clear-tenv-ext} says that we can derive $\Sigma_i^\lambda\vdash S:\hat\sigma_i\triangleright d_i\succ 0$ and that $\Split(\Sigma_i\palka\Sigma_i^\lambda)$ holds for $i\in J$.
		Recall that $\Lambda_\lambda(y)=\emptyset$.
		Due to $\Split(\Gamma\palka\Lambda_\lambda,(\Sigma_i)_{i\in J})$ we also have $\Split(\Gamma\palka\Gamma_\lambda)$ and $\Split(\Gamma_\lambda\palka\Lambda_\lambda,(\Sigma_i^\lambda)_{i\in J})$,
		thus also $\Split(\Gamma_\lambda[y\mapsto T]\palka\Lambda_\lambda[y\mapsto T],(\Sigma_i^\lambda)_{i\in J})$.
		We are ready to apply the induction assumption to our premiss.
		We obtain a derivation of $\Gamma_\lambda[y\mapsto T]\vdash P[S/x]:\hat\tau'\triangleright e+\sum_{i\in J}d_i\succ 0$.
		Because $\Split(\Gamma\palka\Gamma_\lambda)$ holds and $\Gamma_\lambda(y)=\emptyset$, we can apply the \LamR rule, obtaining $\Gamma\vdash R[S/x]:\hat\tau\triangleright e+\sum_{i\in J}d_i\succ 0$,
		where the full type is indeed $\hat\tau$, as in the original derivation.
		Because $d_i=0$ for $i\in J\setminus I$, this is what we need.

		Another possibility is that $R=a\,P_1\,\dots\,P_r$.
		Then the derivation for $R$ ends with the \ConR rule, whose premisses are $\Lambda_j^x\vdash P_j:\hat\tau_j\triangleright e_j\succ 0$ for $j\in\{1,\dots,r\}$.
		For $j\in\{1,\dots,r\}$ denote $\Lambda_j=\Lambda_j^x[x\mapsto\emptyset]$.
		A side condition of the \ConR rule says that $\Split(\Lambda^x\palka(\Lambda^x_j)_{j\in\{1,\dots,r\}})$ holds.
		On the on one hand, this implies $\Split(\Lambda\palka(\Lambda_j)_{j\in\{1,\dots,r\}})$.
		On the other hand, for $j\in\{1,\dots,r\}$ we have $\Lambda_j^x=\Lambda_j[x\mapsto\{\hat\sigma_i\mid i\in I_j\}]$ for some $I_j\subseteq I$,
		and for $i\in I\setminus\bigcup_{j=1}^r I_j$ we have $\Mk(\hat\sigma_i)=\emptyset$, and thus $d_i=0$ and $\Mk(\Sigma_i)=\emptyset$ by ($\diamondsuit$).
		Knowing that $\Split(\Gamma\palka\Lambda,(\Sigma_i)_{i\in I})$ holds, we obtain $\Split(\Gamma\palka(\Lambda_j)_{j\in\{1,\dots,r\}},(\Sigma_i)_{i\in I_j,j\in\{1,\dots,r\}})$.
		For $j\in\{1,\dots,r\}$ we define $\Gamma_j$ by taking $\Gamma_j(y)=\Lambda_j(y)\cup\bigcup_{i\in I_j}\Sigma_i(y)$ for all variables $y$.
		Then we have $\Split(\Gamma\palka(\Gamma_j)_{j\in\{1,\dots,r\}})$ and $\Split(\Gamma_j\palka\Lambda_j,(\Sigma_i)_{i\in I_j})$ for $j\in\{1,\dots,r\}$.
		Using the induction assumption for every premiss, we obtain a derivation of $\Gamma_j\vdash P_j[S/x]:\hat\tau_j\triangleright e_j+\sum_{i\in I_j}d_i\succ 0$ for $j\in\{1,\dots,r\}$.
		Because $\Split(\Gamma\palka(\Gamma_j)_{j\in\{1,\dots,r\}})$ holds, and the derived full types are the same as in the derivation for $R$, and the flag counters are higher by $\sum_{i\in I_j}d_i$,
		we can apply the \ConR rule and derive $\Gamma\vdash R[S/x]:\hat\tau\triangleright e+\sum_{j=1}^r\sum_{i\in I_j}d_i\succ 0$.
		It remains to observe that $\sum_{j=1}^r\sum_{i\in I_j}d_i=\sum_{i\in I}d_i$.
		As already said, $d_i=0$ when $i\in I\setminus\bigcup_{j=1}^r I_j$.
		Consider some $i\in I_j\cap I_{j'}$ for $j,j'\in\{1,\dots,r\}$ with $j\neq j'$.
		We have $\Mk(\hat\sigma_i)\subseteq\Mk(\Lambda_j^x)$, so Lemma \ref{lem:m-from-env-in-type} applied to the type judgment $\Lambda_j^x\vdash P_j:\hat\tau_j\triangleright e_j$ says that $\Mk(\hat\sigma_i)\subseteq\Mk(\hat\tau_j)$,
		and similarly $\Mk(\hat\sigma_i)\subseteq\Mk(\hat\tau_{j'})$.
		But, by a side condition of the \ConR rule, $\Mk(\hat\tau_j)$ and $\Mk(\hat\tau_{j'})$ are disjoint, so $\Mk(\hat\sigma_i)=\emptyset$, and hence $d_i=0$ by ($\diamondsuit$).
		In consequence the two sums are indeed equal.

		Finally, suppose that $R=P\,Q$.
		The proof is almost the same as in the previous case.
		The derivation for $R$ ends with the \AppR rule, whose premisses are $\Lambda_0^x\vdash P:\hat\tau_0\triangleright e_0\succ 0$ and $\Lambda_j^x\vdash Q:\hat\tau_j\triangleright e_j\succ 0$ for $j\in J$,
		where we assume that $0\not\in J$.
		For $j\in\{0\}\cup J$ denote $\Lambda_j=\Lambda_j^x[x\mapsto\emptyset]$.
		A side condition of the \AppR rule says that $\Split(\Lambda^x\palka(\Lambda^x_j)_{j\in\{0\}\cup J})$ holds.
		On the on one hand, this implies $\Split(\Lambda\palka(\Lambda_j)_{j\in\{0\}\cup J})$.
		On the other hand, for $j\in\{0\}\cup J$ we have $\Lambda_j^x=\Lambda_j[x\mapsto\{\hat\sigma_i\mid i\in I_j\}]$ for some $I_j\subseteq I$,
		and for $i\in I\setminus\bigcup_{j\in\{0\}\cup J} I_j$ we have $\Mk(\hat\sigma_i)=\emptyset$, and thus $d_i=0$ and $\Mk(\Sigma_i)=\emptyset$ by ($\diamondsuit$).
		Knowing that $\Split(\Gamma\palka\Lambda,(\Sigma_i)_{i\in I})$ holds, we obtain $\Split(\Gamma\palka(\Lambda_j)_{j\in\{0\}\cup J},(\Sigma_i)_{i\in I_j,j\in\{0\}\cup J})$.
		For $j\in\{0\}\cup J$ we define $\Gamma_j$ by taking $\Gamma_j(y)=\Lambda_j(y)\cup\bigcup_{i\in I_j}\Sigma_i(y)$ for all variables $y$.
		Then we have $\Split(\Gamma\palka(\Gamma_j)_{j\in\{0\}\cup J})$ and $\Split(\Gamma_j\palka\Lambda_j,(\Sigma_i)_{i\in I_j})$ for $j\in\{0\}\cup J$.
		Using the induction assumption for every premiss, we obtain a derivation of $\Gamma_0\vdash P[S/x]:\hat\tau_0\triangleright e_0+\sum_{i\in I_0}d_i\succ 0$,
		and of $\Gamma_j\vdash Q[S/x]:\hat\tau_j\triangleright e_j+\sum_{i\in I_j}d_i\succ 0$ for all $j\in J$.
		Because $\Split(\Gamma\palka(\Gamma_j)_{j\in\{0\}\cup J})$ holds, and the derived full types are the same as in the derivation for $R$, and the flag counters are higher by $\sum_{i\in I_j}d_i$,
		we can apply the \AppR rule and derive $\Gamma\vdash R[S/x]:\hat\tau\triangleright e+\sum_{j\in\{0\}\cup J}\sum_{i\in I_j}d_i\succ 0$.
		It remains to observe that $\sum_{j\in\{0\}\cup J}\sum_{i\in I_j}d_i=\sum_{i\in I}d_i$.
		As already said, $d_i=0$ when $i\in I\setminus\bigcup_{j\in\{0\}\cup J} I_j$.
		Consider some $i\in I_j\cap I_{j'}$ for $j,j'\in\{0\}\cup J$ with $j\neq j'$.
		We have $\Mk(\hat\sigma_i)\subseteq\Mk(\Lambda_j^x)$, 
		so Lemma \ref{lem:m-from-env-in-type} applied to the type judgment $\Lambda_j^x\vdash T:\hat\tau_j\triangleright e_j$ (where $T=P$ or $T=Q$, depending on $j$) says that $\Mk(\hat\sigma_i)\subseteq\Mk(\hat\tau_j)$,
		and similarly $\Mk(\hat\sigma_i)\subseteq\Mk(\hat\tau_{j'})$.
		But, by a side condition of the \AppR rule, $\Mk(\hat\tau_j)$ and $\Mk(\hat\tau_{j'})$ are disjoint, so $\Mk(\hat\sigma_i)=\emptyset$, and hence $d_i=0$ by ($\diamondsuit$).
		In consequence the two sums are indeed equal.
	\end{proof}

	\begin{lemma}\label{lem:s-step-gen-aux}
		Suppose that we can derive $\Gamma\vdash R\,S:\hat\tau\triangleright c\succ n$ so that all premisses of the final \AppR rule have $0$ in the application counter.
		If $\ord(R)=\ord(\hat\tau)$, and $\Gamma(y)\neq\emptyset$ only for variables $y$ of order at most $\ord(\hat\tau)-1$,
		then $R=\lambda x.R'$, and we can derive $\Gamma\vdash R'[S/x]:\hat\tau\triangleright c\succ 0$.
	\end{lemma}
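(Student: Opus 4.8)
The plan is to first argue that $R$ must be a $\lambda$-abstraction, and then to reduce the redex $R\,S$ by invoking the substitution lemma (Lemma~\ref{lem:s-subst}), checking that its hypotheses are met and that it reproduces exactly the claimed extended type judgment. Write $m=\ord(\hat\tau)=\ord(R)$; since $R\,S$ is well-sorted, $R$ has a sort of the form $\gamma\arr\beta$, so $m=\ord(\gamma\arr\beta)\ge 1$. Denote the premisses of the final \AppR rule by $\Delta\vdash R:(m,F',M',T\arr\tau)\triangleright c'\succ 0$ and $\Gamma_i\vdash S:\hat\sigma_i\triangleright c_i\succ 0$ for $i\in I$, where $\hat\sigma_i=(m,F_i,M_i,\tau_i)$; by hypothesis all these premisses carry application counter $0$. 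Because $\ord(R)=m$, the restrictions ${\restriction}_{<\ord(R)}$ occurring in the \AppR rule are vacuous, so $T=\{\hat\sigma_i\mid i\in I\}$ and $F_i{\restriction}_{\ge\ord(R)}=\emptyset$; thus $M=M'\uplus\biguplus_{i\in I}M_i$, $\Split(\Gamma\palka\Delta,(\Gamma_i)_{i\in I})$, and $\Comp_m(M\palkaC(F',c'),((\emptyset,c_i))_{i\in I})=(F,c)$, where $\hat\tau=(m,F,M,\tau)$.

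To see that $R=\lambda x.R'$: $R$ cannot start with $\br$ or a constant, as those terms have sort $o$ while $\ord(R)=m\ge 1$; $R$ cannot be a variable $y$, since a \VarR rule deriving $(m,F',M',T\arr\tau)$ for $y$ forces $\Delta(y)\neq\emptyset$, hence $\Gamma(y)\neq\emptyset$, contradicting the assumption that $\Gamma(y)\neq\emptyset$ only for variables of order at most $m-1$ (here $\ord(y)=m$); and $R$ cannot be an application $R_1\,R_2$, because then $\ord(R_1)\ge\ord(R_1\,R_2)=m$ while the \AppR rule deriving a type for $R_1\,R_2$ requires $\ord(R_1)\le m$, so $\ord(R_1)=m$ equals the order of the full type derived for $R_1\,R_2$, making that \AppR rule one of the maximal order, which contributes $1$ to the application counter and contradicts the counter $0$ in the premiss $\Delta\vdash R:\dots\succ 0$. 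Hence $R=\lambda x.R'$, and the premiss for $R$ ends with the \LamR rule: there is a derivation of $\Delta'[x\mapsto T]\vdash R':(m,F',M'',\tau)\triangleright c'\succ 0$ with $\Delta'(x)=\emptyset$, $\Split(\Delta\palka\Delta')$, and $M'=M''\setminus\bigcup_{i\in I}M_i$.

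It remains to apply Lemma~\ref{lem:s-subst} with $\Lambda=\Delta'$, $\Sigma_i=\Gamma_i$, $e=c'$, $d_i=c_i$, and with the role of ``$\hat\tau$'' in that lemma played by $(m,F',M'',\tau)$. We have $\ord(\hat\sigma_i)=m=\ord((m,F',M'',\tau))$ and $\ord(S)<\ord(\gamma\arr\beta)=m$; composing $\Split(\Delta\palka\Delta')$ with $\Split(\Gamma\palka\Delta,(\Gamma_i)_{i\in I})$ yields $\Split(\Gamma\palka\Delta',(\Gamma_i)_{i\in I})$ (a marker-carrying full type of $\Gamma$ lands in some $\Gamma_i$ or in $\Delta$, and in the latter case, being marker-carrying, it survives into $\Delta'$). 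Lemma~\ref{lem:s-subst} then gives a derivation of $\Gamma\vdash R'[S/x]:(m,F',M'',\tau)\triangleright c'+\sum_{i\in I}c_i\succ 0$. To match this against $\Gamma\vdash R'[S/x]:\hat\tau\triangleright c\succ 0$ I use the two bookkeeping identities: first, Lemma~\ref{lem:m-from-env-in-type} applied to $\Delta'[x\mapsto T]\vdash R':(m,F',M'',\tau)\triangleright c'$ gives $\bigcup_{i\in I}M_i=\Mk(T)\subseteq M''$, so $M''=(M''\setminus\bigcup_{i\in I}M_i)\cup\bigcup_{i\in I}M_i=M'\cup\bigcup_{i\in I}M_i=M$ as sets; second, since $(m,F',M'',\tau)$ is a full type we have $F'\cap M''=\emptyset$, i.e.\ $F'\cap M=\emptyset$, so Lemma~\ref{lem:comp-with-empty} applied to $\Comp_m(M\palkaC(F',c'),((\emptyset,c_i))_{i\in I})=(F,c)$ gives $F=F'$ and $c=c'+\sum_{i\in I}c_i$. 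Hence the derived judgment is exactly $\Gamma\vdash R'[S/x]:\hat\tau\triangleright c\succ 0$. The step requiring the most care is the reconstruction of the full type and flag counter after substitution, i.e.\ the identities $M''=M$, $F'=F$, $c=c'+\sum_{i\in I}c_i$, which rely on pulling the marker information out of the environment via Lemma~\ref{lem:m-from-env-in-type} and on the triviality of $\Comp_m$ when the incoming flag sets are empty (Lemma~\ref{lem:comp-with-empty}); the case analysis ruling out $R$ being an application also needs attention, since there one must correctly observe that the inner \AppR rule is of order $m$ and hence increments the application counter.
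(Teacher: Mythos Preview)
Your proof is correct and follows essentially the same approach as the paper: rule out the non-$\lambda$ shapes for $R$ via the environment/order/application-counter constraints, unpack the \LamR rule, and invoke Lemma~\ref{lem:s-subst} together with Lemmata~\ref{lem:m-from-env-in-type} and~\ref{lem:comp-with-empty} to identify the resulting full type and flag counter with $\hat\tau$ and $c$. The only cosmetic difference is that the paper establishes the identities $F'=F$, $M''=M$, $c=c'+\sum_i c_i$ \emph{before} applying Lemma~\ref{lem:s-subst}, whereas you apply the lemma first and match afterwards; either order works.
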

	
	\begin{proof}
		Let $\Gamma_\lambda\vdash R:\hat\tau_\lambda\triangleright e\succ 0$ and $\Sigma_i\vdash S:\hat\sigma_i\triangleright d_i\succ 0$ for $i\in I$ be the premisses of the final \AppR rule.
		Let us write $\hat\tau=(m,F,M,\tau)$, and $\hat\tau_\lambda=(m,F',M_\lambda,T\arr\tau)$, and $\hat\sigma_i=(m,F_i,M_i,\sigma_i)$ for $i\in I$.
		We have $T=\{(\ord(R),F_i\restr_{<\ord(R)},M_i\restr_{<\ord(R)},\sigma_i)\mid i\in I\}=\{\hat\sigma_i\mid i\in I\}$, because $\ord(R)=m$ and sets $F_i$ and $M_i$ contain only numbers smaller than $m$.

		We start by determining the shape of $R$, by looking at the premiss concerning it, i.e., $\Gamma_\lambda\vdash R:\hat\tau_\lambda\triangleright e\succ 0$.
		If $R$ was a variable, then the derivation of this premiss would consist of the \VarR rule requiring that $\Gamma_\lambda(R)\neq\emptyset$, hence also $\Gamma(R)\neq\emptyset$; 
		this is impossible by assumption since $R$ is of order $m$.
		If $R$ is an application, $R=U\,V$, then the derivation of the premiss concerning $R$ starts with the \AppR rule, requiring that $\ord(U)\leq m$.
		However $\ord(U)\geq\ord(R)=m$, so actually $\ord(U)=m$, and hence the application counter in this premiss should be at least $1$, violating our assumption.
		It follows that $R$ cannot be an application.
		Moreover, $R$ takes an argument, so it cannot start with a symbol.
		Thus $R$ starts with a $\lambda$-abstraction, $R=\lambda x.R'$.
		
		The type judgment concerning $R$ is derived using the \LamR rule out of a premiss $\Lambda[x\mapsto T]\vdash R':\hat\tau'\triangleright e\succ 0$,
		where $\hat\tau'=(m,F',M',\tau)$ and $\Lambda(x)=\emptyset$.
		The two rules imply that $M_\lambda=M'\setminus\Mk(T)$ and $M=M_\lambda\cup\Mk(T)$.
		Lemma \ref{lem:m-from-env-in-type} applied to the type judgment $\Lambda[x\mapsto T]\vdash R':\hat\tau'\triangleright e$ implies that $\Mk(T)\subseteq M'$, and thus we obtain $M=M'$.
		Next, we notice that the sets $F_i\restr_{\geq\ord(R)}$ are empty, and that $F'\cap M=F'\cap M'=\emptyset$ (because $\hat\tau'=(m,F',M',\tau)$ is a full type),
		and that $\Comp_m(M\palkaC(F',e),((F_i\restr_{\geq\ord(R)},d_i))_{i\in I})=(F,c)$ by a side condition of the \AppR rule.
		In such a situation Lemma \ref{lem:comp-with-empty} implies that $F=F'$ (thus actually $\hat\tau'=\hat\tau$) and $c=e+\sum_{i\in I}d_i$.
		Finally, due to side conditions of the \AppR and \LamR rules we have $\Split(\Gamma\palka\Gamma_\lambda,(\Sigma_i)_{i\in I})$ and $\Split(\Gamma_\lambda\palka\Lambda)$,
		hence also $\Split(\Gamma\palka\Lambda,(\Sigma_i)_{i\in I})$.
		We also have $\ord(S)<\ord(R)=m$.
		Thus we can apply Lemma \ref{lem:s-subst} to type judgments $\Sigma_i\vdash S:\hat\sigma_i\triangleright d_i\succ 0$ for $i\in I$
		and $\Lambda[x\mapsto T]\vdash R':\hat\tau\triangleright e\succ 0$.
		We obtain a derivation of $\Gamma\vdash R'[S/x]:\hat\tau\triangleright c\succ 0$, as required.
	\end{proof}

	We now give a generalization of Lemma \ref{lem:s-step} suitable for induction.
	Notice that a subterm of a $\lambda$-term may be involved in multiple subtrees of a derivation tree for the whole $\lambda$-term.
	Because of that, we have to handle multiple derivations for the same $\lambda$-term at once.

	\begin{lemma}\label{lem:s-step-gen}
		Suppose that for a finite set $I$, a number $m$, and a $\lambda$-term $P$ we can derive $\Gamma_i\vdash P:\hat\tau_i\triangleright c_i\succ n_i$ for $i\in I$,
		where $\ord(P)\leq m$, and $\sum_{i\in I}n_i>0$, and for all $i\in I$ we have $\ord(\hat\tau_i)=m$ and $\Gamma_i(x)\neq\emptyset$ only for variables $x$ of order at most $m-1$.
		Then there is a $\lambda$-term $Q$ such that $P\bred Q$, and we can derive $\Gamma_i\vdash Q:\hat\tau_i\triangleright c_i\succ n_i'$ for all $i\in I$,
		for some numbers $n_i'$ such that $\sum_{i\in I}n_i'<\sum_{i\in I}n_i$.
	\end{lemma}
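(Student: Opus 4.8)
The plan is to fix, for each $i\in I$, a derivation $D_i$ witnessing $\Gamma_i\vdash P:\hat\tau_i\triangleright c_i\succ n_i$, and to argue by induction on the total number of nodes of these derivations (ordinary structural induction on $P$ is unavailable, since $P$ may be infinite, but every recursive call passes to strictly smaller sub-derivations). I would split on the shape of $P$. The cases $P=x$ and $P=a$ with $a$ of rank $0$ cannot occur, as then no \AppR rule is used and all $n_i=0$, contradicting $\sum_i n_i>0$. If $P=\br\,P_1\,P_2$, each $D_i$ ends with \BrR selecting a branch $k_i\in\{1,2\}$ and passing the application counter unchanged to its single premiss; since $\sum_i n_i>0$ there is $k$ with $\sum_{\{i\,:\,k_i=k\}}n_i>0$, and, taking w.l.o.g.\ $k=1$, I would apply the induction hypothesis to the premisses for $P_1$ coming from those $D_i$ with $k_i=1$, reapply \BrR, and leave the derivations with $k_i=2$ untouched (they never inspect the first argument), setting $Q=\br\,Q_1\,P_2$. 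If $P=\lambda x.P'$, each $D_i$ ends with \LamR; here $\ord(P')\le\ord(P)\le m$, the full type derived for $P'$ still has order $m$, and $x$ has order at most $m-1$ (as $\ord(\alpha)+1\le\ord(\alpha\arr\beta)=\ord(P)\le m$), so the enlarged environments still satisfy the order restriction and the induction hypothesis applies to the premisses; then one reassembles with \LamR. If $P=a\,P_1\,\dots\,P_r$ with $a\neq\br$ and $r>0$, each $D_i$ ends with \ConR, whose conclusion application counter is the sum of those of the $r$ premisses, so $\sum_i n_i=\sum_{j=1}^{r}\sum_i n_{i,j}$ and some $j$ has $\sum_i n_{i,j}>0$; one applies the induction hypothesis to the family of $j$-th premisses (its hypotheses follow from the $\Split$ side condition and the shape of \ConR), obtains $Q_j$, and reapplies \ConR, whose remaining side conditions involve only flag counters and sets, not application counters.

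The main case is $P=P_0\,P_1$. Then $\ord(P)\le\ord(P_0)\le m$, and each $D_i$ ends with \AppR, with one premiss concerning $P_0$, of application counter $n_{i,0}$, and premisses $\Gamma_{i,j}\vdash P_1:\hat\sigma_{i,j}\triangleright c_{i,j}\succ n_{i,j}$ for $j\in J_i$ concerning $P_1$, so that $n_i$ equals $n_{i,0}+\sum_{j\in J_i}n_{i,j}$, plus $1$ when $\ord(P_0)=m$. If $\sum_i n_{i,0}>0$, I would recurse on $P_0$ (its order is at most $m$, the derived full types have order $m$, and the environment restriction is inherited), obtaining a common $Q_0$ with $P_0\bred Q_0$; since $\beta$-reduction preserves sorts, $\ord(Q_0)=\ord(P_0)$, so reassembling with \AppR keeps the ``$+1$'' contribution the same and the total application counter strictly decreases — take $Q=Q_0\,P_1$. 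If $\sum_i n_{i,0}=0$ but $\sum_{i\in I,\,j\in J_i}n_{i,j}>0$, I would instead recurse on $P_1$ (whose order is at most $\ord(P_0)-1<m$) using the family indexed by the pairs $(i,j)$, obtaining a single $Q_1$ with $P_1\bred Q_1$; here reassembly with \AppR goes through because the induction hypothesis returns derivations for $Q_1$ carrying exactly the full types $\hat\sigma_{i,j}$, so the set of full types required of $P_0$ is unchanged — take $Q=P_0\,Q_1$. Finally, if every premiss of every $D_i$ has application counter $0$, then $\sum_i n_i>0$ forces $\ord(P_0)=m$ and $n_i=1$ for all $i$; each $D_i$ then satisfies the hypotheses of Lemma \ref{lem:s-step-gen-aux}, which yields $P_0=\lambda x.R'$ together with derivations of $\Gamma_i\vdash R'[P_1/x]:\hat\tau_i\triangleright c_i\succ 0$, so we take $Q=R'[P_1/x]$ and finish with all new counters equal to $0$.

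The step I expect to be the main obstacle is the base of this elimination, i.e.\ Lemma \ref{lem:s-step-gen-aux}: at an innermost order-$m$ application $(\lambda x.R')\,S$ one must fire the redex and, through the substitution Lemma \ref{lem:s-subst}, reconstruct the type judgment with application counter $0$ — this is where the disjointness of marker sets, Lemma \ref{lem:zero-when-no-marker} (to duplicate or discard marker-free sub-derivations for $S$ without affecting flag counters), and the absence of any further \AppR rule of order $m$ below the chosen node all have to be combined. Once that is established, the remainder is the routine propagation of this single reduction up through all the derivations $D_i$, subject only to the bookkeeping checks above: that each rule can be reapplied (the surviving full-type requirement on $P_0$ is undisturbed) and that the total application counter strictly drops (because $\beta$-reduction preserves orders).
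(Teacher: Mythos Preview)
Your proposal is correct and follows essentially the same approach as the paper's proof: induction on the total size of the fixed derivations, with the base case being when all premisses of every last rule carry application counter $0$ (forcing $P$ to be an order-$m$ application and invoking Lemma~\ref{lem:s-step-gen-aux}), and the inductive step recursing into a subterm whose collected premisses carry positive total application counter. The only difference is organizational: you case-split on the shape of $P$ first and locate the base case as the final sub-case of the application shape, whereas the paper isolates the base case up front and then case-splits on~$P$; also, in the application case you prefer to recurse on the operator before the argument while the paper does the reverse---both orderings are fine.
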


	\begin{proof}
		The proof is by induction on the smallest total size of derivations needed to derive $\Gamma_i\vdash P:\hat\tau_i\triangleright c_i\succ n_i$ for all $i\in I$.
		Because $\sum_{i\in I}n_i>0$, we have $|I|\geq 1$.
		
		Suppose first that for every $i\in I$ all premisses of the last rule used to derive $\Gamma_i\vdash P:\hat\tau_i\triangleright c_i\succ n_i$ have $0$ in the application counter.
		This is the base case, in which we perform a $\beta$-reduction.
		Because $\sum_{i\in I}n_i>0$, necessarily $P=R\,S$ with $\ord(R)=m$, because only in such a situation the application counter in a derived type judgment can be higher than the sum of application counters in premisses.
		Then, for every $i\in I$ separately we apply Lemma \ref{lem:s-step-gen-aux} to our type judgment, 
		and we obtain that $R=\lambda x.R'$, and that we can derive $\Gamma_i\vdash R'[S/x]:\hat\tau_i\triangleright c_i\succ 0$.
		Since $P=(\lambda x.R')\,S\bred R'[S/x]$, this gives the thesis.

		Let us now consider the opposite case, when we have a premiss with positive application counter.
		We have multiple cases depending on the shape of $P$, but all of them are similar, and boil down to a use of the induction assumption.
		Suppose, for example, that $P=R\,S$.
		Then for every $i\in I$ the type judgment $\Gamma_i\vdash P:\hat\tau_i\triangleright c_i\succ n_i$ is derived by the \AppR rule out of premisses
		$\Gamma_i'\vdash R:\hat\tau_i'\triangleright c_i'\succ n_i'$ and $\Gamma_{i,j}\vdash S:\hat\tau_{i,j}\triangleright c_{i,j}\succ n_{i,j}$ for $j\in J_i$, for some finite set $J_i$.
		We have $\ord(S)<\ord(R)\leq m$ by a side condition of the \AppR rule.
		For every variable $x$ of order at least $m$ we have $\Gamma_i'(x)\subseteq\Gamma_i(x)=\emptyset$ for all $i\in I$, and $\Gamma_{i,j}(x)\subseteq\Gamma_i(x)=\emptyset$ for all $i\in I$, $j\in J_i$.
		When $n_{i,j}>0$ for some $i\in I$, $j\in J_i$, we apply the induction assumption to $S$ and to the collection of all premisses concerning it in all our derivations.
		We obtain a $\lambda$-term $S'$ such that $S\bred S'$, and derivations of $\Gamma_{i,j}\vdash S':\hat\tau_{i,j}\triangleright c_{i,j}\succ n'_{i,j}$ for all $i\in I$, $j\in J_i$,
		where $\sum_{i\in I}\sum_{j\in J_i}n_{i,j}'<\sum_{i\in I}\sum_{j\in J_i}n_{i,j}$.
		By applying the \AppR rule to these type judgments and to the premisses concerning $R$, we obtain derivations of $\Gamma_i\vdash Q:\hat\tau_i\triangleright c_i\succ n_i'$ for all $i\in I$,
		where $Q=R\,S'$ and $\sum_{i\in I}n_i'=\sum_{i\in I}(n_i+\sum_{j\in J_i}(n_{i,j}'-n_{i,j}))<\sum_{i\in I}n_i$.
		When $n_{i,j}=0$ for all $i\in I$, $j\in J_i$, we necessarily have $n_i'>0$ for some $i$ (as we are in the case in which some premiss has a positive application counter),
		and we apply the induction assumption to the premisses concerning $R$.
		We proceed similarly when $P=a\,P_1\,\dots\,P_r$ for $a\neq\br$, when $P=\br\,P_1\,P_2$, and when $P=\lambda x.R$ (we cannot have $P=x$, as then the application counter in all the type judgments would be $0$).
		In the case of $P=\lambda x.R$ we use the assumption $\ord(P)\leq m$ to deduce that the full types assigned to $x$ in type environments of premisses are not assigned to a variable of order higher than $m-1$: we have $\ord(x)<\ord(P)\leq m$.
	\end{proof}

	Finally, we observe that Lemma \ref{lem:s-step} is a special case of Lemma \ref{lem:s-step-gen}, where $|I|=1$ and the type judgment is of the form $\Gammae\vdash P:\hat\rho_m\triangleright c$.

\section{Proof of Lemma \ref{lem:s-zero}}

	\begin{lemma}\label{lem:s0-comp}
		Suppose that $\Comp_m(M\palkaC(F_i,c_i)_{i\in I})=(F,c)$, where $m\geq 1$.
		If $c'_i\geq c_i+|F_i\cap\{m-1\}|$ for $i\in I$, then $\Comp_{m-1}(M\restr_{<m-1}\palkaC (F_i\restr_{<m-1},c_i')_{i\in I})=(F\restr_{<m-1}, c')$ for some $c'\geq c+|F\cap\{m-1\}|$.
	\end{lemma}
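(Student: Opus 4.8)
The plan is to track the auxiliary quantities that appear in the definition of $\Comp$. Write $f_n, f'_n$ for the numbers computed inside $\Comp_m(M\palkaC(F_i,c_i)_{i\in I})$, and $g_n, g'_n$ for those computed inside $\Comp_{m-1}(M\restr_{<m-1}\palkaC(F_i\restr_{<m-1},c'_i)_{i\in I})$. First I would prove, by an immediate induction on $n$, that $g'_n = f'_n$ for all $n\in\{0,\dots,m-1\}$ and $g_n = f_n$ for all $n\in\{0,\dots,m-2\}$. This works because for $n\leq m-2$ we have $F_i\restr_{<m-1}\cap\{n\} = F_i\cap\{n\}$ and, for $n-1<m-1$, the membership $n-1\in M\restr_{<m-1}$ is equivalent to $n-1\in M$; the only place the two computations diverge is at level $m-1$ itself, where $\Comp_{m-1}$ does not add the contribution $\sum_{i\in I}|F_i\cap\{m-1\}|$ that $\Comp_m$ uses to form $f_{m-1}$ from $f'_{m-1}$. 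In particular $g'_{m-1} = f'_{m-1}$, and $f_{m-1} = f'_{m-1} + \sum_{i\in I}|F_i\cap\{m-1\}|$.

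From $g_n = f_n$ for $n\le m-2$ and the fact that $M\restr_{<m-1}$ agrees with $M$ below $m-1$, the set part is then immediate: $\Comp_{m-1}$ returns $\{n\in\{0,\dots,m-2\}\mid g_n>0\land n\notin M\restr_{<m-1}\} = F\restr_{<m-1}$. For the flag counter, $\Comp_{m-1}$ returns $c' = g'_{m-1} + \sum_{i\in I}c'_i = f'_{m-1} + \sum_{i\in I}c'_i$, and using the hypothesis $c'_i\geq c_i+|F_i\cap\{m-1\}|$ we obtain $c' \geq f'_{m-1} + \sum_{i\in I}c_i + \sum_{i\in I}|F_i\cap\{m-1\}| = f_{m-1} + \sum_{i\in I}c_i$.

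It remains to compare this lower bound with $c + |F\cap\{m-1\}|$, which I would do by a two-case split on whether $m-1\in M$. If $m-1\in M$, then $f'_m = f_{m-1}$, so $c = f_{m-1} + \sum_{i\in I}c_i$, while $m-1\notin F$, so $|F\cap\{m-1\}| = 0$; hence $c + |F\cap\{m-1\}| = f_{m-1} + \sum_{i\in I}c_i \leq c'$. If $m-1\notin M$, then $f'_m = 0$, so $c = \sum_{i\in I}c_i$, while $m-1\in F$ iff $f_{m-1}>0$, so $|F\cap\{m-1\}| \leq f_{m-1}$; hence $c + |F\cap\{m-1\}| \leq \sum_{i\in I}c_i + f_{m-1} \leq c'$. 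In both cases $c'\geq c + |F\cap\{m-1\}|$, as required (and the degenerate case $m=1$ is covered automatically, with the inductions becoming vacuous).

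This lemma is pure bookkeeping, so there is no genuine obstacle; the one point that needs care is precisely the boundary level $n=m-1$, which the two instances of $\Comp$ treat differently. The mass $\sum_{i\in I}|F_i\cap\{m-1\}|$ that feeds into $f_{m-1}$ in $\Comp_m$ is invisible to $\Comp_{m-1}$, since the restricted sets $F_i\restr_{<m-1}$ drop $m-1$; one has to check that this missing mass is exactly compensated by the slack $c'_i - c_i \geq |F_i\cap\{m-1\}|$ granted in the hypothesis, and — in the subcase where $m-1\notin F$ because $f_{m-1}=0$ — that no slack is actually needed.
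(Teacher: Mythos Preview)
Your proof is correct and follows essentially the same approach as the paper: track the auxiliary quantities $f_n,f'_n$ in both instantiations of $\Comp$, note they agree below level $m-1$, and handle the boundary level by a case split on whether $m-1\in M$. The paper compresses your case split into the single observation $f_{m-1}\geq f'_m+|F\cap\{m-1\}|$ and then runs a single chain of inequalities, but the underlying computation is identical.
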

	
	\begin{proof}
		The definition of the $\Comp$ predicate specifies variables $f_n$ and $f_n'$.
		Let $f_{n,m}$ and $f'_{n,m}$ be values of these variables in the above instantiation of the $\Comp_m$ predicate, while $f_{n,m-1}$ and $f'_{n,m-1}$---in the above instantiation of the $\Comp_{m-1}$ predicate.
		Notice that $f_{n,m-1}=f_{n,m}$ for $n<m-1$, and $f'_{n,m-1}=f'_{n,m}$ for $n\leq m-1$.
		In consequence the requirements given by $\Comp_{m-1}$ on the set $F$ are satisfied, since they are the same as the requirements given by $\Comp_m$.
		
		Next, let us observe that $f_{m-1,m}\geq f'_{m,m}+|F\cap\{m-1\}|$. 
		Indeed, if $m-1\in M$, we have $f'_{m,m}=f_{m-1,m}$ and $m-1\not\in F$.
		Conversely, if $m-1\not\in M$, we have $f'_{m,m}=0$, and if $f_{m-1,m}=0$ then also $m-1\not\in F$.
		
		Finally, because $f'_{m-1,m-1}=f'_{m-1,m}$, we have
		\begin{align*}
			c'&=f'_{m-1,m-1}+\sum_{i\in I}c_i'=f'_{m-1,m}+\sum_{i\in I}c_i'\geq f'_{m-1,m}+\sum_{i\in I}|F_i\cap\{m-1\}|+\sum_{i\in I}c_i\\
			&=f_{m-1,m}+\sum_{i\in I}c_i\geq f'_{m,m}+\sum_{i\in I}c_i+|F\cap\{m-1\}|=c+|F\cap\{m-1\}|\,.
			\tag*\qedhere
		\end{align*}
	\end{proof}
	
	We now generalize Lemma \ref{lem:s-zero} to arbitrary type judgments.

	\begin{lemma}\label{lem:s0-aux}
		Suppose that we can derive $\Gamma\vdash P:(m,F,M,\tau)\triangleright c\succ 0$, where $\ord(P)\leq m-1$, and for every variable $x$ and every $\hat\eta\in\Gamma(x)$ we have $\ord(\hat\eta)\leq m-1$.
		Then we can also derive $\Gamma\vdash P:(m-1,F\restr_{<m-1},M\restr_{<m-1},\tau)\triangleright c'$ with $c'\geq c+|F\cap\{m-1\}|$.
	\end{lemma}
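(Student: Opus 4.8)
The plan is to prove this by structural induction on the fixed derivation of $\Gamma\vdash P:(m,F,M,\tau)\triangleright c\succ 0$, replaying in each case the same rule with the order decremented from $m$ to $m-1$. Two preliminary observations drive everything. First, $\ord(P)\leq m-1$ forces $m\geq 1$, so Lemma~\ref{lem:s0-comp} is applicable. Second, the application counter being $0$ means no \AppR rule of order $m$ occurs, and since every \AppR rule in a derivation of a full type of order $m$ has an operator of order at most $m$, all \AppR rules in the derivation actually use operators of order at most $m-1$ — exactly what is needed for the reduced \AppR rules to be legal. The two standing hypotheses ($\ord(P)\leq m-1$; every $\hat\eta\in\Gamma(x)$ has $\ord(\hat\eta)\leq m-1$) propagate to the premisses of each rule: the premiss environments are always weakenings of $\Gamma$ via $\Split$; the subterms have order $0$ (under \ConR), $\leq\ord(P)$ (under \LamR), or $\leq m-1$ (under \AppR, by the observation above); and in \LamR the full types placed in $T$ all have order $\ord(\lambda x.\cdot)=\ord(P)\leq m-1$ by definition of $\Tt^{\alpha\arr\beta}$. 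Finally, since the conclusion counter is $0$, every premiss counter is $0$, so the induction hypothesis applies to all premisses.

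The case \BrR is immediate. For \VarR we have $c=0$; the full type $(k,F,M',\tau)$ that $\Gamma$ assigns to $x$ satisfies $k\leq m-1$, hence $F\subseteq\{0,\dots,k-1\}\subseteq\{0,\dots,m-2\}$, so $m-1\notin F$. Thus $F\restr_{<m-1}=F$ and $|F\cap\{m-1\}|=0$, and since $(M\restr_{<m-1})\restr_{<k}=M\restr_{<k}=M'$ the same \VarR rule derives $\Gamma\vdash x:(m-1,F,M\restr_{<m-1},\tau)\triangleright 0$, and $0\geq c+0$. For \LamR, I apply the induction hypothesis to the premiss and replay the rule; this works because each $\Mk(\hat\eta)$ with $\hat\eta\in T$ is a subset of $\{0,\dots,m-2\}$, so restriction to $<m-1$ commutes with subtracting $\bigcup_{\hat\eta\in T}\Mk(\hat\eta)$ from the marker set, and because \LamR modifies neither $F$ nor $c$, so the counter is carried unchanged and $c'\geq c+|F\cap\{m-1\}|$ follows from the premiss. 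In all cases one checks routinely that the reduced full types remain valid: $F\cap M=\emptyset$ and the bounds $F,M\subseteq\{0,\dots,m-2\}$ are clearly preserved by restriction.

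The \ConR and \AppR cases are the bookkeeping core and invoke Lemma~\ref{lem:s0-comp}. In both I apply the induction hypothesis to every premiss, obtaining order-$(m-1)$ judgments whose flag sets are the $<m-1$ restrictions of the old ones and whose counters $c_i'$ satisfy $c_i'\geq c_i+|F_i\cap\{m-1\}|$; disjoint unions of marker sets stay disjoint under restriction. For \AppR, since $\ord(P)\leq m-1$, restricting the $F_i,M_i$ first to $<m-1$ and then to $<\ord(P)$ gives the same as restricting to $<\ord(P)$, so the type $T\arr\tau$ required for the operator is literally unchanged, and since $m-1\geq\ord(P)$ we have $(F_i\restr_{\geq\ord(P)})\cap\{m-1\}=F_i\cap\{m-1\}$, so the counters delivered by the induction hypothesis meet the hypothesis of Lemma~\ref{lem:s0-comp}; that lemma then yields the required $\Comp_{m-1}$ equation with output counter $c'\geq c+|F\cap\{m-1\}|$, and replaying \AppR closes the case. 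For \ConR the only new point is the special pair $(F',c')$, which is $(\{0\},0)$ because $m\geq 1$; its $<m-1$-restriction paired with the counter $|\{0\}\cap\{m-1\}|$ is $(\{0\},0)$ when $m>1$ and $(\emptyset,1)$ when $m=1$ — precisely the pair \ConR prescribes at order $m-1$ — and it trivially satisfies $|\{0\}\cap\{m-1\}|\geq 0+|\{0\}\cap\{m-1\}|$. The main obstacle is not conceptual but this careful matching of $\Comp$ inputs and outputs across the order change, in particular the $m=1$ boundary of \ConR where the flag pattern $(\{0\},0)$ becomes $(\emptyset,1)$; with Lemma~\ref{lem:s0-comp} in hand the rest is verifying that restrictions commute with the set operations in the rules and that all side conditions survive unchanged.
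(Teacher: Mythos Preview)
Your proposal is correct and follows essentially the same approach as the paper: structural induction on the derivation, replaying each rule at order $m-1$ and invoking Lemma~\ref{lem:s0-comp} in the \ConR and \AppR cases. If anything, your treatment of the \AppR case is slightly more explicit than the paper's, in that you spell out why $(F_i\restr_{\geq\ord(\text{operator})})\cap\{m-1\}=F_i\cap\{m-1\}$ and why the restrictions commute; and your handling of the $(F',c')$ pair in \ConR at the $m=1$ boundary is exactly the check the paper abbreviates. One cosmetic point: in your \AppR discussion you reuse the letter $P$ for the operator subterm while $P$ already names the whole term in the lemma statement, which is momentarily confusing; renaming the operator would avoid this.
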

	
	\begin{proof}
		Denote $\hat\tau=(m,F,M,\tau)$ and $\hat\sigma=(m-1,F\restr_{<m-1},M\restr_{<m-1},\tau)$.
		The proof is by induction on the structure of some fixed derivation of $\Gamma\vdash P:\hat\tau\triangleright c\succ 0$.
		We have $m\geq 1$, since $\ord(P)\leq m-1$.
		We distinguish several cases depending on the shape of $R$.
		
		Suppose first that $P$ is a variable, $P=x$.
		Then the \VarR rule used in the derivation implies that $\Split(\Gamma\palka\Gammae[x\mapsto(k,F,M\restr_{<k},\tau)])$ holds, and $c=0$.
		By assumption of the lemma we have $k\leq m-1$, so $(M\restr_{<m-1})\restr_{<k}=M\restr_{<k}$ and $F\restr_{<m-1}=F$ (because $F\subseteq\{0,\dots,k-1\}$).
		In consequence, we can use the \VarR rule to derive $\Gamma\vdash P:\hat\sigma\triangleright 0$.
		
		Next, suppose that $P=\br\,P_1\,P_2$.
		Then the final \BrR rule has a premiss $\Gamma\vdash P_k:\hat\tau\triangleright c\succ 0$ for some $k\in\{1,2\}$.
		Surely $\ord(P_k)=0\leq m-1$.
		The induction assumption applied to this premiss gives us a derivation of $\Gamma\vdash P_k:\hat\sigma\triangleright c'$ with $c'\geq c+|F\cap\{m-1\}|$.
		We apply back the \BrR rule, obtaining $\Gamma\vdash P:\hat\sigma\triangleright c'$.
		
		Next, suppose that $P=\lambda x.Q$.
		Then the final \LamR rule has a premiss $\Gamma'[x\mapsto T]\vdash Q:(m,F,M',\tau')\triangleright c\succ 0$, where $\tau=T\arr\hat\tau'$, and $M=M'\setminus\Mk(T)$, and $\Split(\Gamma\palka\Gamma')$ holds.
		Clearly $\ord(Q)\leq\ord(P)\leq m-1$.
		We also have $T\subseteq\Ttrip^\alpha_{\ord(P)}$, where $\alpha$ is the sort of $x$.
		In consequence, for every variable $y$ and every $\hat\eta\in(\Gamma'[x\mapsto T])(y)$ we have $\ord(\hat\eta)\leq m-1$.
		Using the induction assumption for our premiss we obtain a derivation of $\Gamma'[x\mapsto T]\vdash Q:(m-1,F\restr_{<m-1},M'\restr_{<m-1},\tau')\triangleright c'$ with $c'\geq c+|F\cap\{m-1\}|$.
		Because $M'\restr_{<m-1}\setminus\Mk(T)=(M'\setminus\Mk(T))\restr_{<m-1}=M\restr_{<m-1}$, by applying back the \LamR rule we derive $\Gamma\vdash P:\hat\sigma\triangleright c'$.
		
		Next, suppose that $P=a\,P_1\,\dots\,P_r$ with $a\neq\br$.
		Then $\tau=o$, and the final \ConR rule has premisses $\Gamma_i\vdash P_i:(m,F_i,M_i,o)\triangleright c_i\succ 0$ for $i\in\{1,\dots,r\}$.
		For every $i\in\{1,\dots,r\}$ we have $\ord(P_i)=0\leq m-1$, and $\Gamma_i(x)\subseteq\Gamma(x)$ for every variable $x$,
		so we can use the induction assumption for the $i$-th premiss and obtain a derivation of $\Gamma_i\vdash P_i:(m-1,F_i\restr_{<m-1},M_i\restr_{<m-1},o)\triangleright c_i'$ with $c'_i\geq c_i+|F_i\cap\{m-1\}|$.
		If $r>0$, we have a side condition $M=\biguplus_{i=1}^r M_i$, which implies $M\restr_{<m-1}=\biguplus_{i=1}^r M_i\restr_{<m-1}$.
		Another side condition says that $\Comp_m(M\palkaC(\{0\},0),(F_i,c_i)_{i\in\{1,\dots,r\}})=(F,c)$,
		and we need to see that $\Comp_{m-1}(M\restr_{<m-1}\palkaC(F_0,c_0),(F_i\restr_{<m-1},c_i')_{i\in\{1,\dots,r\}})=(F\restr_{<m-1},c')$ for some $c'\geq c+|F\cap\{m-1\}|$,
		where $(F_0,c_0)=(\{0\},0)$ if $m-1>0$, and $(F_0,c_0)=(\emptyset,1)$ if $m-1=0$.
		This follows from Lemma \ref{lem:s0-comp}, where we notice that $\{0\}\restr_{<m-1}=F_0$, and $c_0\geq 0+|\{0\}\cap\{m-1\}|$.
		Thus we can apply back the \ConR rule deriving $\Gamma\vdash P:\hat\sigma\triangleright c'$.
		
		Finally, suppose that $P=Q\,R$.
		Then the final \AppR rule has premisses $\Gamma'\vdash Q:(m,F',M',T\arr\tau)\triangleright e\succ 0$ and $\Gamma_i\vdash R:(m,F_i,M_i,\tau_i)\triangleright d_i\succ 0$ for $i\in I$,
		where $T=\{(\ord(Q),F_i\restr_{<\ord(Q)},M_i\restr_{<\ord(Q)},\tau_i)\mid i\in I\}$.
		Because the application counter is $0$ in the conclusion, we have $\ord(Q)\neq m$, so actually $\ord(Q)\leq m-1$ by a side condition of the \AppR rule.
		Simultaneously $\ord(R)<\ord(Q)\leq m-1$, and the type environments $\Gamma'$ and $(\Gamma_i)_{i\in I}$ store only full types stored already in $\Gamma$.
		The induction assumption applied to all premisses gives us derivations of $\Gamma'\vdash Q:(m-1,F'\restr_{<m-1},M'\restr_{<m-1},T\arr\tau)\triangleright e'$ with $e'\geq e+|F'\cap\{m-1\}|$,
		and of $\Gamma_i\vdash R:(m-1,F_i\restr_{<m-1},M_i\restr_{<m-1},\tau_i)\triangleright d_i'$ with $d_i'\geq d_i+|F_i\cap\{m-1\}|$ for $i\in I$.
		The side condition $M=M'\uplus\biguplus_{i\in I} M_i$ implies $M\restr_{<m-1}=M'\restr_{<m-1}\uplus\biguplus_{i\in I} M_i\restr_{<m-1}$.
		Another side condition says that $\Comp_m(M\palkaC(F',e),(F_i,d_i)_{i\in I})=(F,c)$,
		which by Lemma \ref{lem:s0-comp} implies that $\Comp_{m-1}(M\restr_{<m-1}\palkaC(F'\restr_{<m-1},e'),(F_i\restr_{<m-1},d_i')_{i\in I})=(F\restr_{<m-1},c')$ for some $c'\geq c+|F\cap\{m-1\}|$.
		We also have that $T=\{(\ord(Q),(F_i\restr_{<m-1})\restr_{<\ord(Q)},(M_i\restr_{<m-1})\restr_{<\ord(Q)},\tau_i)\mid i\in I\}$, because $\ord(Q)\leq m-1$.
		Having all this, we can apply back the \AppR rule, and derive $\Gamma\vdash P:\hat\sigma\triangleright c'$.
	\end{proof}

	Lemma \ref{lem:s-zero} says that if we can derive $\Gammae\vdash P:(m,\emptyset,\{0,\dots,m-1\},o)\triangleright c\succ 0$ with $m>0$,
	then we can also derive $\Gammae\vdash P:(m-1,\emptyset,\{0,\dots,m-2\},o)\triangleright c'$ for some $c'\geq c$.
	Here $\ord(P)=0$, so this is just a special case of Lemma \ref{lem:s0-aux}.

\section{Remaining proofs for Section \ref{sec:sound}}

	In the final part of Section \ref{sec:sound} we have implicitly used the following lemma, which we now prove.
	
	\begin{lemma}
		If we can derive $\Gammae\vdash P:\hat\rho_0\triangleright c$,
		then there exists a tree $t\in\Ll(P)$ such that $|t|=c$.
	\end{lemma}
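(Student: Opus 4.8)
The plan is to prove a slightly stronger statement by structural induction on the given derivation: for every subderivation whose conclusion has the form $\Gammae\vdash P':\hat\rho_0\triangleright c'$ with $P'$ of sort $o$, there is a tree $t'\in\Ll(P')$ with $|t'|=c'$. Recall that $\hat\rho_0=(0,\emptyset,\emptyset,o)$, and that a full type $(0,F,M,\tau)$ necessarily has $F=M=\emptyset$, so every full type derived in such a subderivation is of order $0$ with empty flag and marker sets.

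First I would observe that only the rules \BrR and \ConR can occur in such a subderivation. The \VarR rule cannot produce a judgment with type environment $\Gammae$, since that would require $\Split(\Gammae\palka\Gammae[x\mapsto\{\hat\eta\}])$, hence $\{\hat\eta\}\subseteq\Gammae(x)=\emptyset$. The \LamR rule would force $P'$ to be a $\lambda$-abstraction, whose sort is an arrow sort and not $o$. The \AppR rule would force $P'=Q\,R$ with $\ord(Q)\le m=0$, impossible since $Q$ has an arrow sort; this is the one place where the hypothesis $m=0$ is essential. In both remaining rules the premisses again concern subterms of sort $o$, with type environment $\Gammae$ (for \ConR this uses the side condition $\Split(\Gammae\palka\Gamma_1,\dots,\Gamma_r)$, which forces each $\Gamma_j=\Gammae$) and full type $\hat\rho_0$, so the induction hypothesis applies.

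In the \BrR case, $P'=\br\,P_1\,P_2$ and the premiss is $\Gammae\vdash P_i:\hat\rho_0\triangleright c'$ for some $i\in\{1,2\}$; the induction hypothesis gives $t'\in\Ll(P_i)$ with $|t'|=c'$, and since $\BT(P')$ has root $\br$ with subtrees $\BT(P_1),\BT(P_2)$ we get $\BT(P')\to_\br\BT(P_i)\to_\br^*t'$, so $t'\in\Ll(P')$. In the \ConR case, $P'=a\,P_1\,\dots\,P_r$ with $a\neq\br$ and premisses $\Gammae\vdash P_j:\hat\rho_0\triangleright c_j$; the induction hypothesis yields $t_j\in\Ll(P_j)$ with $|t_j|=c_j$, and I take $t'$ to be the tree with root labelled $a$ and subtrees $t_1,\dots,t_r$. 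Since $P'$ is already of the form ``symbol applied to arguments'', $\BT(P')$ has root $a$ with subtrees $\BT(P_j)$, so $\BT(P_j)\to_\br^*t_j$ gives $\BT(P')\to_\br^*t'$; the tree $t'$ is finite and $\br$-free (all its nodes arise from \ConR applications with head symbol $\neq\br$), hence $t'\in\Ll(P')$. For the count, note that with $m=0$ the predicate $\Comp_0$ performs its flag-creating loop over the empty range $\{1,\dots,0\}$ and just returns the sum of the flag counters it is given, namely $c'=1+c_1+\dots+c_r$, where the $1$ comes from the prescribed pair $(F',c')=(\emptyset,1)$. Therefore $|t'|=1+|t_1|+\dots+|t_r|=1+c_1+\dots+c_r=c'$; the base case $r=0$ gives the single-node tree $a$ with $|t'|=1=c'$.

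Applying the claim to the whole derivation yields the required $t\in\Ll(P)$ with $|t|=c$. There is no serious obstacle; the only points needing care are the exclusion of \VarR, \LamR, \AppR (which is exactly where $m=0$ is used, and which makes the derivation literally a finite $\br$/constant tree), and reading off from the definition of $\Comp_0$ that each \ConR node contributes precisely $1$ to the flag counter.
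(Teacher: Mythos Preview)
Your proof is correct and follows essentially the same approach as the paper's own proof: structural induction on the derivation, ruling out \VarR, \LamR, and \AppR (the latter by the order constraint $\ord(Q)\le 0$), then handling \BrR and \ConR exactly as you do, with the $\Comp_0$ computation yielding $c'=1+c_1+\dots+c_r$. The only cosmetic difference is that the paper phrases the exclusion of \VarR as ``the type environment is empty, so $P$ cannot be a variable'' and notes that $\hat\rho_0$ is the unique element of $\Ttrip^o_0$, whereas you unfold the $\Split$ condition and the definition of $\Ttrip_0$ explicitly.
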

	
	\begin{proof}
		Recall that $\hat\rho_0=(0,\emptyset,\emptyset,o)$.
		The proof is by induction on the structure of some fixed derivation of $\Gammae\vdash P:\hat\rho_0\triangleright c$.
		Let us analyze the shape of $P$.
		Because the type environment is empty, $P$ cannot be a variable.
		The sort of $\hat\rho_0$, and hence of $P$, is $o$, and thus $P$ cannot start with a $\lambda$-abstraction.
		Moreover, $P$ cannot be an application $Q\,R$, because the \AppR rule requires that $\ord(Q)\leq\ord(\hat\rho_0)=0$.
		Thus $P$ starts with a symbol.
		We have two cases.
		
		Suppose first that $P=a\,P_1\,\dots\,P_r$ with $a\neq\br$.
		We notice that $\hat\rho_0$ is the only full type in $\Ttrip^o_0$, and that $\Split(\Gammae\palka\Gamma_1,\dots,\Gamma_r)$ implies $\Gamma_1=\dots=\Gamma_r=\Gammae$.
		Thus the premisses of the final \ConR rule are $\Gammae\vdash P_i:\hat\rho_0\triangleright c_i$, for $i\in\{1,\dots,r\}$.
		Because $\Comp_0(\emptyset\palkaC(\emptyset,1),(\emptyset,c_1),\dots,(\emptyset,c_r))=(\emptyset,c)$, we have $c=1+c_1+\dots+c_r$.
		The induction assumption gives us, for $i\in\{1,\dots,r\}$, trees $t_i$ such that $|t_i|=c_i$ and $t_i\in\Ll(P_i)$, which means that $\BT(P_i)\to_\br^*t_i$.
		As $t$ we take the tree having $a$ in its root, and $t_1,\dots,t_r$ as subtrees starting in the root's children.
		Because $\BT(P)$ has $a$ in its root, and $\BT(P_1),\dots,\BT(P_r)$ as subtrees starting in the root's children, it should be clear that $\BT(P)\to_\br^*t$.
		Moreover, $|t|=1+|t_1|+\dots+|t_r|=1+c_1+\dots+c_r=c$.
		
		Another possibility is that $P=\br\,P_1\,P_2$.
		Then the final \BrR rule has one premiss $\Gammae\vdash P_i:\hat\rho_0\triangleright c$ for some $i\in\{1,2\}$.
		The induction assumption gives us a tree $t$ such that $|t|=c$ and $\BT(P_i)\to_\br^*t$.
		Recalling that $\BT(P)$ has $\br$ in its root, and $\BT(P_i)$ as its subtree starting in the $i$-th child of the root, we see that $\BT(P)\to_\br\BT(P_i)$, and thus $t\in\Ll(P)$.
	\end{proof}

\section{Proofs for Section \ref{sec:effective}}

	Let us recall two definitions from page \pageref{page:effective-def}.
	We say that two type judgments are equivalent if they differ only in the value of the flag counter.
	Given a $\lambda Y$-term $P$ and a number $m$, 
	we have also defined a set $\Dd$ of all derivations of $\Gammae\vdash P:\hat\rho_m\triangleright c$ in which on each branch there are at most three type judgments from every equivalence class,
	and among premisses of each \AppR rule there is at most one type judgment from every equivalence class.
	
	We complete the proof of effectiveness contained in Section \ref{sec:effective} by a formal proof of the following lemma.

	\begin{lemma}
		Suppose that for some $\lambda Y$-term $P$ we can derive $\Gammae\vdash P:\hat\rho_m\triangleright c$ for arbitrarily large numbers $c\in\Nat$.
		Then in the set $\Dd$ there is a derivation in which on some branch there are two equivalent type judgments with different values of the flag counter.
	\end{lemma}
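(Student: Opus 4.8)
The plan is to follow the outline sketched in Section~\ref{sec:effective}. We start from an arbitrary derivation $D$ of $\Gammae\vdash P:\hat\rho_m\triangleright c_0$, where $c_0$ is chosen large enough (how large will become clear at the end), and transform it in three stages. The first stage \emph{narrows} $D$: as long as some \AppR rule has two distinct premisses $i\ne j$ deriving types for the same argument subterm $Q$ with equivalent type judgments, we delete one of them. Such premisses carry the same full type $(m,F_i,M_i,\tau_i)$, hence the same set $M_i=M_j$ of marker orders; since a side condition of the \AppR rule forces $M_i\cap M_j=\emptyset$, we get $M_i=\emptyset$. As $\ord(Q)<\ord(P)\le m$ and $m-1\notin M_i$, Lemma~\ref{lem:zero-when-no-marker} gives $c_i=c_j=0$, so the two premisses are literally identical; moreover $F_i\restr_{\ge\ord(P)}=\emptyset$, because a flag of order $\ge\ord(P)$ could only be placed with a marker of order $\ge\ord(P)-1$ visible, which never happens in a derivation for $Q$ with $M_i=\emptyset$. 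Deleting one premiss leaves the set $T$ derived for the operator unchanged, keeps every side condition, and passes the same arguments to every $\Comp_m$ predicate, so no flag counter changes; and the deleted subtree is a copy of the retained one, so nothing of interest is lost. Iterating, we obtain a derivation $D'$ of $\Gammae\vdash P:\hat\rho_m\triangleright c_0$ in which among the premisses of each \AppR rule there is at most one type judgment per equivalence class.

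The second stage bounds the shape of the derivation and forces out a ``witness pair''. Every type judgment occurring in $D'$ has the form $\Gamma\vdash R:\hat\tau\triangleright d$ with $R$ a subterm of $P$, $\hat\tau$ of the fixed order $m$, and $\Gamma$ supported on variables of $P$; since $P$ has finitely many subterms and only finitely many full types and type environments are relevant, there are finitely many equivalence classes, say $N$, and by the previous paragraph the number of premisses of each rule is bounded by some $\Delta$ depending only on $P$. Now, whenever some branch carries two \emph{identical} type judgments, we replace the sub-derivation rooted at the upper one by the sub-derivation rooted at the lower one; this changes no flag counter, keeps the derivation narrow, and strictly decreases its size. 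When this stabilises we have a narrow derivation $D''$ of $\Gammae\vdash P:\hat\rho_m\triangleright c_0$ in which no branch carries two identical judgments. If $D''$ had no branch carrying two \emph{equivalent} judgments with distinct flag counters, then each equivalence class would occur at most once on each branch, so every branch would have length at most $N$, whence $D''$ would have at most $\Delta^N$ nodes; since each rule raises the flag counter by at most a constant $O(m\Delta)$ above the sum of the counters of its premisses, $c_0$ would then be bounded by a constant depending only on $P$, contradicting the choice of $c_0$. Hence $D''$ contains a branch with two equivalent type judgments of distinct flag counters.

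The third stage cleans the derivation into a member of $\Dd$ while keeping such a pair. The basic move is: if some branch carries at least four type judgments $v_1\prec\dots\prec v_k$ ($k\ge4$) from one equivalence class, splice out the lowest chunk, i.e.\ replace the sub-derivation rooted at $v_{k-1}$ by the one rooted at $v_k$; this strictly decreases the size, never increases the per-branch count of any equivalence class, and shifts the flag counters of $v_1,\dots,v_{k-2}$ (and of all nodes above them) by a single common amount, so any inequality of flag counters holding between $v_1$ and $v_2$ is preserved. Re-applying the narrowing step where needed and iterating, one reaches a narrow derivation with at most three judgments per equivalence class on each branch, i.e.\ an element of $\Dd$. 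The main obstacle is the bookkeeping here: one must choose which chunks to splice, and in which order, so that at every step at least one witness pair survives on some branch; the threshold ``three'' in the definition of $\Dd$ provides precisely the slack needed to park a witness pair out of the way while pumping down the surplus occurrences. Since the second stage guarantees a witness pair to begin with, and no cleaning step manufactures flag counters from nothing, the final derivation in $\Dd$ still exhibits two equivalent type judgments with distinct flag counters on some branch, which is exactly the claim; the first two stages are routine given Lemma~\ref{lem:zero-when-no-marker} and the finiteness of the relevant type judgments.
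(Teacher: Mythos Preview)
Your first two stages are fine and close to the paper's argument (your Stage~2 replaces the paper's ``level'' bookkeeping by a simpler pumping of identical judgments followed by a counting bound, which is perfectly valid). The real gap is Stage~3. You correctly identify the obstacle---splicing out a chunk can destroy the witness pair found in Stage~2---and then simply assert that ``the threshold `three' provides precisely the slack needed'' without giving an algorithm or an argument. Your proposed rule ``always splice the lowest chunk'' preserves an inequality between $v_1$ and $v_2$ \emph{of that equivalence class on that branch}, but the witness pair produced by Stage~2 need not be $v_1,v_2$; it may lie on a different branch, or in a different class, or be one of $v_{k-1},v_k$ themselves, and then it is removed. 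Nothing you wrote rules this out.

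The paper closes this gap by a minimality argument rather than by iterative splicing. Among all \emph{narrow} derivations that possess a witness pair $(u,v)$ (two equivalent judgments on one branch with distinct counters), take one of minimal size; one shows it already lies in~$\Dd$. If not, some branch carries four equivalent nodes $x_1\prec x_2\prec x_3\prec x_4$ with counters $c_1,\dots,c_4$. Two cases. If $c_i\neq c_{i+1}$ for some $i\in\{1,2,3\}$, then $(x_i,x_{i+1})$ is itself a witness pair; cutting out the chunk between $x_j$ and $x_{j+1}$ for some $j\neq i$ yields a strictly smaller narrow derivation still containing this witness---contradiction. If $c_1=c_2=c_3=c_4$, then cutting any of the three chunks changes no flag counter anywhere; since $u$ and $v$ are two nodes and there are three disjoint chunks, at least one chunk avoids both, and cutting it gives a strictly smaller derivation still containing $(u,v)$---again a contradiction. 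This case split is exactly what is missing from your Stage~3, and it is where the constant ``three'' (hence four nodes, hence three chunks) is actually used.
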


	\begin{proof}
		In this lemma it is convenient to see derivations as trees: type judgments of a derivation constitute nodes of a tree; premisses of a type judgment are located in its children.
		We consider $P$ and $m$ to be fixed.
		A derivation is called \emph{narrow} if among premisses of each its \AppR rule there is at most one type judgment from every equivalence class.
		We have already justified on page \pageref{page:effective-narrow} that if a type judgment $\Gammae\vdash P:\hat\rho_m\triangleright c$ can be derived, 
		then it has a narrow derivation.
		Moreover, we have justified that there are only finitely many equivalence classes of type judgments that can be used in any derivation of $\Gammae\vdash P:\hat\rho_m\triangleright c$, for any $c$;
		let $E$ be their number.
		This gives a bound on the number of premisses of \AppR rules in narrow derivations.
		Rules \VarR, \BrR, and \LamR always have at most one premiss.
		The number of premisses of a \ConR rule is specified by the rank of the symbol involved;
		when this rule is used in a derivation of $\Gammae\vdash P:\hat\rho_m\triangleright c$, this symbol has to appear in $P$, which gives a bound on its rank.
		All this gives us a bound $D$ on the degree of nodes appearing in the considered narrow derivations of $\Gammae\vdash P:\hat\rho_m\triangleright c$ for arbitrarily large $c$.
		
		Looking at the definition of the $\Comp_m$ predicate it is easy to see that if $\Comp_m(M\palkaC(F_i,c_i)_{i\in I})=(F,c)$, then $c\leq|I|\cdot m+\sum_{i\in I}c_i$.
		In consequence, the flag counter in a conclusion of a \AppR rule or a \ConR rule, having at most $D$ premisses, can be higher than the sum of flag counters in the premisses at most by $(D+1)\cdot m+1$
		(these ``$+1$'' appear here, because in the \ConR rule, beside of pairs $(F_i,c_i)$ coming from premisses, we pass to the $\Comp_m$ predicate an additional pair $(F',c')$ with $c'\leq 1$).
		The conclusion of every \BrR and \LamR rule has the same flag counter as the only premiss, and the conclusion of the \VarR rule always has $0$ in its flag counter.
		This means that there is constant $C$ such that in any node of any narrow derivation of $\Gammae\vdash P:\hat\rho_m\triangleright c$
		the flag counter can be higher than the sum of flag counters in its premisses at most by $C$ (and simultaneously it cannot be smaller than this sum).
		
		We define a \emph{level} of a node in a derivation, by induction on the depth of the node.
		Leaves and all nodes with flag counter $0$ have level $0$.
		If an internal node has its flag counter positive and equal to the flag counter in some child of this node, then the level of this node is equal to the level of this child
		(notice that there is at most one such child, and the flag counter in all other children is $0$).
		The level of every other internal node is defined as one plus the maximum of levels of its children.
		
		Next, we prove that for every level $i$ there is a bound $C_i$ on the value of the flag counter among nodes of this level.
		Indeed, the value of the flag counter in leaves, and thus in all nodes of level $0$, is bounded by $C_0=C$.
		Take now a node of level $i>0$ having only children of levels smaller than $i$.
		In each of these (at most $D$) children the flag counter is at most $C_{i-1}$, so so the flag counter in our node is at most $C_i=C+D\cdot C_{i-1}$.
		If a node of level $i$ has a child of level $i$, then their flag counter is equal, thus is also at most $C_i$ (trivial induction on the depth of the node).

		Let us now take a narrow derivation of $\Gammae\vdash P:\hat\rho_m\triangleright c$ for some $c>C_{E-1}$.
		It necessarily contains a node of level greater than $E-1$.
		Moreover, every node of some level $i>0$ has a child of level at least $\geq i-1$.
		Thus in the considered derivation there exists a branch having nodes of at least $E+1$ different levels.
		Among them we can find two nodes with equivalent type judgments and being on different levels (as the type judgments come from at most $E$ equivalence classes).
		If the two nodes had the same flag counters, then also all nodes on the path between them would have the same flag counter, and thus the two nodes would have the same level, which is not the case.
		Thus we have found two nodes having equivalent type judgments, different values of the flag counter, and such that one of them is a descendant of the other.
		
		Let $x$ and $y$ be nodes of a derivation of $\Gammae\vdash P:\hat\rho_m\triangleright c$, where $y$ is a descendant of $x$, and they contain equivalent type judgments.
		In such a situation, we can \emph{cut out} the fragment between nodes $x$ and $y$, in the following sense:
		we decrease by $c_x-c_y$ the flag counter in every ancestor of $x$, we remove $x$ and all its descendants not being in the subtree starting in $y$, and we attach $y$ in the place of $x$.
		This results in some (correct) narrow derivation of $\Gammae\vdash P:\hat\rho_m\triangleright c'$ for some $c'$.

		Take now the smallest (in the sense of the number of nodes) narrow derivation of a type judgment $\Gammae\vdash P:\hat\rho_m\triangleright c$, for any $c$,
		in which there are two nodes $u$, $v$ having equivalent type judgments, different values of the flag counter, and such that $v$ is a descendant of $u$.
		If this derivation is in $\Dd$, we are done.
		If not, we can find four nodes $x_1,x_2,x_3,x_4$ with equivalent type judgments, such that $x_{i+1}$ is a descendant of $x_i$ for $i\in\{1,2,3\}$.
		Let $c_i$ be the value of the flag counter in $x_i$, for $i\in\{1,2,3,4\}$.
		We have two cases.
		Suppose first that $c_i>c_{i+1}$ for some $i\in\{1,2,3\}$.
		We then take some $j\in\{1,2,3\}\setminus\{i\}$, and we cut off the fragment between $x_j$ and $x_{j+1}$;
		we obtain a smaller narrow derivation of a type judgment $\Gammae\vdash P:\hat\rho_m\triangleright c'$, for some $c'$, 
		in which there are two nodes, namely $x_i$ and $x_{i+1}$, having equivalent type judgments, different values of the flag counter, and such that one of them is a descendant of the other.
		This contradicts minimality of our derivation.
		Next, suppose that $c_1=c_2=c_3=c_4$.
		Then we recall that we already have two nodes $u$, $v$ having equivalent type judgments, different values of the flag counter, and such that $v$ is a descendant of $u$.
		For $i\in\{1,2,3\}$ consider the set $V_i$ containing $x_i$ and all its descendants not being in the subtree starting in $x_{i+1}$.
		These sets are disjoint, so for some $j\in\{1,2,3\}$ we have $u\not\in V_j$ and $v\not\in V_j$.
		We cut off the fragment between $x_j$ and $x_{j+1}$.
		This removes exactly the nodes from $V_j$, so the nodes $u$ and $v$ are still present in the derivation.
		Moreover, equality $c_j=c_{j+1}$ implies that we have not changed flag counters in any node, in particular in $u$ and $v$, so $u$ and $v$ in the new derivation again have different values of the flag counter.
		Thus also in this case we have obtained a contradiction with minimality of our derivation.
	\end{proof}

\end{document}